\titlespacing\section{0pt}{0pt plus 0pt minus 1pt}{0pt plus 0pt minus 1pt}
\titlespacing\subsection{0pt}{0pt plus 0pt minus 1pt}{0pt plus 0pt minus 1pt}
\titlespacing\subsubsection{0pt}{0pt plus 0pt minus 1pt}{0pt plus 0pt minus 1pt}
\newcommand{\mcl}{\mathcal}
\newcommand{\bz}{\mathbf{z}}
\newcommand{\mbb}{\mathbb}
\newcommand{\lp}{\left(}
\newcommand{\rp}{\right)}
\newtheorem{assumption}{Assumption}
\def\eps{{\epsilon}}
\def\vv{{\bm{v}}}
\def\vw{{\bm{w}}}
\def\vx{{\bm{x}}}
\def\mA{{\bm{A}}}
\def\mD{{\bm{D}}}
\def\mF{{\bm{F}}}
\def\mG{{\bm{G}}}
\def\mI{{\bm{I}}}
\def\mL{{\bm{L}}}
\def\mM{{\bm{M}}}
\def\mP{{\bm{P}}}
\def\mX{{\bm{X}}}
\def\mY{{\bm{Y}}}
\def \RR {{\mathbb{R}}}
\def \EE {{\mathbb{E}}}
\def \bX {{\bm X}}
\title{Efficient and Reliable Overlay Networks 
for Decentralized Federated Learning\thanks{Submitted to the editors DATE.}}
\author{Yifan Hua\thanks{Department of Computer Science and Engineering, University of California, Santa Cruz
  (\email{yhua294@ucsc.edu}) (Co-first).}
\and Kevin Miller\thanks{Department of Mathematics, University of California, Los Angeles
  (\email{millerk22@math.ucla.edu}) (Co-first).}
\and Andrea L. Bertozzi\thanks{Department of Mathematics, University of California, Los Angeles
  (\email{bertozzi@math.ucla.edu}) (Co-last).}
 \and Chen Qian\thanks{Department of Computer Science and Engineering, University of California, Santa Cruz
  (\email{cqian12@ucsc.edu}) (Co-last).}
\and Bao Wang\thanks{Department of Mathematics, Scientific Computing and Imaging Institute, University of Utah
  (\email{wangbaonj@gmail.com}) (Co-last).}
  }
\begin{document}

\maketitle

\begin{abstract}
We propose near-optimal overlay networks based on $d$-regular expander graphs to accelerate decentralized federated learning (DFL) and improve its generalization. In DFL a massive number of clients are connected by an overlay network, and they solve machine learning problems collaboratively without sharing raw data. Our overlay network design integrates spectral graph theory and the theoretical convergence and generalization bounds for DFL. As such, our proposed overlay networks accelerate convergence, improve generalization, and enhance robustness to clients failures in DFL with theoretical guarantees. Also, we present an efficient algorithm to convert a given graph to a practical overlay network and maintaining the network topology after potential client failures. We numerically verify the advantages of DFL with our proposed networks on various benchmark tasks, ranging from image classification to language modeling using hundreds of clients.
\end{abstract}

\begin{keywords}
Decentralized federated learning; Overlay networks; Random graphs.
\end{keywords}

\begin{AMS}
 65B99, 68T01, 68T09, 68W15.
\end{AMS}

\section{Introduction}~
{\em Federated Learning} (FL) is a machine learning (ML) setting where a massive number of entities (clients) solve an ML problem collaboratively without transferring raw data, under the coordination of a central server \cite{pmlr-v54-mcmahan17a,kairouz2019advances}. FL trains ML models by exchanging the model parameters between clients and the central server; in each communication round, the central server distributes parameters to clients and aggregates the updated parameters from clients. FL decouples the model training from the need for collecting or direct access to the private training data; therefore, FL significantly reduces privacy and security risks. Many algorithms have been developed for 
FL, such as FedAvg~\cite{pmlr-v54-mcmahan17a}, SCAFFOLD~\cite{karimireddy2020scaffold}, FedProx~\cite{LiSZSTS20}, FedPD~\cite{zhang2020fedpd}, FedSplit \cite{pathak2020fedsplit}, and FedOpt~\cite{reddi2020adaptive}. Compared to many distributed optimization settings \cite{nedic2009distributed,mcdonald2010distributed,balcan2012distributed,zhang2014deep,povey2014parallel,fercoq2014fast,shamir2014distributed}, FL gains tremendous advantages in communication efficiency. We can mathematically formulate FL as solving the following optimization problem
\begin{equation}\label{eq:FL:Opt}
{\small \min_{{\vw}\in \RR^d} f({\vw}):=\frac{1}{N}\sum_{i=1}^Nf_i({\vw}),\
}
\end{equation}
where $f_i({\vw})=\EE_{({\vx}, y)\sim \mathcal{D}_i}\mathcal{L}(g({\vx}, {\vw}), y)$ with 
$({\vx},y)$ be a data-label pair sampled from the data distribution $\mathcal{D}_i$ on 
the $i^{th}$ client, and $g(\cdot,\vw)$ is the ML model. 
As shown in Fig.~\ref{fig:graphs} (a), in the $i^{th}$ communication round, FedAvg \cite{pmlr-v54-mcmahan17a}, one of the most popular FL algorithms, iterates as follows: the server (node 1) sends the current parameters ${\vw}_i$ to a small fraction of selected clients $\{k_j|k_j\in \{1,2,\cdots,N\}, \mbox{for}\ j=1,2,\cdots,m\}$. Each selected client then updates ${\vw}_i$ for $T$ iterations by using its local data and stochastic gradient-based algorithms.
The 
server then 
aggregates these locally updated parameters to get the 
updated model 
after the current communication round.
The existence of central server raises several 
concerns about FL: 1) the communication cost between the 
server and clients can be excessive since a large number of clients are 
involved in a practical FL system, 2) the failure of the server would disrupt the training process of all clients, and 3) the privacy of the whole FL system can be fragile since the central server is exposed to adversaries.

\begin{figure}[!ht]
\centering
\begin{tabular}{cccc}
\includegraphics[clip, trim=3.0cm 6.0cm 3.0cm 7.0cm, width=0.2\columnwidth]{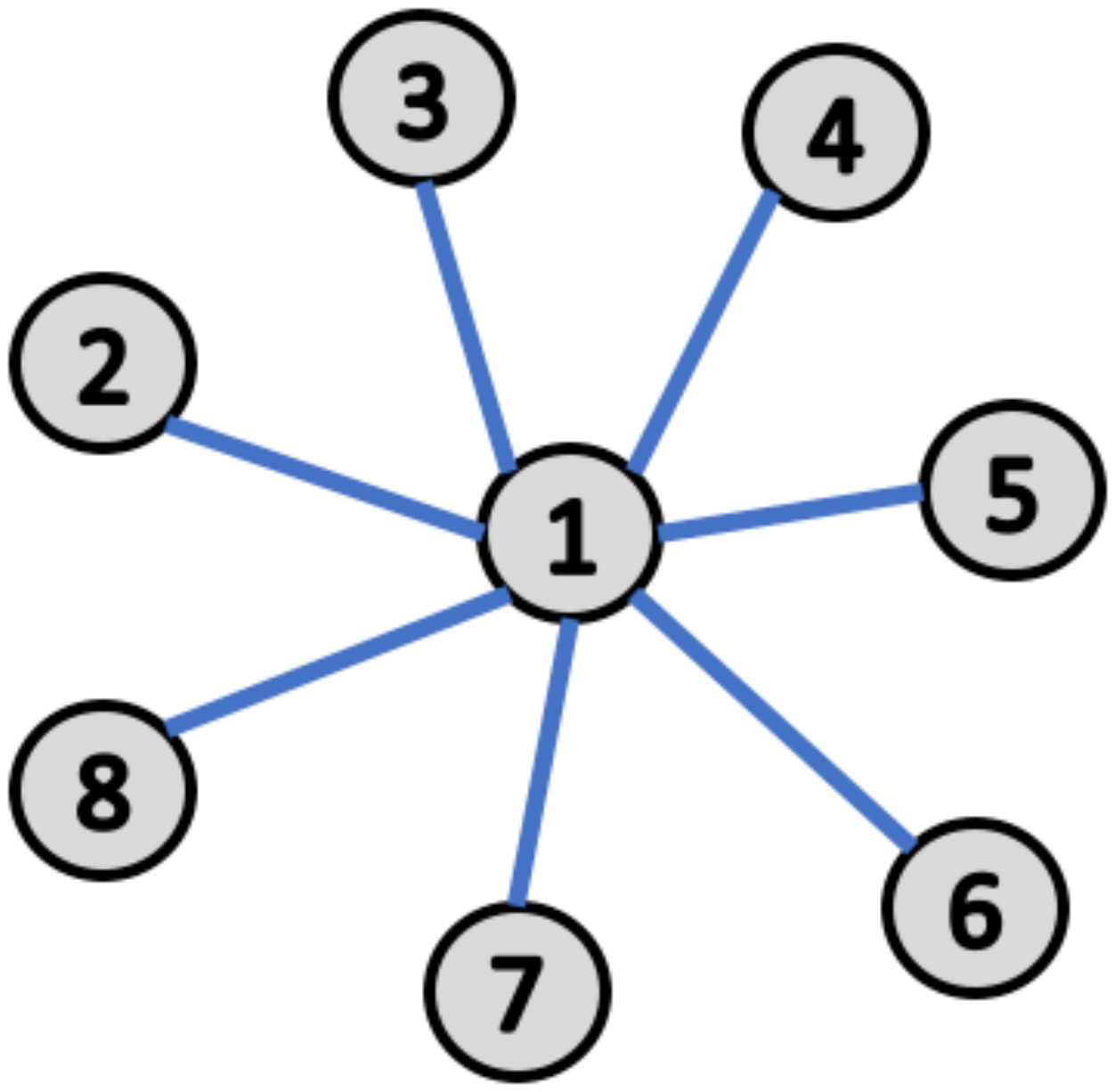}&
\includegraphics[clip, trim=3.0cm 6.0cm 3.0cm 7.0cm, width=0.2\columnwidth]{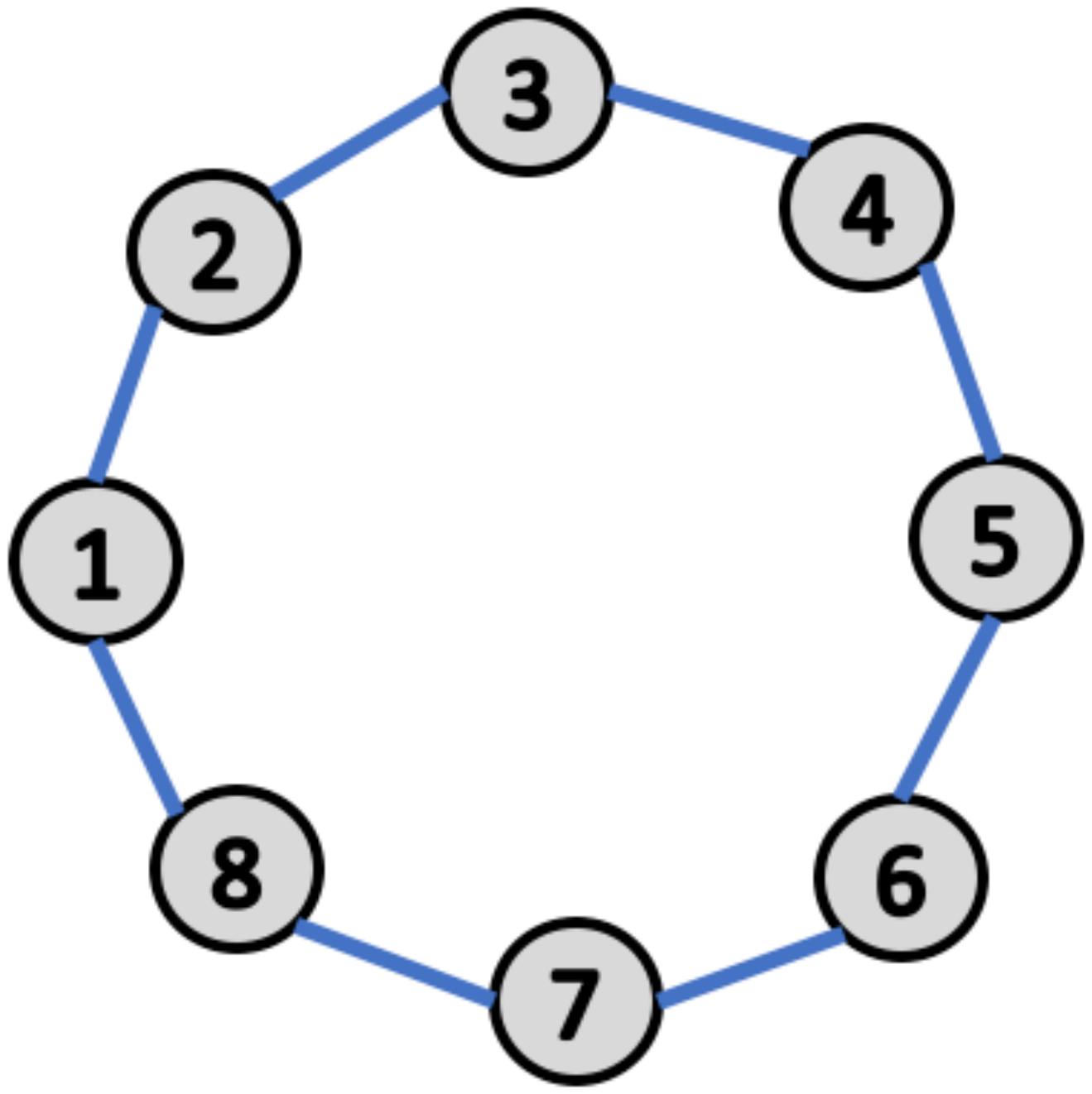}&
\includegraphics[clip, trim=3.0cm 6.0cm 3.0cm 7.0cm, width=0.2\columnwidth]{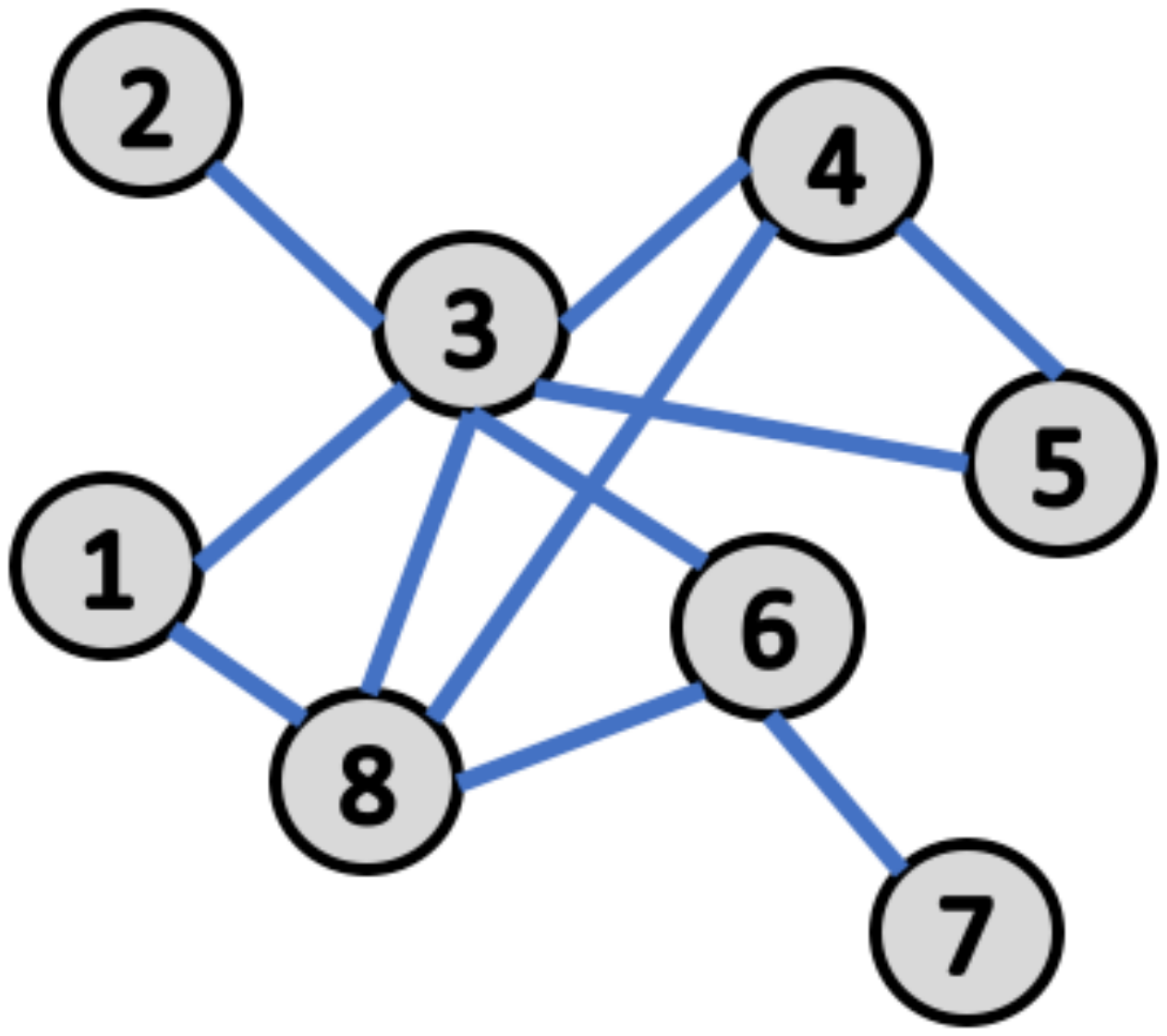}&
\includegraphics[clip, trim=3.0cm 6.0cm 3.0cm 7.0cm, width=0.2\columnwidth]{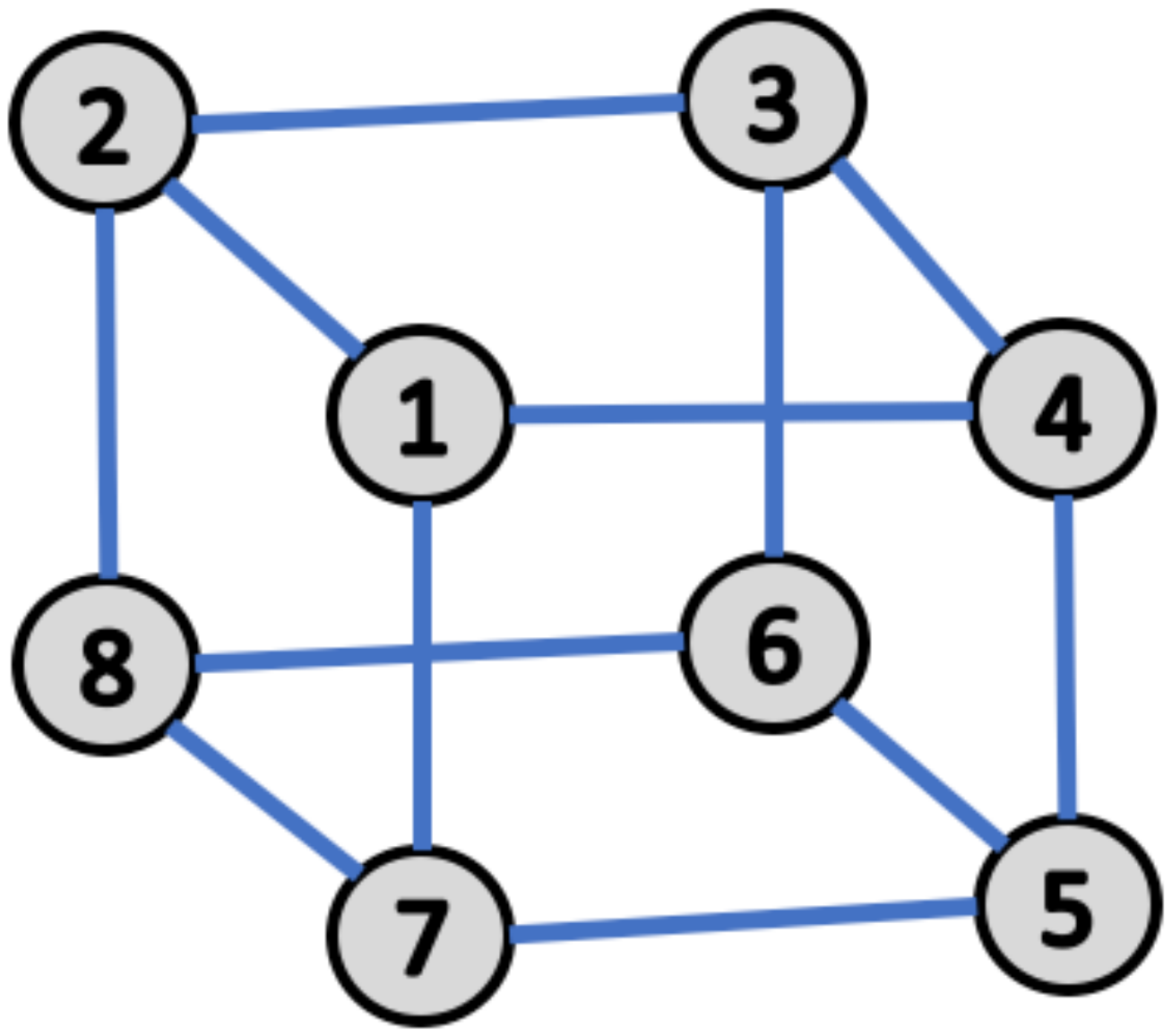}\\[-8pt]
  {\small (a) FL} 
& {\small (b) Ring} 
& {\small (c) Erd\"os-R\'enyi} 
& {\small (d) Expander} 
\\
\end{tabular}
\vspace{-0.3cm}
\caption{
Illustration of the network topology for federated learning and decentralized federated learning with Ring, Erd\"os-R\'enyi, and expander graphs.
}
\label{fig:graphs}
\end{figure}

\emph{Decentralized Federated Learning} (DFL) replaces the server-clients communication with client-client (peer-to-peer) communication, which significantly reduces the communication burden and privacy risks \cite{bonawitz2016practical,geyer2017differentially,orekondy2018gradient,truex2019hybrid,triastcyn2019federated,xu2019hybridalpha,liu2019enhancing,melis2019exploiting,liang:2020LSFed,choudhury2020anonymizing,liu2020fedsel,truex2020ldp,wei2020federated,liu2020privacy,abramson2020distributed,sun2021decentralized}. In DFL, all clients are connected by an overlay network, e.g. Fig.~\ref{fig:graphs} (b) Ring, (c) Erd\"os-R\'enyi, and (d) $d$-regular expander graphs. 
The clients update in the same way as that in FL, and each client only sends its locally updated model to its topological neighbors and aggregates the updated models from its neighbors. 
Network topology has a profound impact on the convergence, generalization, and robustness of DFL. 
In this paper, \emph{
we focus on designing efficient network topologies that guarantee fast and accurate DFL, and resilient 
to client failures.}

\subsection{Our contribution}~
Based on the theoretical convergence rate \cite{yujin2019,sun2021stab,sun2021decentralized} and our first established generalization bound of DFL, where each client trains ML models using stochastic gradient descent with momentum, we design near optimal network topology to connect clients to train ML models collectively. In particular, leveraging the random graph theory, we propose $d$-regular expander graphs for the network topology, which is provably to be near-optimal. The major advantages of leveraging $d$-regular expander graphs for the overlay networks design are threefolds:

\begin{itemize}[leftmargin=*]
\item DFL with $d$-regular expander graphs converges remarkably faster and generalizes better than DFL using other sparse graphs, including Ring and Erd\"o-R\'enyi graphs. 


\item Expander graph connects each node with $d$ neighbors, resulting in low communication cost in the decentralized federated learning.


\item DFL with $d$-regular expander graphs enables robust decentralized federated learning 
with respect to potential node failures.

\end{itemize}

\subsection{Additional related works}~
\paragraph{Network Design.} Chow et al. \cite{chow2016expander} have designed expander graphs for decentralized optimization using deterministic local optimization algorithms. In \cite{marfoq2020throughput}, the authors use theory of the max-plus linear systems and design efficient topology for cross-silo FL, in which close-by data silos can exchange information faster with the central server. We focus on designing efficient networks for cross-device DFL that are scalable to a massive amount of devices. 

\paragraph{Analysis of DFL/FL algorithms.} 
The convergence properties of FedAvg or local SGD have been studied extensively \cite{wang2018cooperative, stich_error-feedback_2020, jiang_agrawal_2018}, mainly focusing on the IID case. Non-IID convergence for FL has been shown in \cite{jiang_agrawal_2018,zhang2020fedpd,yujin2019,li_convergence_2020}. Convergence analysis of DFL has been shown in \cite{wang2019matcha,sun2021decentralized}. While convergence analysis for the myriad of problem setups has been provided, generalization guarantees have been more elusive.

Convergence analysis of DFL hinges on connectedness properties of the underlying graph topology, captured in the spectral properties of the associated mixing matrix (see Section~\ref{sec:theory}).  The authors of \cite{wang2018cooperative} discuss how different versions of local SGD correspond to different graph topologies and \cite{wang2019matcha} provides an efficient decomposition of graph topology for improved communication costs.

\paragraph{Practical network construction.} 
Building overlay networks have been studied extensively in previous works. However, in the past, overlay networks are mainly used for peer-to-peer file sharing \cite{P2Psurvey}, online social networks \cite{OSNsurvey}, and routing infrastructures \cite{MDT,ROME-ToN}.
For peer-to-peer file-sharing networks, existing studies have proposed to utilize random walks to achieve distributed $d$-regular expander graphs with assuming each node could choose $d$ neighbor at random \cite{feder2006FOCS,Law2003INFOCOM}. However, such assumption does not hold in DFL because no node can uniformly choose $d$ neighbors among existing nodes at random since there is no central coordinator. 
However it is possible to build an expander graph with tight connectivity if the global information are given such as maintaining distributed Delaunay triangulation graphs for wireless sensor networks \cite{MDT}, metro Ethernet \cite{ROME-ToN}, random regular graphs for data center networks \cite{S2-TPDS}, and memory interconnection networks \cite{StringFigure}.

\subsection{Notations}~
We denote scalars by lower or upper case letters; vectors and matrices by lower and upper case boldface letters, respectively. For a vector $\vx = (x_1, \cdots, x_d)^\top\in \mathbb{R}^d$, 
we use $\|\vx\| := {(\sum_{i=1}^d |x_i|^2)^{1/2}}$ and $\|\vx\|_\infty := \max_{i=1}^d|x_i|$ to denote its $\ell_2$- and $\ell_\infty$-norm, respectively.
We denote the vector whose entries are all 0s as $\mathbf{0}$. For a matrix $\mA$, we use $\mA^\top$,  $\mA^{-1}$, and $\|\mA\|$ 
to denote its transpose, inverse, and spectral norm, respectively.
We denote the identity matrix as $\mI$.
For a function $f(\vx): \mathbb{R}^d \rightarrow \mathbb{R}$, we denote 
$\nabla f(\vx)$ as its gradient. 
Given two sequences $\{a_n\}$ and $\{b_n\}$, we write $a_n=\mathcal{O}(b_n)$ if there exists a positive constant $C$ 
such that $a_n \leq C b_n$.

\subsection{Organization}~
We organize this paper as follows: In Section~\ref{sec:theory}, we present the 
theoretical results for DFL 
on convergence rate and generalization bound. Based on these theoretical results 
we present our network topology design and its practical implementation 
in Sections \ref{sec:network} 
and \ref{sec:practical-network}, 
respectively. We verify the efficiency and robustness to the potential node failures of DFL with the designed network topology on various benchmarks 
in Section~\ref{sec:experiments}. 
Technical proofs 
are provided in the appendix.

\section{
Theory of DFedAvg}\label{sec:theory}~
An important notion in DFL 
is the \emph{mixing matrix}, which is associated with an undirected connected graph $\mathcal{G}=(\mathcal{V},\mathcal{E})$, with vertex set $\mathcal{V}=\{1,2,\cdots,N\}:=[N]$ and edge set $\mathcal{E}\subset \mathcal{V}\times \mathcal{V}$,
and the edge $(i,j)\in \mathcal{E}$ represents a communication channel between clients $i$ and $j$.
\begin{definition}[Mixing matrix]
\label{def:mixing-matrix}
A matrix ${\mM}=[m_{i,j}]\in \RR^{N\times N}$ is a mixing matrix, if it satisfies 1. (Graph) If $i\neq j$ and $(i,j)\notin \mathcal{E}$, then $m_{i,j}=0$, otherwise, $m_{i,j}>0$; 2. (symmetry) ${\mM}={\mM}^\top$; 3. (Null space property) $null\{{\mI}-{\mM}\}=span\{\mathbf{1}\}$ where ${\mI}\in \RR^{N\times N}$ and $\mathbf{1}\in \RR^{N}$ are the identity matrix and the vector whose entries are all $1$s; 4. (Spectral property) ${\mI}\succeq {\mM}\succ -{\mI}$, where ${\mI}\succeq {\mM}$ means ${\mI}-{\mM}$ is positive semi-definite and ${\mM}\succ -{\mI}$ stands for ${\mM}+{\mI}$ is positive definite. 
\end{definition}
Given the adjacency matrix of a network, its maximum-degree matrix and metropolis-hastings matrix are both mixing matrices \cite{boyd2004fastest}. The symmetric property of ${\mM}$ indicates that its eigenvalues are real and can be sorted in the non-increasing order. Let $\lambda_i({\mM})$ denote the $i^{th}$ largest eigenvalue of ${\mM}$, then we have  $\lambda_1({\mM})=1>\lambda_2({\mM})\geq \cdots \geq \lambda_N({\mM})>-1$ based on the spectral property of the mixing matrix.
The mixing matrix also serves as a probability transition matrix of a Markov chain. 
An important constant
is $\lambda=\lambda({\mM}):=\max\{|\lambda_2({\mM})|,|\lambda_N({\mM})|\}$, which describes the speed of the Markov chain, 
induced by the mixing matrix ${\mM}$, converges to its stable state. 


We consider DFL using the following update on client $i$ 
\begin{equation} \label{eq:DFedAvgM-local-update}
{\small     \vw^{t,k+1}_i  = \vw^{t, k}_i -\eta_t \nabla f_i(\vw^{t,k}_i; \xi^{t,k}_i) + \beta(\vw^{t,k}_i - \vw^{t,k-1}_i),}
\end{equation}
where $t$ is the communication round, $k$ is the local 
iteration, and $\xi^{t,k}_i = (\vx^{t,k}_i, y^{t,k}_i) \sim \mcl D_i$. 
After the $K^{th}$ local iteration, 
communication happens according to the graph topology of the mixing matrix, ${\mM}$; that is, we have for each  $i \in [N]$:
\[
    \vw^{t+1, 0}_i = \sum_{\ell =1}^N m_{i,\ell} \vw^{t,K}_\ell.
\]
To ensure well-defined iterations, we set $\vw^{t,-1}_i = \vw^{t,0}_i$ for each $i$. These iterations are referred to as DFedAvgM (Decentralized Federated Averaging with Momentum) \cite{sun2021decentralized}.

To guarantee convergence of generalization of DFedAvgM, we collect below the necessary assumptions on the local functions $f_i$ and global function $f$:
\begin{assumption}[L-smooth] \label{assumption:Lsmooth}
     $f_1,\ldots,f_m$ are all L-smooth, i.e. $f_i(\vw)\le f_i(\vv)+\langle\nabla f_i(\vv),\vw-\vv\rangle+\frac{L}{2}\|\vw-\vv\|_2^2$ for all $\vw, \vv$.
\end{assumption}
\begin{assumption}[Bounded Local Gradient Variance (BLGV)] \label{assumption:bdd-stoch-var}
     Let {\small$\xi^t_i := (\vx^{t, k}_i, y^{t,k}_i)$} be sampled from the $i^{th}$ device's local data $\mcl D_i$ uniformly at random. Then for all $i \in [N]$: 
     $\mbb E\|\nabla f_i(\vw^{t,k}_i; \xi^{t,k}_i) - \nabla f_i(\vw^{t,k}_i)\|_2^2 \le \sigma^2$, i.e. the stochastic gradients have bounded variance.
\end{assumption}
\begin{assumption}[Bounded Global Gradient Variance (BGGV)]\label{assumption:bdd-global-var}
    The global variance is bounded, i.e. $\| \nabla f_i(\vw) - \nabla f(\vw)\|^2 \le \zeta^2$.
\end{assumption}
\begin{assumption}[Bounded Local Gradient Norm (BLGN)]\label{assumption:bdd-grad}
    At each node $i \in \{1, \ldots, m\}$, the norm of the gradients is uniformly bounded, i.e. $\max_{\vw} \|\nabla f_i(\vw)\| \le B$.
\end{assumption}

While convergence guarantees for FL and DFL have been studied extensively~\cite{wang2018cooperative,stich_error-feedback_2020,jiang_agrawal_2018,sun2021decentralized}, we provide stability analysis for DFedAvgM to give generalization guarantees under Assumptions~\ref{assumption:Lsmooth}-\ref{assumption:bdd-grad}. Along with related convergence guarantees, our work here elucidates the importance of beneficial graph topology design. 

\subsection{Convergence 
of DFedAvgM}\label{sec:conv-result}~
We state convergence results for DFedAvgM and highlight the effect of graph topology on convergence rates in DFL. This result analyzes the convergence of 
the sequence $\{\bar{\vw}^t\}_{t=1}^T$ over the $T$ communication rounds, where $\bar{\vw}^t := \frac{1}{N} \sum_{i=1}^N \vw^t_i$ is the averaged weight vector over all the nodes. 
The following result comes from \cite{sun2021decentralized}.
\begin{theorem}[General nonconvexity \cite{sun2021decentralized}] \label{thm:gen-nonconvexity}
Let the sequence $\{\bar{\vw}_i^t\}_{t\ge 0}$ be generated by the DFedAvgM for each $i=1, 2, \ldots, N$, and suppose Assumptions~\ref{assumption:Lsmooth}-\ref{assumption:bdd-global-var} hold. Moreover, assume the constant stepsize $\eta$ satisfies $0< \eta \le 1/8LK$ and $64L^2K^2\eta^2 + 64LK\eta < 1$, where $L$ is the Lipschitz constant from Assumption~\ref{assumption:Lsmooth} and $K$ is the number of local updates before communication. Then,
{\small\begin{equation} \label{eq:conv-bound}
    \min_{1 \le t \le T}\ \mbb E\|\nabla f(\bar{\vw}^t)\|^2 \le \frac{2\bar{\vw}^1 - 2 \min f}{\gamma(K, \eta)T} + \alpha(K, \eta) + \frac{\Xi(K, \eta)}{(1-\lambda)^2},
\end{equation}}
where $T$ is the total number of communication rounds and $\gamma(K,\eta), \alpha(K, \eta)$, and $\Xi(K, \eta)$ are constants, and the detailed forms are given in the appendix.


\end{theorem}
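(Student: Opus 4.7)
The plan is to follow the standard descent-plus-consensus-error template for decentralized SGD with momentum, exploiting the spectral gap of the mixing matrix to control disagreement between clients. First I would introduce the averaged iterate $\bar{\vw}^{t,k} := \frac{1}{N}\sum_{i=1}^N \vw^{t,k}_i$ and observe that, because each row of $\mM$ sums to one, the mixing step preserves the average: $\bar{\vw}^{t+1,0} = \bar{\vw}^{t,K}$. Using $L$-smoothness (Assumption~\ref{assumption:Lsmooth}) I would derive a one-step descent inequality of the schematic form
\[
\mathbb{E}[f(\bar{\vw}^{t+1,0})] \le \mathbb{E}[f(\bar{\vw}^{t,0})] - c_1(\eta,K)\,\mathbb{E}\|\nabla f(\bar{\vw}^{t,0})\|^2 + c_2(\eta,K)\,\Psi^t + c_3(\eta,K)\,\sigma^2,
\]
where the consensus error is $\Psi^t := \frac{1}{N}\sum_{k=0}^{K-1}\sum_{i=1}^N \mathbb{E}\|\vw^{t,k}_i - \bar{\vw}^{t,k}\|^2$. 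Cross terms between $\nabla f$ evaluated at the average and $\nabla f_i$ at individual clients are absorbed via L-smoothness at the cost of a $\Psi^t$ penalty, and stochastic-noise terms collapse on expectation into second moments bounded by $\sigma^2$ (Assumption~\ref{assumption:bdd-stoch-var}).

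Second, and this is where graph topology enters, I would bound $\Psi^t$ itself. Stacking the client weights into a matrix and subtracting its row-average (which lies in the span of $\mathbf{1}$), the mixing step contracts this disagreement in spectral norm by $\lambda = \lambda(\mM)$, since Definition~\ref{def:mixing-matrix} guarantees that $\mM$ restricted to the orthogonal complement of $\mathbf{1}$ has spectral radius at most $\lambda$. Between communications, the $K$ local SGDM updates inject drift that can be controlled using $B$ (Assumption~\ref{assumption:bdd-grad}) together with $\sigma^2$ and $\zeta^2$ (Assumptions~\ref{assumption:bdd-stoch-var}--\ref{assumption:bdd-global-var}); the momentum term $\beta(\vw^{t,k}_i - \vw^{t,k-1}_i)$ is handled by unrolling it into a weighted sum of past stochastic gradients, each of which has bounded norm. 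Solving the resulting linear recursion for $\Psi^t$ produces a geometric series in $\lambda$ and yields $\Psi^t \lesssim \eta^2 K^2 (B^2 + \sigma^2 + \zeta^2)/(1-\lambda)^2$, where one factor of $(1-\lambda)^{-1}$ comes from the contraction of each mixing step and the second arises when the per-round drift is accumulated against the geometrically decaying past contributions.

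Third, I would telescope the descent inequality from $t=1$ to $T$. The stepsize restrictions $\eta \le 1/(8LK)$ and $64L^2 K^2\eta^2 + 64LK\eta < 1$ are exactly what is needed to ensure that, after absorbing $c_2(\eta,K)\Psi^t$ into the gradient term on the right-hand side, the net coefficient $\gamma(K,\eta)$ of $\mathbb{E}\|\nabla f(\bar{\vw}^t)\|^2$ remains strictly positive. Dividing by $T$ and replacing the average over $t$ by the minimum yields the stated bound, with the $2(f(\bar{\vw}^1) - \min f)/(\gamma(K,\eta) T)$ term coming from the telescoping sum, the $\alpha(K,\eta)$ term collecting residual stochastic noise, and $\Xi(K,\eta)/(1-\lambda)^2$ capturing the consensus-error contribution.

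The main obstacle is the bookkeeping for the momentum coupling. The term $\beta(\vw^{t,k}_i - \vw^{t,k-1}_i)$ carries error from previous local iterations, and at $k=0$ it reaches back into the previous communication round, so the recursion for $\Psi^t$ acquires cross-round, cross-node terms that must all be tracked simultaneously with the geometric decay under $\mM$. Verifying that the stated stepsize conditions are strong enough to close this coupled recursion, and pinning down the explicit constants $\gamma$, $\alpha$, and $\Xi$, is where most of the technical labor sits; the spectral-gap factor $(1-\lambda)^{-2}$ is then a natural output of solving the recursion, which is precisely why the choice of graph topology (through $\lambda(\mM)$) directly shapes the rate.
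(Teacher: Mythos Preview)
The paper does not actually prove this theorem: it is quoted from \cite{sun2021decentralized}, and the appendix only restates it with the explicit forms of $\gamma(K,\eta)$, $\alpha(K,\eta)$, and $\Xi(K,\eta)$ filled in. So there is no in-paper proof to compare against. Your sketch is the standard descent-plus-consensus template and is almost certainly the same route taken in the cited reference; the structure you outline (averaging is preserved by $\mM$, one-round descent on $f(\bar{\vw}^t)$, consensus error bounded via the $\lambda$-contraction of $\mM$ on $\mathbf{1}^\perp$, then telescoping) is exactly how such bounds are obtained, and the $(1-\lambda)^{-2}$ factor arises for the reason you give.

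One small point worth flagging: the theorem as stated here asks only for Assumptions~\ref{assumption:Lsmooth}--\ref{assumption:bdd-global-var}, yet the explicit constants displayed in the appendix contain $B^2$ terms, i.e.\ they invoke Assumption~\ref{assumption:bdd-grad}. Your sketch also leans on $B$ to bound the local drift. This is not a flaw in your argument so much as an inconsistency in the statement itself; if you want to match the hypotheses literally, you would need to replace the uniform gradient bound by a recursive bound on $\|\nabla f_i(\vw^{t,k}_i)\|$ in terms of $\|\nabla f(\bar{\vw}^t)\|$, $\zeta$, and the consensus error, which is doable but adds another layer of coupling to the recursion you already identified as the main bookkeeping obstacle.
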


In this result we clearly see the convergence of the auxiliary sequence depends on the value of $\lambda \in (0,1)$; namely, the closer that $\lambda$ is to 1, the worse the convergence bound of the final term of \eqref{eq:conv-bound}. In~\cite{yujin2019}, a similar dependence on this graph-dependent value $\lambda$ appears in their convergence result for a slightly different version of DFL with momentum. All this motivates selecting a graph topology that will minimize the value of $\lambda$.

\subsection{Generalization of DFedAvgM}~
In this section, we will establish a generalization bound of DFedAvgM. 
Given an algorithm $\mcl A$ that acts on data $\mcl D$ with output $\mcl A(\mcl D)$, the generalization error is given by {\small$\eps_{gen} :=\mbb E_{\mcl D, \mcl A} [ F(\mcl A(\mcl D)) - F_{\mcl D}(\mcl A(\mcl D))]$}, where {\small$F(\vx) = \mbb E_{\xi \sim \mcl D} f(\vx; \xi)$} is the {\it ``true'' risk} and {\small$F_{\mcl D}(\vx) = \sum_{i=1}^N f(\vx; \xi)/N$} is the {\it empirical risk} of the machine learning model for input $\vx$ with loss function $f$. Uniform stability is a useful property used to bound the generalization error $\eps_{gen}$, see e.g. \cite{Hardt2016, elisseeff2005stability}. 
\begin{definition}
A randomized algorithm $\mcl A$ is $\eps$-{\it uniformly stable} if for any two data sets $\mcl D, \mcl D'$ with $N$ samples each that differ in one example we have 
\[
    \sup_{\xi} \mbb E_{\mcl A} [ f(\mcl A(\mcl D); \xi) - f(\mcl A(\mcl D'); \xi)] \le \eps.
\]
\end{definition}
With this definition in hand, it has been proven that uniform stability implies bounded generalization error:
\begin{lemma}[
\cite{Hardt2016}] \label{lemma:gen-unif-stab}
Let $\mcl A$ be $\eps$-uniformly stable, then it follows that $$|\mbb E_{\mcl D, \mcl A}[ F(\mcl A(\mcl D)) - F_{\mcl D}(\mcl A(\mcl D))] | \le \eps.$$
\end{lemma}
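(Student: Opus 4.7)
The plan is to execute the classical ``ghost sample'' coupling argument from \cite{Hardt2016}. Introduce an independent copy $\mcl D' = \{\xi_1', \ldots, \xi_N'\}$ of $\mcl D = \{\xi_1, \ldots, \xi_N\}$, both drawn i.i.d.\ from the underlying data distribution, and for each $i$ let $\mcl D^{(i)}$ denote $\mcl D$ with its $i$-th entry replaced by $\xi_i'$. Because each $\xi_i'$ is an independent fresh sample distributed as a test point, linearity of expectation gives
\begin{equation*}
    \mbb E_{\mcl D, \mcl A}[F(\mcl A(\mcl D))] = \mbb E_{\mcl D, \mcl D', \mcl A}\left[\frac{1}{N}\sum_{i=1}^N f(\mcl A(\mcl D); \xi_i')\right],
\end{equation*}
whereas trivially
\begin{equation*}
    \mbb E_{\mcl D, \mcl A}[F_{\mcl D}(\mcl A(\mcl D))] = \mbb E_{\mcl D, \mcl A}\left[\frac{1}{N}\sum_{i=1}^N f(\mcl A(\mcl D); \xi_i)\right].
\end{equation*}

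Next I would exploit a relabeling symmetry: because $\xi_i$ and $\xi_i'$ are i.i.d., the joint law of $(\mcl D, \xi_i')$ coincides with that of $(\mcl D^{(i)}, \xi_i)$ -- both correspond to drawing $N+1$ i.i.d.\ samples and merely flagging which one is the ``extra'' evaluation point. This yields $\mbb E[f(\mcl A(\mcl D); \xi_i')] = \mbb E[f(\mcl A(\mcl D^{(i)}); \xi_i)]$ for every $i$. Subtracting the two displays and substituting gives
\begin{equation*}
    \mbb E[F(\mcl A(\mcl D)) - F_{\mcl D}(\mcl A(\mcl D))] = \frac{1}{N}\sum_{i=1}^N \mbb E\bigl[f(\mcl A(\mcl D^{(i)}); \xi_i) - f(\mcl A(\mcl D); \xi_i)\bigr].
\end{equation*}

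The final step is to invoke $\eps$-uniform stability: since $\mcl D$ and $\mcl D^{(i)}$ differ in exactly one sample, the definition bounds each summand above by $\eps$ in expectation over $\mcl A$, uniformly in the evaluation point $\xi_i$, and the same bound with the roles of $\mcl D$ and $\mcl D^{(i)}$ swapped controls the opposite sign. Averaging and applying the triangle inequality then delivers $|\mbb E[F(\mcl A(\mcl D)) - F_{\mcl D}(\mcl A(\mcl D))]| \le \eps$. The only delicate step in the whole argument is the relabeling identity in the second paragraph -- it is where i.i.d.-ness of the data is truly used -- but it is a routine measure-theoretic maneuver, so I anticipate no genuine obstacle; the rest is linearity of expectation and the triangle inequality.
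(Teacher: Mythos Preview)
The paper does not supply its own proof of this lemma; it is simply quoted from \cite{Hardt2016} and used as a black box. Your ghost-sample coupling argument is exactly the standard proof given in that reference, and it is correct as written.
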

Therefore, 
to ensure the generalization bound of a given random algorithm $\mcl A$, we simply compute the uniform stability bound $\eps$. 
To establish this result, we additionally require Assumption~\ref{assumption:bdd-grad}, 
i.e. boundedness of the local gradients.

The following theorem 
summarizes our result of uniform stability for DFedAvgM given the assumptions stated previously; the proof can be found in the appendix. 

\begin{theorem}[Uniform stability]\label{thm:main}
    Under Assumptions~\ref{assumption:Lsmooth}-\ref{assumption:bdd-grad}, we have that for any $T$ if the step size $\eta_t \le \frac{c}{t}$ and $c$ is small enough, then DFedAvgM satisfies uniform stability with
    \begin{equation}\label{eq:stab-bound}
    {\small \begin{aligned}
                \epsilon \le T^{\frac{cLK}{1 + cLK}} \lp \frac{(\sup f )K(cLK)^{\frac{1}{1 + cLK}}}{n} +  \frac{\frac{2\sigma B}{NL}}{(cLK)^{\frac{cLK}{1 + cLK}}}\rp
                + \frac{B(\sigma + B) \lp cK + 2C_\lambda \rp}{cLK},
    \end{aligned}}
    \end{equation}
    where $\sup f < \infty$ is the uniform bound on the size of the non-negative global loss function $f$, $n$ is the local data set size, and 
    \[ 
        C_\lambda := 2\lambda^2 +  4\lambda^2 \ln \frac{1}{\lambda} + 2\lambda + \frac{2}{\ln \frac{1}{\lambda}}
    \]
    is a constant depending on the graph topology.
\end{theorem}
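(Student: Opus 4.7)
The plan is to follow the uniform stability framework of Hardt, Recht, and Singer, adapted to the decentralized setting with momentum and communication mixing. Let $\mcl D$ and $\mcl D'$ be two data collections that differ in a single sample, located at some node $i^\star$. Running DFedAvgM on both inputs with the same randomness produces coupled sequences $\{\vw^{t,k}_i\}$ and $\{\vw'^{t,k}_i\}$, and I would track the per-node deviation $\delta^{t,k}_i := \mbb E\|\vw^{t,k}_i - \vw'^{t,k}_i\|$ together with the network-wide average $\bar\delta^t := \frac{1}{N}\sum_i \delta^{t,0}_i$. By Lemma~\ref{lemma:gen-unif-stab} combined with Assumption~\ref{assumption:bdd-grad} (so that $|f(\vw;\xi)-f(\vw';\xi)| \le B\|\vw-\vw'\|$), it suffices to bound $\max_i \delta^{T,K}_i$ in expectation over the sampling indices and the position of the differing sample.

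Next I would handle the $K$ local steps between communication rounds. Using Assumption~\ref{assumption:Lsmooth} on the updates \eqref{eq:DFedAvgM-local-update}, I would prove that the momentum-SGD map is $(1+L\eta_t)$-expansive when the drawn mini-batch is shared between the two runs, by rewriting the update as the block recursion $(\vw^{t,k}_i,\vw^{t,k-1}_i) \mapsto (\vw^{t,k+1}_i,\vw^{t,k}_i)$ and bounding the spectral radius of its Jacobian (this is where the requirement that $c$ be small enough arises, since the momentum term $\beta$ must not dominate). When the differing sample is drawn at node $i^\star$, which happens with probability $1/n$, the extra gap is at most $2\eta_t B$ by BLGN. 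Unrolling over $K$ local steps and inserting the BLGV bound $\sigma^2$ from Assumption~\ref{assumption:bdd-stoch-var} for the stochastic noise yields $\mbb E\, \delta^{t,K}_i \le (1+L\eta_t)^K \delta^{t,0}_i + K\eta_t(2B/n + \text{consensus drift})$ plus a stochastic-variance term contributing the $\sigma B/(NL)$ piece in \eqref{eq:stab-bound}.

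I would then analyse the communication step $\vw^{t+1,0}_i = \sum_\ell m_{i,\ell}\vw^{t,K}_\ell$. Writing $\vec\delta^{\,t,K} \in \RR^N$ entrywise, the mixing step sends $\vec\delta^{\,t,K} \mapsto \mM \vec\delta^{\,t,K}$; by symmetry and double-stochasticity of $\mM$, this preserves the average $\bar\delta^t$ but contracts the mean-free component by the spectral factor $\lambda$. The per-node deviation therefore admits the decomposition $\delta^{t+1,0}_i \le \bar\delta^t + \lambda^{\,t-s}\cdot(\text{incremental contribution at round }s)$, so that summing over past rounds produces geometric sums of the form $\sum_{s=0}^{t} \lambda^{2s}$, $\sum_{s=0}^{t} s\lambda^{2s}$, and integrals $\int_0^\infty \lambda^{2u}du = 1/(2\ln(1/\lambda))$. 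Combining these with the noise prefactor $B(\sigma+B)$ from the local step yields precisely the closed form $C_\lambda = 2\lambda^2 + 4\lambda^2\ln\tfrac{1}{\lambda} + 2\lambda + 2/\ln\tfrac{1}{\lambda}$ appearing in the final bound.

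The last step is to combine the two recursions and optimize. Plugging $\eta_t \le c/t$ into the product $\prod_{s=t_0}^{T}(1+cLK/s) \le (T/t_0)^{cLK}$, I would obtain a telescoped bound of the schematic form $\bar\delta^T \le (2B/(nL))(T/t_0)^{cLK} + B(\sigma+B)(cK+2C_\lambda)/(cLK)$. Using the classical decoupling trick of conditioning on the first round $t_0$ at which the differing sample is drawn gives $\eps \le (\sup f)Kt_0/n + B\,\mbb E[\bar\delta^T \mid \text{untouched before }t_0]$, and optimizing over $t_0\in[1,T]$ balances the two terms to yield the exponent $cLK/(1+cLK)$ and the prefactor $(cLK)^{1/(1+cLK)}$ in \eqref{eq:stab-bound}. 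The main obstacle will be proving non-expansiveness of the momentum update in step two: the block-Jacobian argument couples $\vw^{t,k}$ with $\vw^{t,k-1}$, and showing that the communication-induced coupling across nodes does not amplify the momentum "memory" beyond the required $(1+L\eta_t)$-factor is what forces both the stepsize smallness and the careful interleaving of the local-step and mixing-step recursions.
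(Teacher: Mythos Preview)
Your overall scaffolding---Hardt--Recht--Singer coupling, the $t_0$ conditioning trick, and the final optimization over $t_0$---is the same as the paper's. But the mechanism by which $\lambda$ enters the bound is misidentified, and this is a real gap. The paper does \emph{not} track per-node deviations $\delta^{t,k}_i$ or their max; it tracks only the averaged deviation $\delta_t := \mbb E\|\bar{\vw}^t - \bar{\vv}^t\|$. Since $\mM\mathbbm{1}/N = \mathbbm{1}/N$, the mixing step leaves $\bar{\vw}^{t+1} - \bar{\vv}^{t+1}$ \emph{unchanged}; there is no ``contraction of the mean-free deviation vector by $\lambda$'' in the argument (and note that the entrywise norms $\delta_i$ do not transform linearly under $\mM$ anyway). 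Instead, the paper writes $\bar{\vw}^{t+1} - \bar{\vv}^{t+1}$ as an SGD-type difference evaluated at the averages $\bar{\vw}^t,\bar{\vv}^t$ plus a correction $A_1$ involving $\nabla f_i(\bar{\vw}^t) - \nabla f_i(\vw^{t,k}_i)$. By $L$-smoothness, $\|A_1\|$ is controlled by the \emph{consensus error} $\|\mX^{(t,0)}(\mI - \mP)\|_F$, and it is here that $\lambda$ appears, via $\|\mM^j - \mP\| \le \lambda^j$ applied to the accumulated gradient history (Lemma~\ref{lemma:for-stability}). The resulting contribution is $\eta_t LK(\sigma+B)\sum_{j=1}^t \eta_{t-j}\lambda^j$.

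Consequently, $C_\lambda$ is not assembled from the geometric sums $\sum_s \lambda^{2s}$ or $\sum_s s\lambda^{2s}$ you list (those would give $1/(1-\lambda^2)$-type expressions). It is precisely the constant in the uniform-in-$t$ estimate $\sum_{j=1}^t \eta_{t-j}\lambda^j \le C_\lambda/t$ when $\eta_s \le c/s$ (Lemma~\ref{lemma:bd-stepsize}); the proof splits $\lambda^t\int_1^t \lambda^{-x}/x\,dx$ at $t/2$ and bounds $t\lambda^t$, $t^2\lambda^t$, $t\lambda^{t/2}$ individually, which is where the four summands $2\lambda^2 + 4\lambda^2\ln\tfrac{1}{\lambda} + 2\lambda + 2/\ln\tfrac{1}{\lambda}$ come from. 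Separately, the momentum is not handled by a block-Jacobian expansiveness bound: the paper unrolls the local recursion exactly to $\vw^{t,K}_i - \vw^{t,0}_i = -\eta_t(1-\beta)^{-2}\sum_{k=0}^{K-1}p_k(\beta)\nabla f_i(\vw^{t,k}_i;\xi^{t,k}_i)$ with $|p_k(\beta)|\le (1-\beta)^2$, which yields the $(1+\eta_t LK)$ growth factor directly, without a separate smallness requirement on $\beta$.
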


Per Lemma~\ref{lemma:gen-unif-stab}, we have that the generalization error for DFedAvgM is bounded by 
the same constant that bounds the uniform stability, $\epsilon$. 
Again, we note here the {\it explicit dependence} of the generalization error on the corresponding value of $\lambda$ for the mixing matrix $\mM$ 
of the graph topology. $C_\lambda$ is an increasing function of $\lambda \in (0,1)$ which implies that the bound in 
\eqref{eq:stab-bound} improves with smaller $\lambda$.

\section{Network Topology Design}\label{sec:network}~
The results of Section~\ref{sec:theory} show that the network topology has a profound impact on both optimization and generalization of DFedAvgM. 
According to Theorems~\ref{thm:gen-nonconvexity} and \ref{thm:main}, the closer $\lambda$ is to $1$ the slower DFedAvgM converges (Theorem~\ref{thm:gen-nonconvexity}) and the worse it 
generalizes (Theorem~\ref{thm:main}). To improve DFedAvgM, we propose a theoretically efficient and practical \emph{sparse network topology} whose $\lambda$ is far away from $1$.



For the sake of notation, we recall graph definitions and properties to introduce network construction. Given an undirect, connected graph $\mcl G = (\mcl V, \mcl E)$ we define the {\it graph Laplacian} $\mL =\mD - \mA$, 
where $\mA = [a_{i,j}]$ (with $a_{i,j} = 1$ if $(i,j) \in \mcl E$) is the adjacency matrix and $\mD_{i,i} = \sum_{j=1}^N a_{i,j}$ is the diagonal degree matrix of $\mcl G$. Since $\mcl G$ is undirected, we have that both $\mA$ and $\mL$ are symmetric. 
Note that $\mL$ is positive semidefinite, with a trivial eigenvalue of $0$ occurring with multiplicity reflecting the number of connected components in $\mcl G$. As we assume that $\mcl G$ is connected, this means that only the first eigenvalue $\lambda_1(\mL) = 0$, and we can order the rest of the eigenvalues as $0 = \lambda_1(\mL) < \lambda_2(\mL) \le \lambda_3(\mL) \le \ldots \le \lambda_N(\mL)$. Define the {\it reduced condition number} of $\mL$ as
\begin{equation}
    \kappa(\mL) := \frac{\lambda_{N}(\mL)}{\lambda_2(\mL)},
\end{equation}
which is a measure of graph connectivity because a {\it smaller} $\kappa(\mL)$ corresponds to a graph with {\it higher} connectivity. This is an important constant that allows us to quantify how useful a given graph topology is for the purposes of improving convergence and generalization of DFedAvgM.


We apply the mixing matrix used 
in \cite{chow2016expander} 
\[
    {\mM} = \mI - \frac{2}{(1 + \theta)\lambda_N(\mL)} \mL, \quad \theta \in [0, 1),
\]
which allows us to quantify the associated value of $\lambda$. 
The eigenvalues of this mixing matrix ${\mM}$ have a straightforward relationship with eigenvalues of $\mL$:
\[
    \lambda_i({\mM}) = 1 - \frac{2}{(1 + \theta) \lambda_m(\mL)} \lambda_i(\mL).
\]
Then it is clear that
{ \begin{align*}
    \lambda &= \max \{ |\lambda_2({\mM})|, |\lambda_m({\mM} )|\} \\
    &= \max \left\{ \left|1 - \frac{2}{(1 + \theta) \lambda_m(\mL)} \lambda_2(\mL)  \right|, \left| 1 - \frac{2}{(1 + \theta) \lambda_m(\mL)} \lambda_m(\mL) \right| \right\} \\
    &= \max \left\{ \frac{\left|1 + \theta - \frac{2}{\kappa(\mL)}\right|}{1 + \theta}   ,  \frac{1 - \theta}{1 + \theta} \right\}.
\end{align*}}


\begin{figure}[!ht]
\centering
\begin{tabular}{c}
\includegraphics[width=0.7\linewidth]{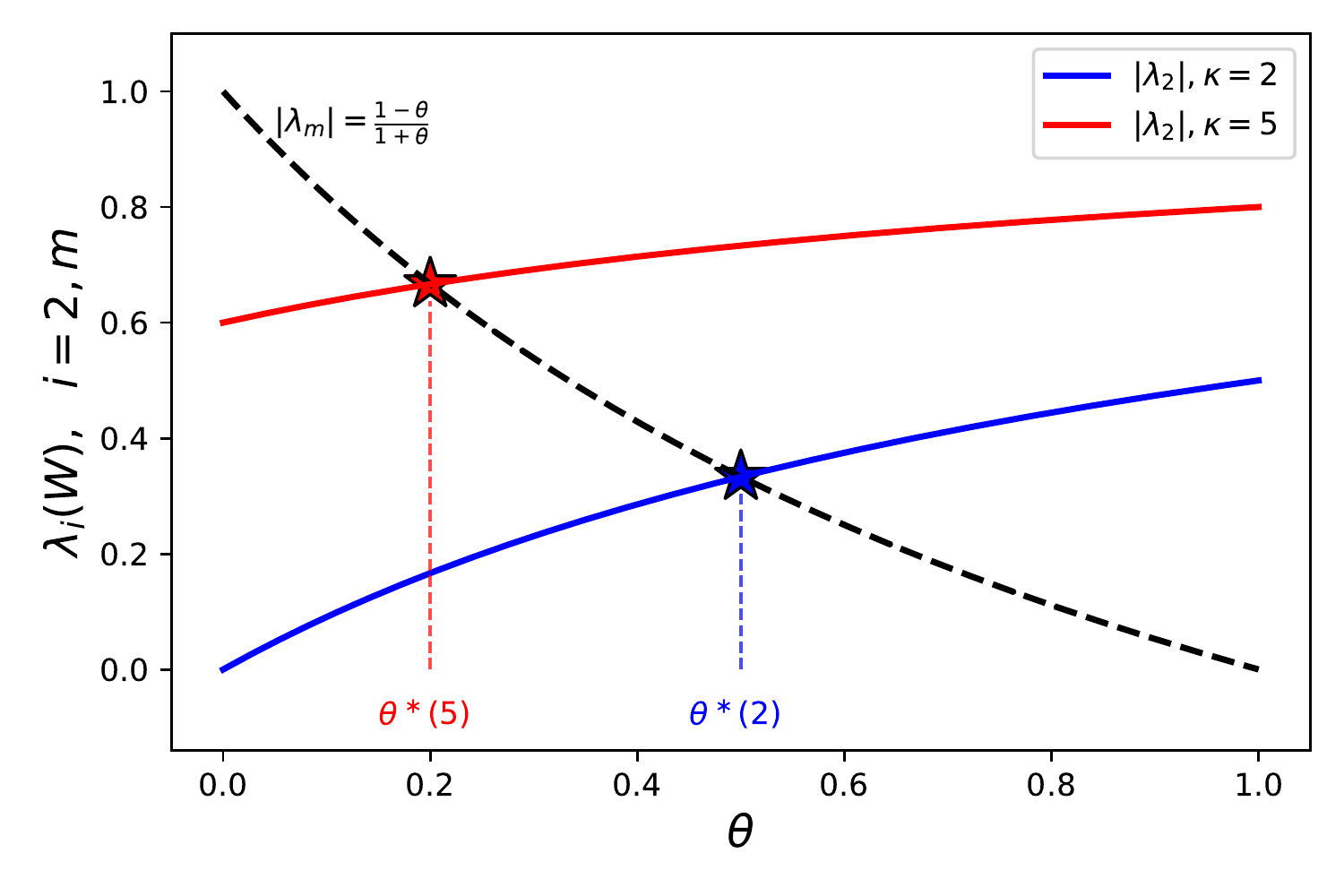}\\
\end{tabular}\vspace{-0.5cm}
\caption{Plot of $\lambda$
for the expander graph as a function of $\theta$. The stars indicate the optimal choice of $\theta = \theta^\ast(\kappa)$, given the value of $\kappa(\mL)$. {\it Lower} $\kappa$ leads to a {\it higher} value of $\theta^\ast(\kappa)$, which in turn leads to a {\it lower} value of $\lambda$, which is desired.}\label{fig:lambda-plot}
\end{figure}

For a fixed $\kappa(\mL)$, we can view $\lambda$ as a function of $\theta$ that can be optimized to lead to the lowest value of $\lambda$. In Fig.~\ref{fig:lambda-plot}, we have plotted the function $|\lambda_m(\mM)| = (1-\theta)/(1+ \theta)$ along with  $|\lambda_2(\mM)| = |1 + \theta - 2/\kappa(\mL)|/(1+\theta)$ for two values of $\kappa(\mL)$. For each fixed $\kappa(\mL)$, the corresponding lowest value of $\lambda(\theta)$ occurs when $|\lambda_2(\mM)| = |\lambda_m(\mM)|$, shown in Fig.~\ref{fig:lambda-plot} as stars; that is, when $\theta = \theta^\ast(\kappa(\mL))$. It is clear that
\begin{align*}
    |\lambda_2(\mM)| &= |\lambda_m(\mM)| \\ 
    \iff \frac{1 + \theta^\ast(\kappa(\mL)) - \frac{2}{\kappa(\mL)}}{1  + \theta^\ast(\kappa(\mL))} &= \frac{1 - \theta^\ast(\kappa(\mL))}{1 + \theta^\ast(\kappa(\mL))} \\
    \iff \theta^\ast(\kappa(\mL)) - \frac{2}{\kappa(\mL)} &= -\theta^\ast(\kappa(\mL)) \\
    \iff \theta^\ast(\kappa(\mL)) &= \frac{1}{\kappa(\mL)},
\end{align*}
as long as $ 2/\kappa(\mL) - \theta^\ast(\kappa(\mL)) \le 1$, which is reasonable since most $\kappa(\mL) \ge 2$.

It is straightforward then that choosing a graph structure with a smaller value of $\kappa(\mL)$ gives $\mcl G$ better connectivity properties. Somewhat in competition with this connectivity is the communication cost of a given graph topology; that is, better connectivity of a graph structure generally corresponds to more edges in the graph which increases the 
communication cost. Each node sends its updated model 
to each of its neighbors, and so an increased number of edges results in more communication that must happen between nodes. 

\textbf{\it We propose using $d$-regular expander graphs to balance this connectivity communication tradeoff}. A $d$-regular graph has a fixed number degree $d$ for each node; i.e. $d(i)=d$ for all $i$. Expander graphs are sparse graphs that have strong connectivity properties, of which $d$-regular expander graphs (and the special case of {\it Ramanujan graphs}) are in a sense ``optimal'' graph connectivity structures (captured in the constant $\kappa(L)$) with fixed communication cost. While Ramanujan graphs are not known for every value of total nodes $N$ and degree $d$, with high probability most $d$-regular graphs are approximately Ramanujan for large enough $N$~\cite{chow2016expander}. 

For $d$-regular graphs, 
there exists a convenient upper bound for $\kappa(\mL)$. This bound involves the first non-trivial eigenvalue, $\lambda_1(\mA)$, of the corresponding adjacency matrix 
\begin{equation*}
    \kappa(\mL) \le \frac{d + \lambda_1(\mA)}{d - \lambda_1(\mA)}.
\end{equation*}
If the $d$-regular graph in question is Ramanujan, then we can bound $\kappa(\mL)$ as
\begin{equation}\label{eq:ramanujan-kappa}
    \kappa^R(\mL) \le \frac{d + 2\sqrt{d-1}}{d - 2\sqrt{d-1}}.
\end{equation}

As the right-hand side of (\ref{eq:ramanujan-kappa}) is a {\it decreasing} function of $d$, this would suggest to choose larger $d$ in order to minimize $\kappa^R(\mL)$. However, increasing $d$ will incur greater communication costs. One can in practice choose the value of $d$ according to a prescribed bound on the total communication cost.


\subsection{Comparison to Ring and Erd\"{o}s-R\'enyi graphs}\label{sec:suboptimal-ring-er}~
We show that other graph topologies are in a sense suboptimal for the purposes of DFedAvgM, highlighting two common examples: Ring and Erd\"{o}s-R\'enyi graphs. We emphasize that using {\it $d$-regular Ramanujan graphs are in a sense ``optimal''} by possessing {\it strong connectivity} properties in the graph topology while requiring {\it low communication cost} for local node neighborhood communication (i.e. sparsity).

\paragraph{Ring graphs -- poor connectivity.}

Ring graph is an extremely sparse, but still connected, $2$-regular graph structure where the graph structure constitutes a ring (see Fig.~\ref{fig:graphs} (b)). While a very simple and sparse topology to impose on the nodes of the graph, ring graphs possess poor connectivity properties that we can directly compare with $d$-regular Ramanujan graphs.

It is well-known that the eigenvalues of the graph Laplacian 
$\mL_{ring}$ of the ring graph on $N$ nodes are given by
\[
    \{\lambda_k(\mL_{ring})\}_{k=0}^{\frac{N}{2}} = \left\{  2 - 2 \cos \left(\frac{2\pi k}{N}\right) \right\}_{k=0}^{\frac{N}{2}}, 
\]
each with geometric multiplicity $2$, except for the first eigenvalue $\mu_0(\mL_{ring}) = 0$ which has geometric multiplicity 1; if $N$ is even, the last eigenvalue $\lambda_{N/2}(\mL_{ring})$ has multiplicity 1 as well. Therefore, we can straightforwardly see that the reduced condition number for a Ring graph on $N$ nodes is
\begin{align*}
    \kappa^{ring}(\mL_{ring}) &= \frac{\lambda_{N/2}(\mL_{ring})}{\lambda_{1}(\mL_{ring})} = \frac{2 - 2 \cos\left(\frac{2\pi N}{2N}  \right)}{2 - 2 \cos\left( \frac{2\pi}{N} \right)}\\
    &= \frac{4}{2 - 2 \cos\left(\frac{2\pi}{N}\right)} \ge \frac{4}{2 - 2 \left(1 - \frac{1}{2}\left(\frac{2\pi}{N}\right)^2  \right)}
    = \frac{4N^2}{4 \pi^2} = \frac{N^2}{\pi^2}.
\end{align*}

Therefore, we see that with this {\it lower bound} for the Ring graph has a $\kappa(\mL_{ring})$ grows {\it quadratically} with the size $N$ of the graph! The corresponding value for $\lambda$ approaches 1 for increasing values of $N$, which implies slower convergence rates (Theorem~\ref{thm:gen-nonconvexity}) and worse generalization bounds (Theorem~\ref{thm:main}). It is clear then that
\[
    \kappa^R(\mL) \le \frac{d + 2\sqrt{d-1}}{d - 2\sqrt{d-1}} \ll \frac{N^2}{\pi^2} \le \kappa^{ring}(\mL_{ring}),
\]
which shows superior convergence properties of Ramanujan expander graphs compared to the sparse Ring graph structure.

\paragraph{Erd\"{o}s-R\'enyi graph -- high communication cost.}
Another type of graph topology one could impose for DFedAvgM is an Erd\"{o}s-R\'enyi (ER) random graph structure, wherein each edge $(i,j) \in \mcl V \times \mcl V$ is sampled independently and indentically distributed with probability $p \in (0,1)$. It is well-known that as long as $p = \mcl O(\ln N/N)$, then the resulting graph $\mcl G$ is connected with high probability~\cite{erdos-renyi1960}. 

While the connectivity properties of ER graphs are nearly guaranteed to be better than $d$-regular Ramanujan graphs (with $d$ is relatively small), {\it the communication cost of ER graphs is prohibitively large} for large network size $N$. To see this, the expected degree $d_i$ of a node $i \in \mcl V$ in an ER graph with large enough edge probability $p$ is simply $\bar{d} = Np = \mcl O(\ln N)$, which grows with the size of the graph $N$. This incurs a much larger communication cost than the constant cost of $d$-regular expander graphs as it is assumed that $d \ll N$, with $d < \ln N$ as $N$ is large.

In Section~\ref{sec:experiments}, 
we empirically verify the superior {\it connectivity-communication cost} balance exemplified by the $d$-regular expander graph structure compared to Ring and ER graphs for DFedAvgM. These $d$-regular expander graphs have better connectivity properties than Ring graphs while at the same time being sparser (i.e. lower communication costs) than ER graphs.


\section{Practical network design}\label{sec:practical-network}~
In this section, we discuss how to convert a given graph to a practical overlay network topology 
for DFL. 
We illustrate our proposed $d$-regular network topology in Fig.~\ref{fig:DFL-topology}: for $d$-regular graph suppose $d$ is even and let $L=d/2$,
we assign for each node 
a set of \emph{virtual coordinates} represented by a $L$-dimensional vector $\langle x_{1}, x_{2}, ..., x_{L} \rangle$, where each element $x_{i}$ is a randomly generated real number $0\leq x_{i}<1$, as shown in Fig.~\ref{fig:DFL-topology} (a).
%
There are $L$ virtual ring spaces such as the two shown in Fig.~\ref{fig:DFL-topology} (b). In the $i^{th}$ space, a node is \emph{virtually} placed on a ring based on the value of its $i^{th}$ coordinate $x_i$.
Coordinates in each space are circular, and 0 and 1 are superposed.
For each space, a node will connect to the two adjacent nodes, for example, $B$ connects to $A$ and $C$ in Space 1 and $G$ and $F$ in Space 2. Hence each node has at most $d=2L$ neighbors.  A neighbor of a node may happen to be adjacent to it in multiple spaces, such as $A$ and $D$. In such 
case, $A$ can connect to another node in the same situation, such as $E$. In the end, the equivalent network topology is shown in Fig.~\ref{fig:DFL-topology} (c).

\begin{figure*}[!ht]
\centering
\includegraphics[width=12cm]{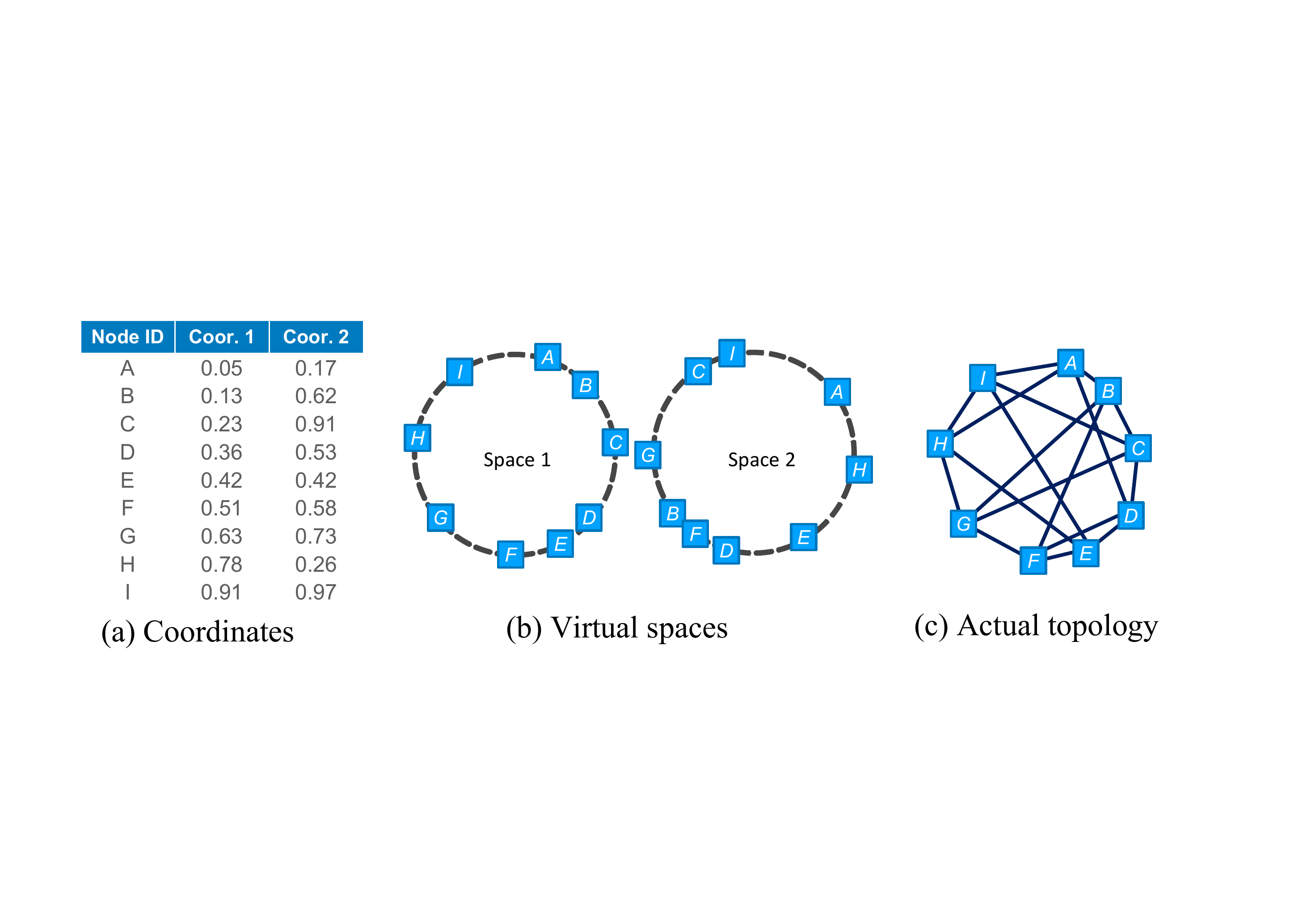}
\vspace{-4mm}
\caption{DFL network topology in working systems. Each node generates a set of coordinates and the network is generated in a distributed manner by allowing each node to execute the proposed protocols locally.}
\label{fig:DFL-topology}
\vspace{-2ex}
\end{figure*}

The proposed network is a close proximal construction for a random $d$-regular network \cite{S2-TPDS}.
Note that in practice there does not exist a perfect construction of a random $d$-regular graph \cite{Jellyfish}, and there is no way for a network node to verify whether the entire network is Ramanujan only based on its local information.  

The construction of a correct topology can be achieved by allowing each node to maintain the two closest nodes on each virtual ring. When a new node joins the network, it can always succeed to find the two closest nodes on each virtual ring by recursive queries \cite{MDT}.

\subsection{Network recovery from node failures}~
To maintain a correct DFL topology for a dynamic set of nodes, protocols should be designed to recover errors from node failures and leaves. Here an error is defined as a node that has a wrong neighbor set compared to a correct DFL network topology. 
If a node $x$ fails from the network, in each virtual space $i$, its adjacent nodes $y_i$ and $z_i$ should remove $x$ from their neighbors and add each other as a new neighbor. To recover from such single-node failure, the proposed recovery protocol allows each node to store the IP addresses of the two-hop neighbors. Hence if a node is detected to fail, its two adjacent nodes can directly connect as new neighbors.

\section{Experimental Results}\label{sec:experiments}


\subsection{Convergence and 
generalization}~
We evaluate the communication round versus training loss, test loss, test accuracy, and the communication cost for Ring, Erd\"{o}s-R\'enyi, fully-connected, and  the proposed expander graphs. We pick $d=3$ regular expander graphs (called Ramanujan). The communication cost could be estimated by the model size. In all experimental settings, the topology is generated by a central server before the training starts and stored in each user, but the central host are not involved in the actual training process. The expander graph is generated by adding an extra edge on top of the Ring graph. The Erd\"{o}s-R\'enyi graph is generated by selecting random edges from all possible edges with the probability $p = \frac{\ln{N}}{N}$, where $N$ is the total number of 
expander graphs result in 
faster convergence and better generalization of DFedAvgM in training different models on different datasets. To conduct more comprehensive and solid experiments and testing, both the real network settings and the simulation are used in our evaluation. To exclude other factors no tuned optimization and data compression algorithms are used in the experiments. In the evaluation, fully-connected graphs are shown as a baseline but it is hardly practical in real world applications considering the communication cost and availability. Ring topology is easy to implement and widely used in previous works, so it is also shown as a baseline. {Because of the randomness of the  Erd\"{o}s-R\'enyi graph, the experimental results are inconsistent when there are relatively few nodes; and so we do not include the Erd\"{o}s-R\'enyi graph in all MNIST experiments below. }

\paragraph{MNIST IID.}
We randomly split the MNIST dataset without any biases into $10$ different subsets. 
Each user owns a local multilayer perceptron (MLP) model with one hidden layer of size $200$. Each user only has 
access to only one local subset as its training set. We train the local model with the batch size of $20$ and use the cross entropy as the loss function. We use SGD with the learning rate 
$0.01$ and the momentum 
$0.9$. 
After $3$ epochs of local training, all the local nodes communicate with the topological neighbors and average all the parameters of the MLP model. After each communication round, the test accuracy, test loss, and training loss of each user are recorded and averaged in Fig.~\ref{fig:mnist-iid}. Based on our experiments, in this IID settings the fully-connected and expander graph converge at round $16$ which have advantage over $26$ rounds of the Ring graph. 
According to the test accuracy shown in Fig.~\ref{fig:mnist-iid}, the fully-connected graph has the best test accuracy of $98.2\%$ while the expander graph reaches a similar $98.0\%$ with only 
one third of its communication cost. The Ring graph reaches $97.7\%$ accuracy due to the ideal distribution of the data.

\begin{figure}[!ht]
\centering
\begin{tabular}{ccc}
\hspace{-0.2cm}\includegraphics[width=0.31\linewidth]{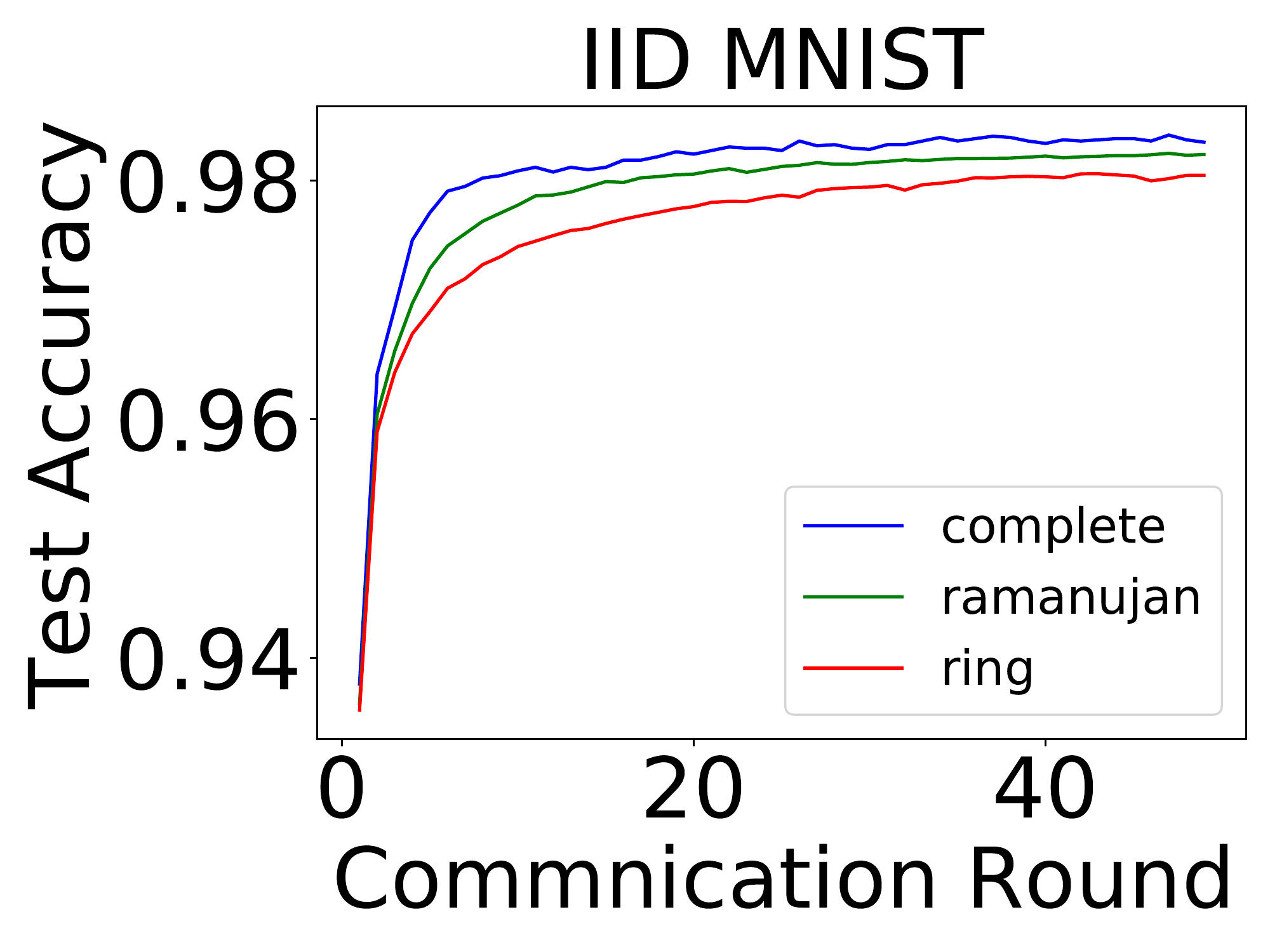}&
\hspace{-0.2cm}\includegraphics[width=0.31\linewidth]{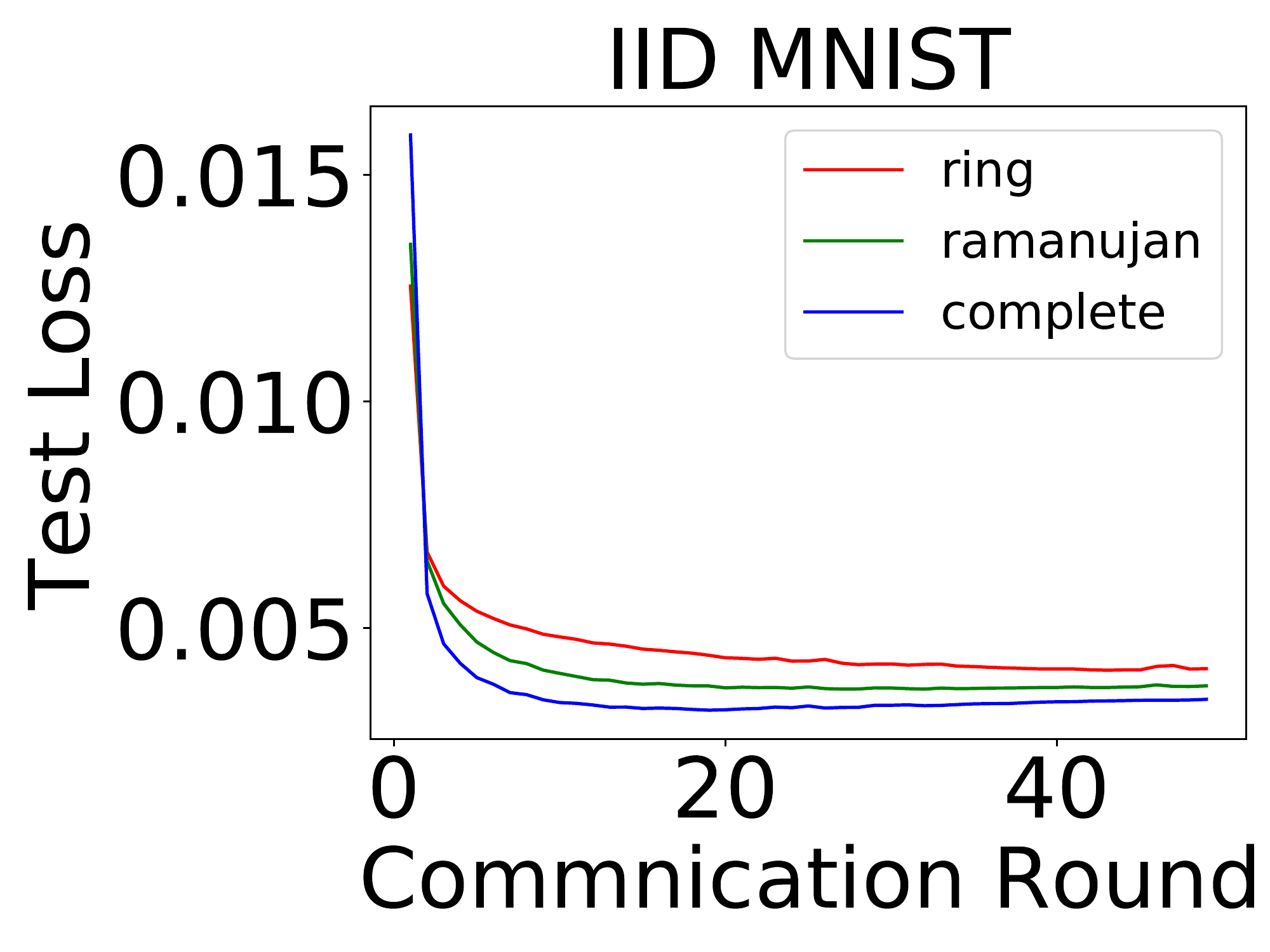}&
\hspace{-0.2cm}\includegraphics[width=0.31\linewidth]{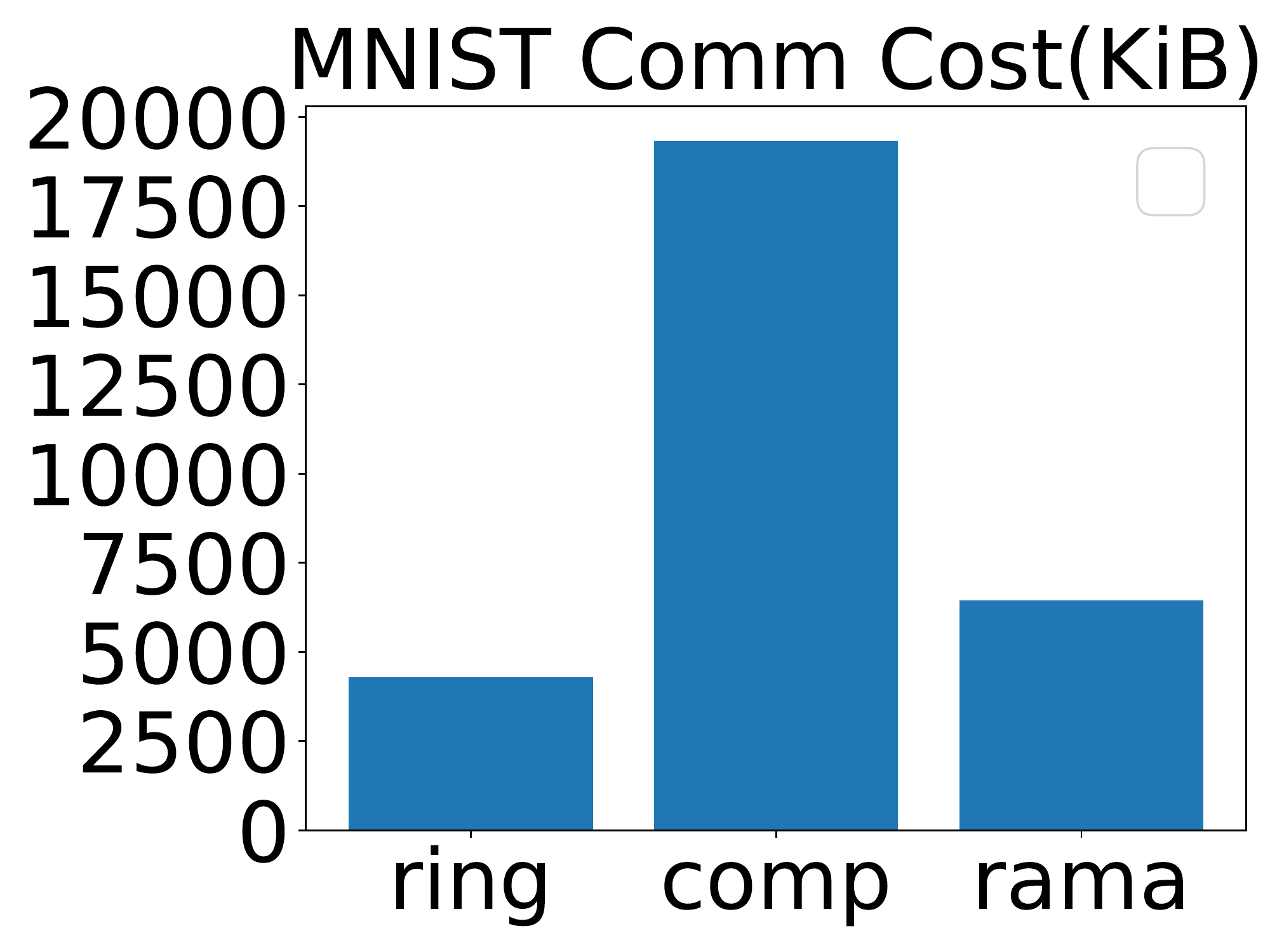}\\[-3pt]
{\footnotesize Test Accuracy} & {\footnotesize Test Loss} & {\footnotesize Comm. Cost}\\
\end{tabular}
\vspace{-0.4cm}
\caption{The test accuracy, test loss, 
and communication cost of the Ring/3-regular expander (Ramanujan)/Fully connected graphs on IID MNIST. All of the graphs reach over $92\%$ accuracy but the expander graph starts to converge at round 12 while the Ring graph starts to 
converge at round 20.
}
\label{fig:mnist-iid}
\end{figure}

\paragraph{MNIST Non-IID.}
All the settings are similar to the IID settings except each node owns a local dataset consisting of only one label (one 
digit in MNIST). The distribution is extremely unfavorable to the generalization. The test dataset is balanced sampled from the original dataset as the IID settings. As shown in Fig.~\ref{fig:mnist-noniid}
the expander graph reaches $88.8\%$ accuracy and much higher than the Ring graph ($73.68\%$). The fully connected graph reaches the best accuracy of $94\%$. Although the expander graph's accuracy is lower than the fully connected graphs' but with $33\%$ of its communication cost. After each communication round, the training and test loss, and test accuracy of each user are recorded and averaged (Fig.~\ref{fig:mnist-noniid}). The expander graph could achieve a faster convergence and better generalization than the Ring graph and the performance is close to the fully connected graph but with a more manageable communication cost.

\begin{figure}[!ht]
\centering
\begin{tabular}{ccc}
\hspace{-0.2cm}\includegraphics[width=0.31\linewidth]{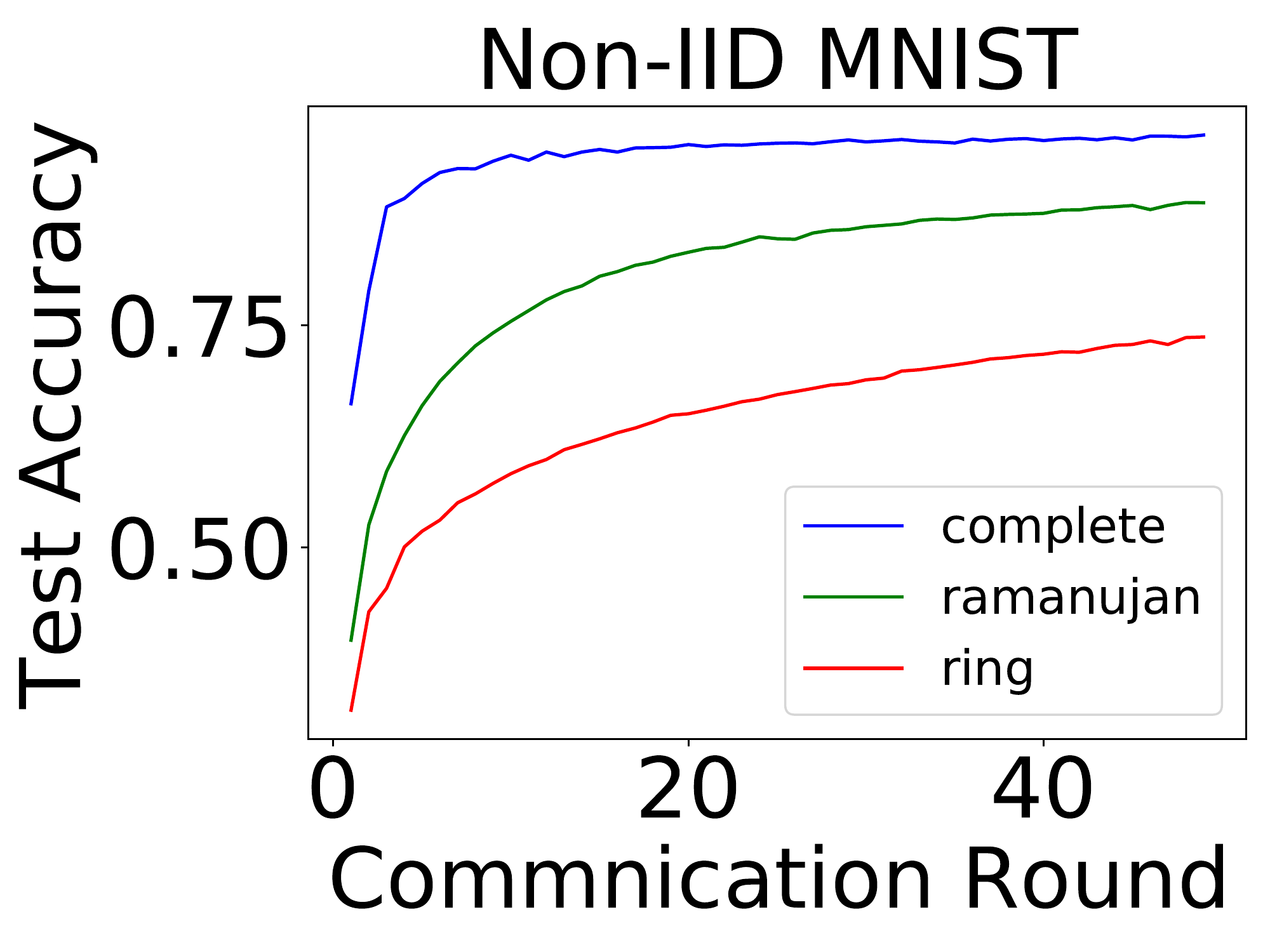}&
\hspace{-0.2cm}\includegraphics[width=0.31\linewidth]{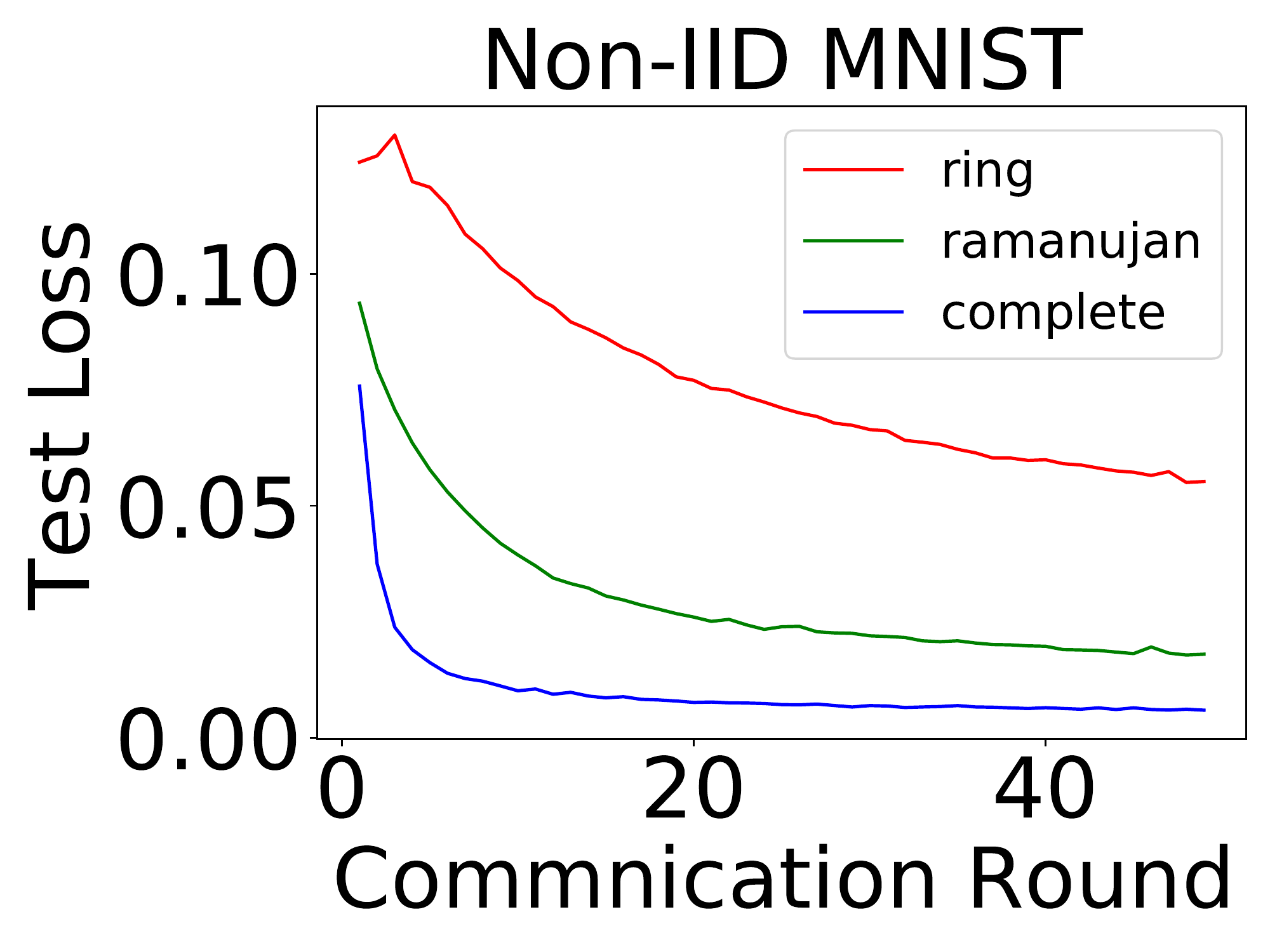}&
\hspace{-0.2cm}\includegraphics[width=0.31\linewidth]{img/mnist_comm_cost.pdf}\\[-3pt]
{\footnotesize Test Accuracy} & {\footnotesize Test Loss} & {\footnotesize Comm. Cost}\\
\end{tabular}
\vspace{-0.4cm}
\caption{The test accuracy, test loss, 
and communication cost of the Ring/3-regular Ramanujan/Fully connected Graph on non-IID MNIST. The expander graph reaches $88.79\%$ accuracy, which is higher than the Ring graph's $73.68\%$. The communication cost of the expander graph is only one third of the fully connected graphs'.
}
\label{fig:mnist-noniid}
\end{figure}\vspace{-0.2cm}

\paragraph{Language modeling.}
We further conduct the simulation to evaluate the 
effect of different topology to the language models. 
First, we split the Shakespeare dataset \cite{pmlr-v54-mcmahan17a} into 100 subsets (with some overlaps). Then we create $100$ LSTM models (each one with $256$ hidden units and $2$ layers). In this Non-IID sampling scenario, the underlying distribution of data for each node is consistent with the raw data. Since we assume that data distributions vary between users in the raw data, we take this setting as Non-IID. 
We use cross entropy as the loss function. Then we train each LSTM with the corresponding local Non-IID dataset with the 
learning rate 
$0.5$ and momentum 
$0.9$. After $3$ epochs of local training, all local nodes communicate with the neighbors through which is similar to the previous method employed by the MNIST experiment and average all the parameters of the LSTM model. After each communication round, the training loss, test loss, and test accuracy of each user are recorded and averaged in Fig.~\ref{fig:shakespeare}. The Erd\"{o}s-R\'enyi graph have an accuracy 
of $45.3\%$ which is close to the fully connected graph's $45.8\%$. The expander graph reach an accuracy 
of $40.4\%$ and the Ring graph only reaches $36.2\%$. In this unfavorable data distribution, the Ring graph generalize worse than the expander graph. 
{The Erd\"{o}s-R\'enyi graph has better test accuracy and test loss than the expander graph because it needs significantly more degrees to ensure the connectivity of the graph. Thus it has a higher communication cost.}
Also, DFedAvgM with the expander graph converges faster than the Ring graph. In this case, we could see that the communication cost for the complete graph is 16 times higher than the expander graph. With some moderate communication cost, expander graph could generalize better and has the similar convergence to the fully connected graph.

\begin{figure}[!ht]
\centering
\begin{tabular}{ccc}
\hskip-0.2cm\includegraphics[width=0.31\linewidth]{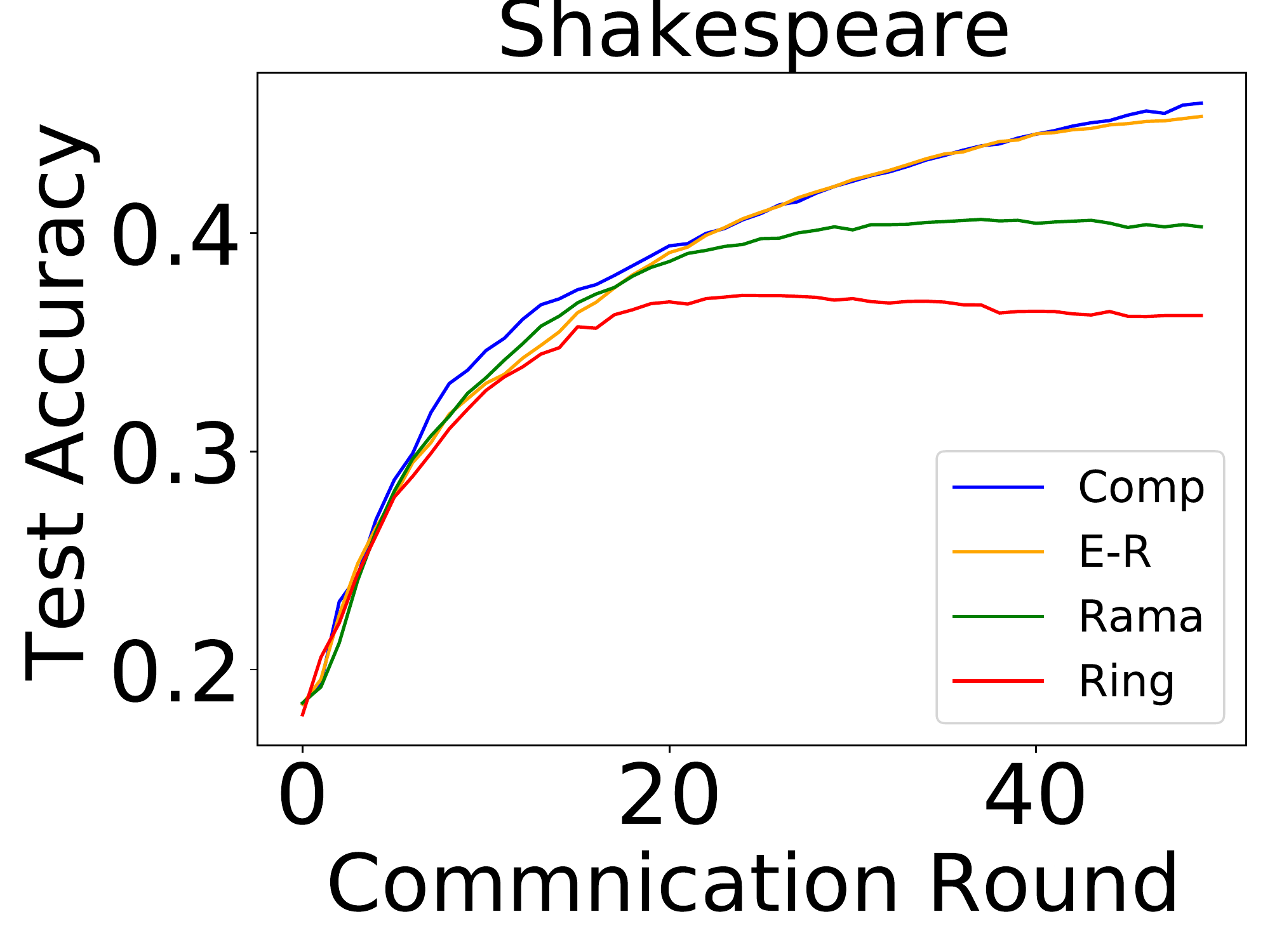}&
\hskip-0.2cm\includegraphics[width=0.31\linewidth]{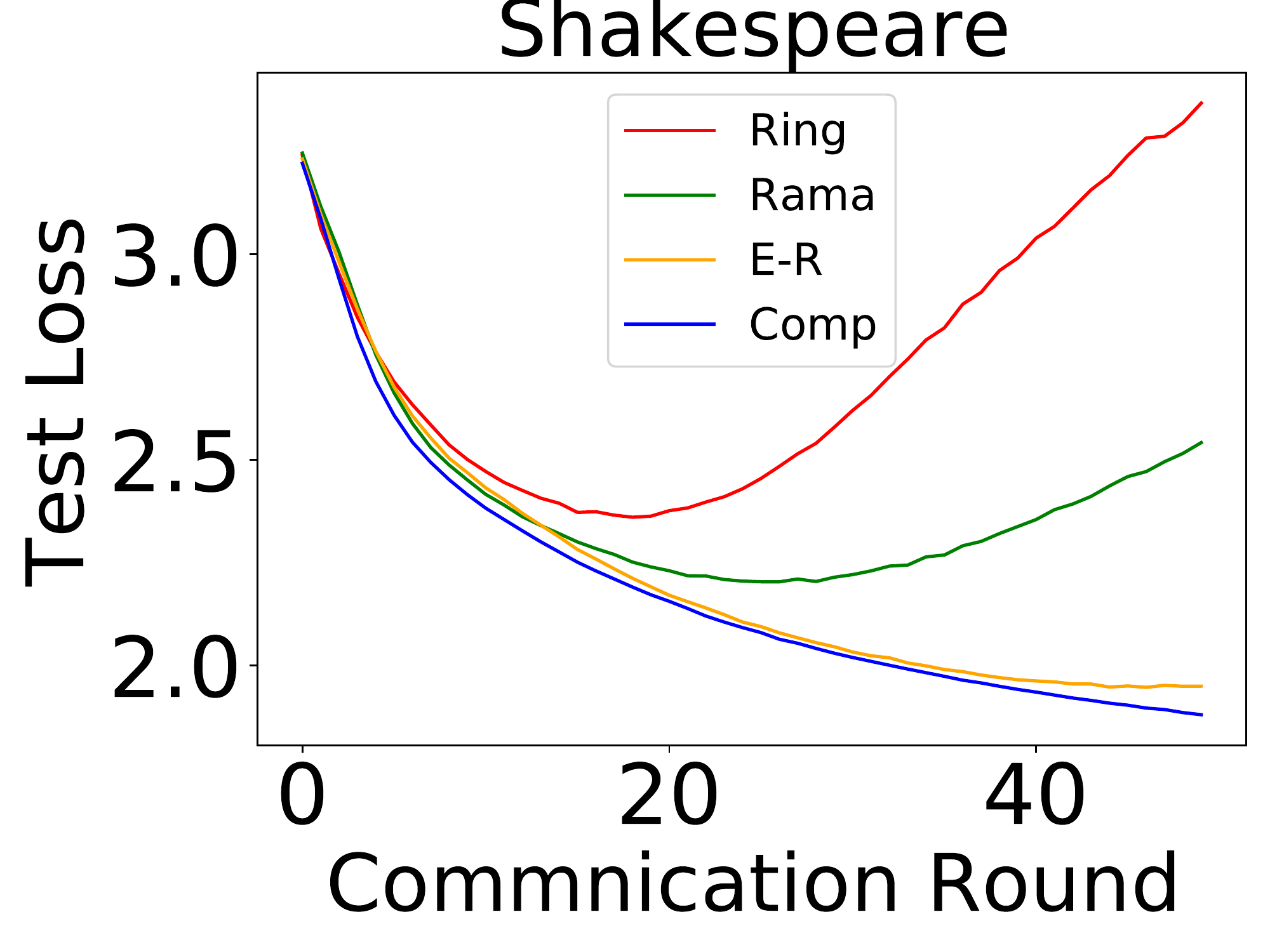}&
\hskip-0.2cm\includegraphics[width=0.31\linewidth]{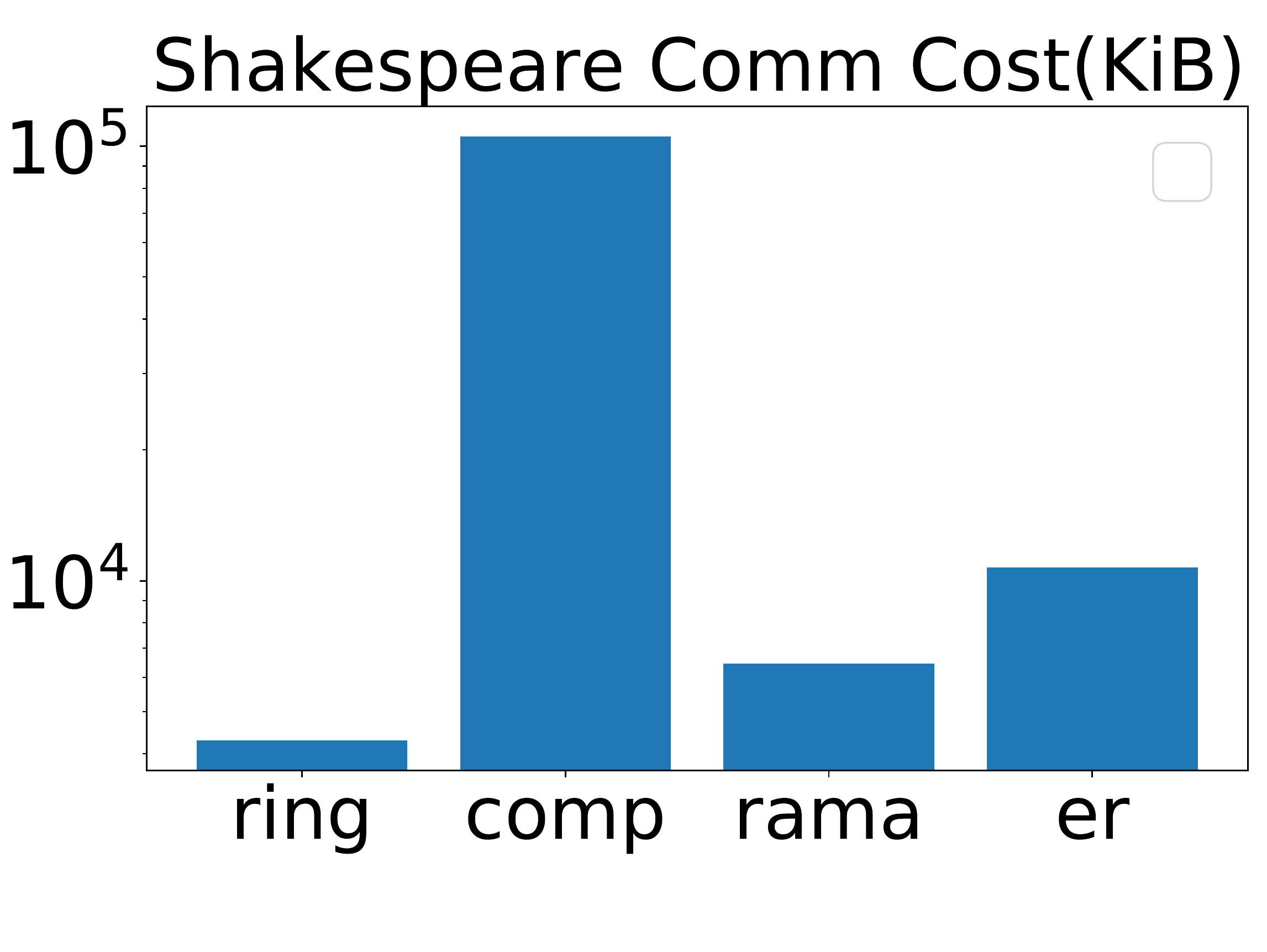}\\[-3pt]
{\footnotesize Test Accuracy} & {\footnotesize Test Loss} & {\footnotesize Comm. Cost}\\
\end{tabular}\vspace{-0.4cm}
\caption{The test accuracy, test loss, 
and 
communication cost of the Ring/3-regular Ramanujan/Erd\"os-R\'enyi/Complete graph on non-IID Shakespeare dataset.  Erd\"{o}s-R\'enyi graph have an accuracy rate of $45.3\%$. The expander graph reach a accuracy rate of $40.4\%$ and the Ring graph only reaches $36.2\%$.
}
\label{fig:shakespeare}
\end{figure}

\subsection{Robustness to client failures}~
To test the robustness of DFedAvgM with different network topology to 
client failures, 
we 
drop $10\%$ and $20\%$ of clients during the communication and compare the performance of Ring, expander, Erd\"{o}s-R\'enyi, and fully-connected graphs. In the language modeling, we mask the input of the dropped nodes to simulate the communication failure. 
All the dropped nodes are randomly selected and excluded from the final results.

\paragraph{MNIST Non-IID.}

\begin{figure}[!ht]
\centering
\begin{tabular}{cccc}
\hspace{-0.2cm}\includegraphics[width=0.23\linewidth]{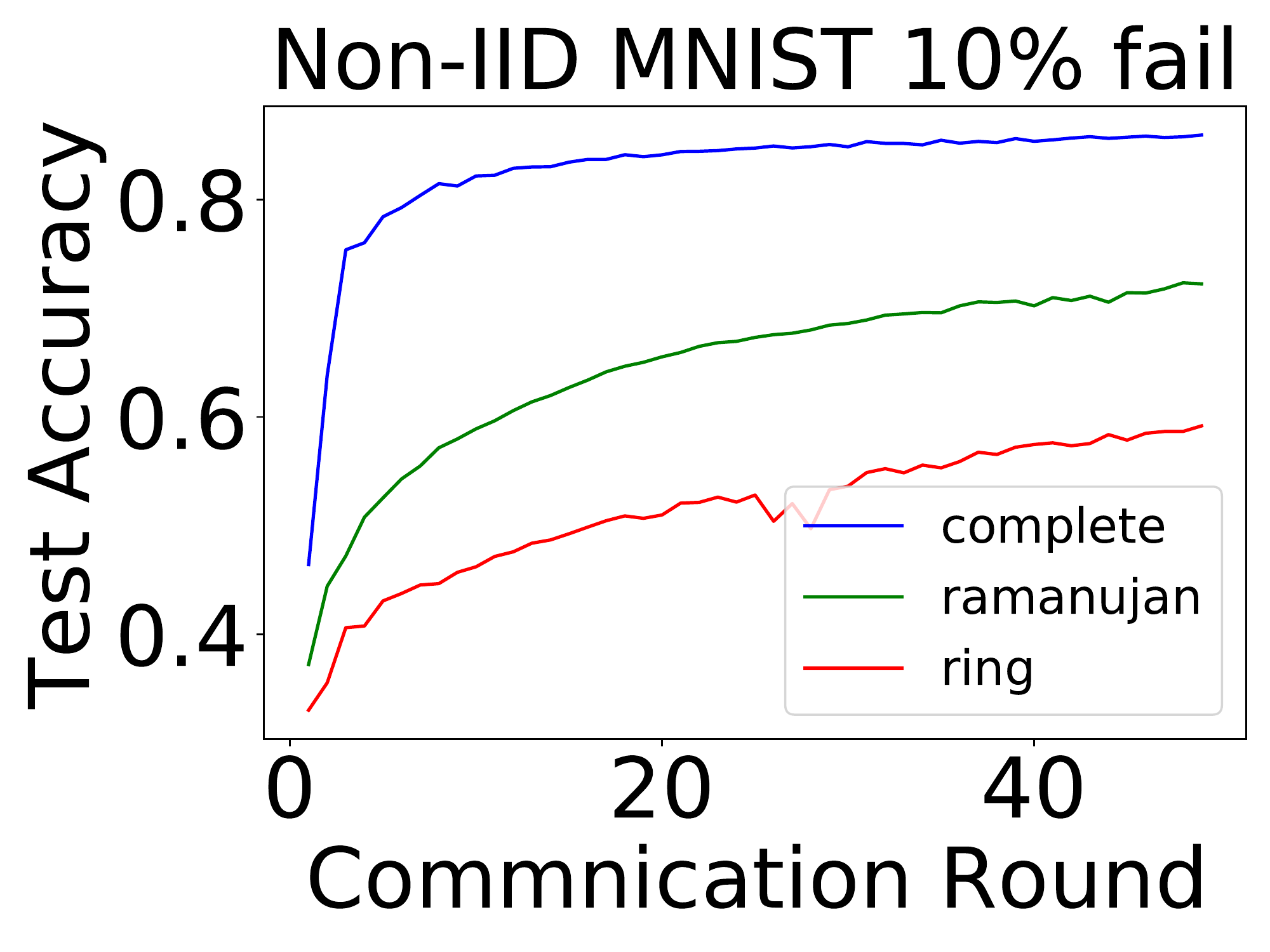}&
\hspace{-0.2cm}\includegraphics[width=0.23\linewidth]{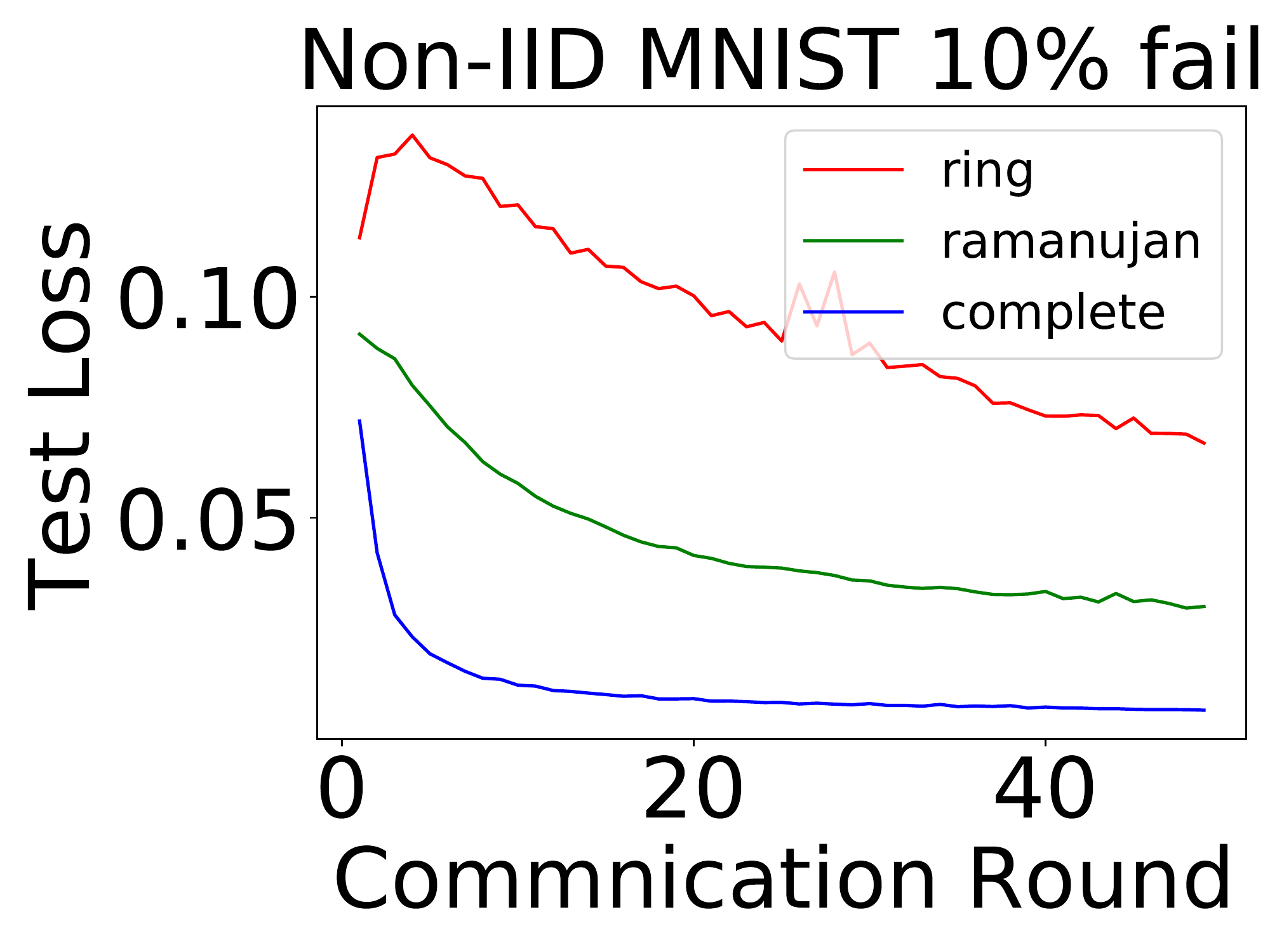}&
\hspace{-0.2cm}\includegraphics[width=0.23\linewidth]{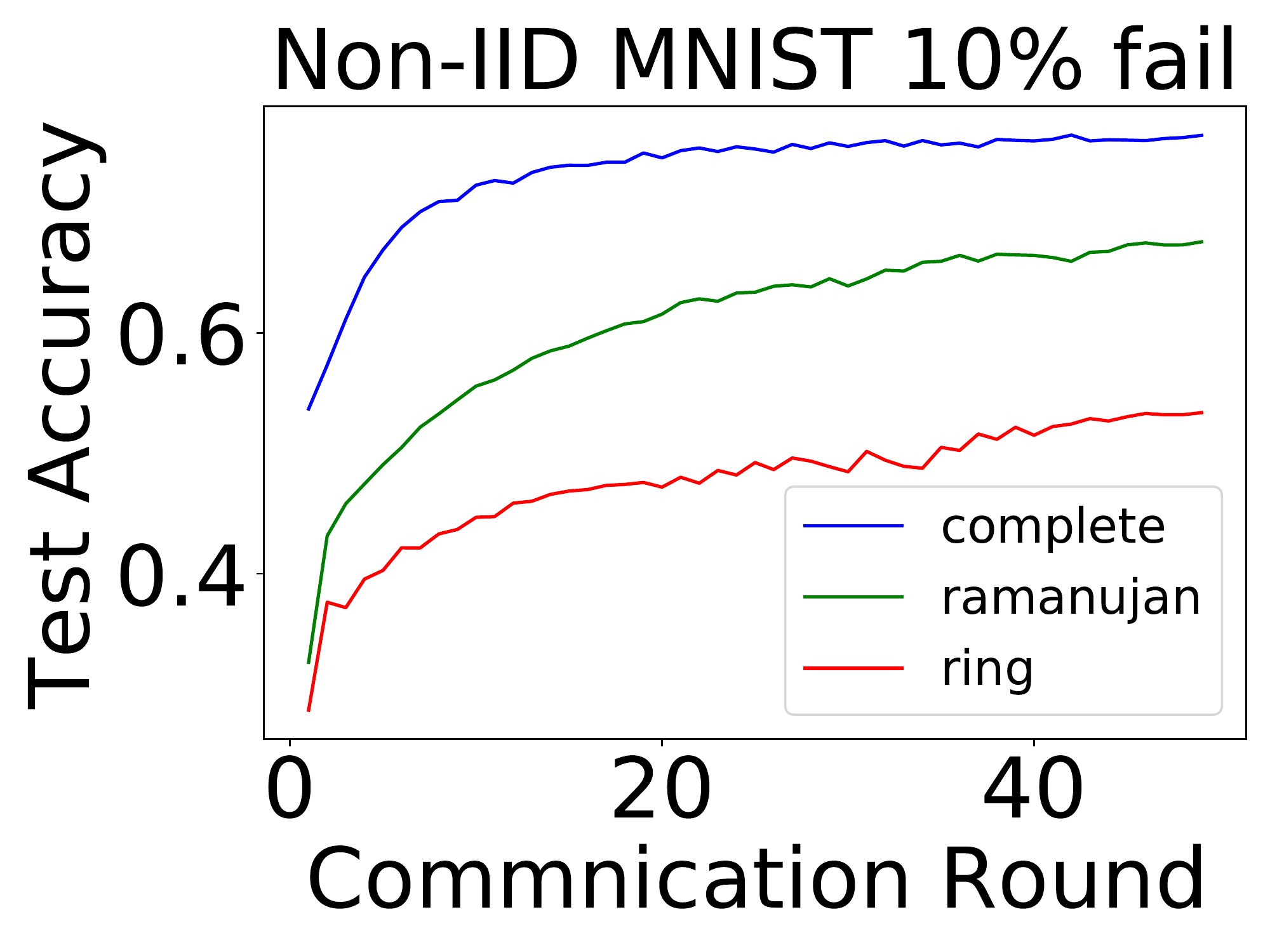}&
\hspace{-0.2cm}\includegraphics[width=0.23\linewidth]{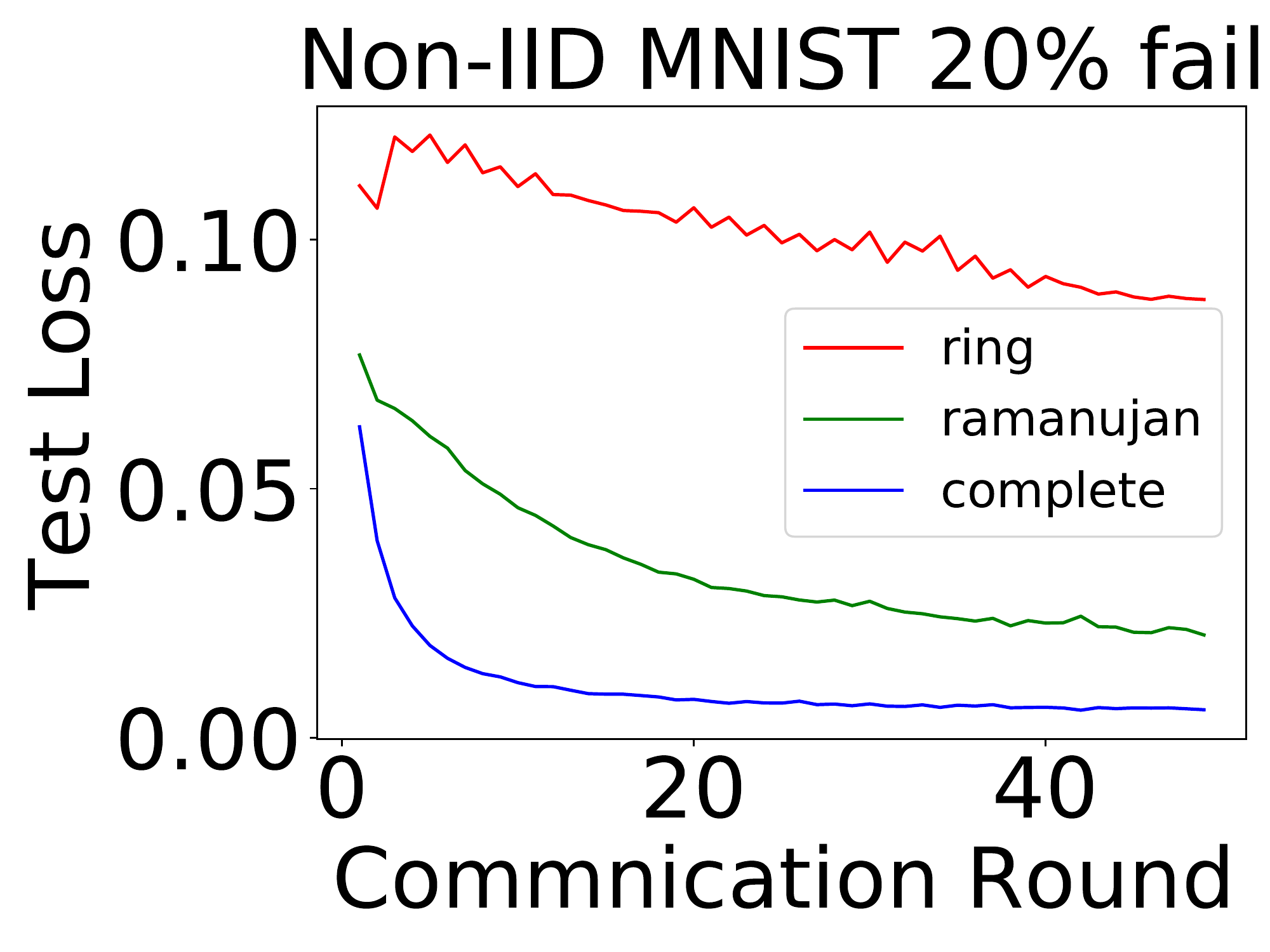}\\[-2pt]
{\footnotesize Acc 10\% failure} & {\footnotesize Loss 10\% failure} &
{\footnotesize Acc 20\% failure} & {\footnotesize Loss 20\% failure}\\
\end{tabular}\vspace{-0.4cm}
\caption{The test accuracy and test loss 
of the Ring/3-regular Ramanujan/Complete graphs on non-IID MNIST with client failures.
}
\label{fig:mnist-noniid-failure}
\end{figure}

As shown in Fig.~\ref{fig:mnist-noniid-failure}, the communication failure not only cause the loss of corresponding training samples globally, but also breaks the connection of the topology. With the weakest connectivity, the Ring graph degrades to $51.3\%$ accuracy when $20\%$ of the nodes are dropped. The clients are partitioned when multiple nodes fail in a Ring graph. The expander graph reaches $65.3\%$ of accuracy due to its high connectivity and no partition.

\paragraph{Language Modeling.}

\begin{figure}[!ht]
\centering
\begin{tabular}{cccc}
\hspace{-0.2cm}\includegraphics[width=0.23\linewidth]{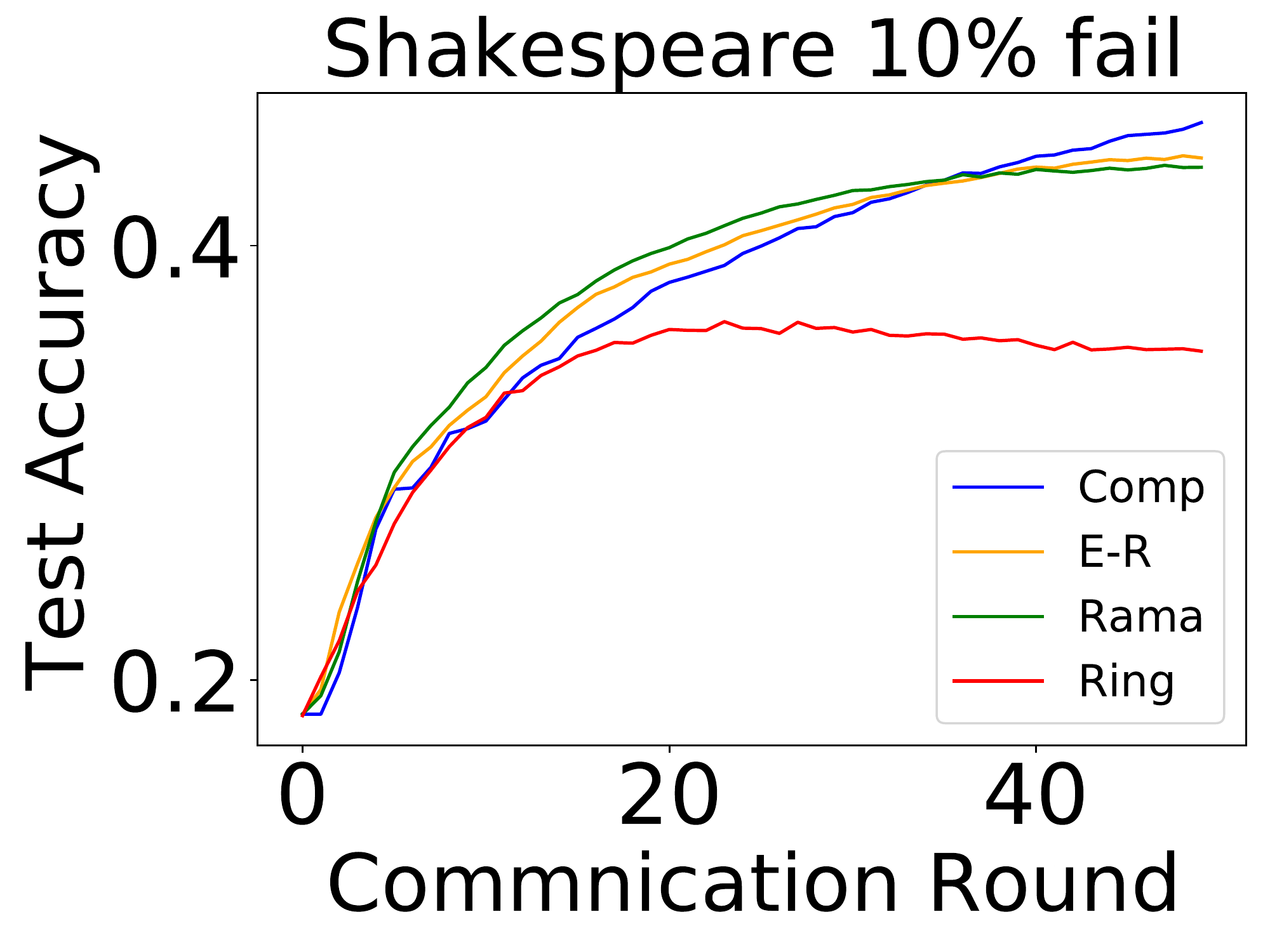}&
\hspace{-0.2cm}\includegraphics[width=0.23\linewidth]{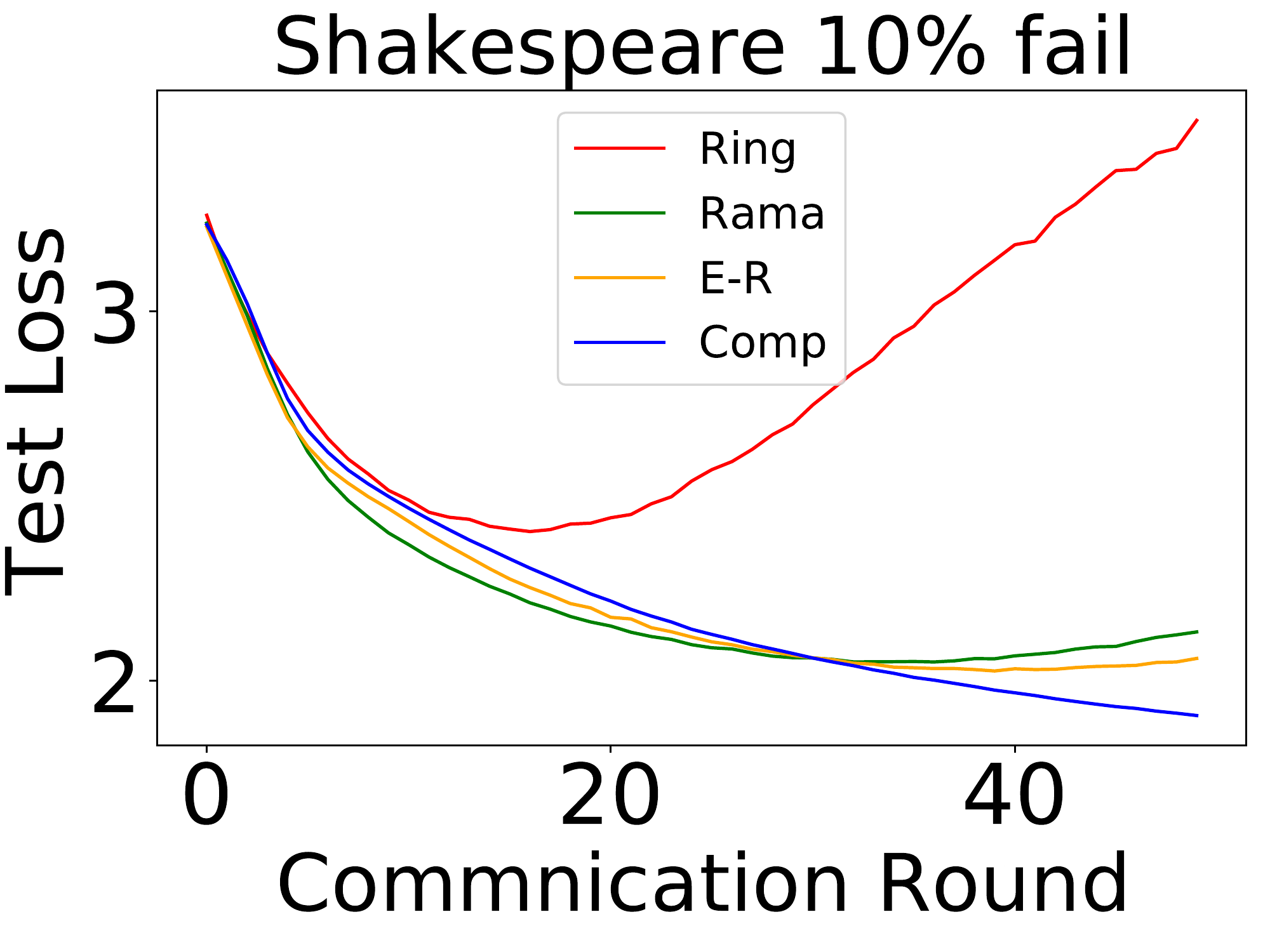}&
\hspace{-0.2cm}\includegraphics[width=0.23\linewidth]{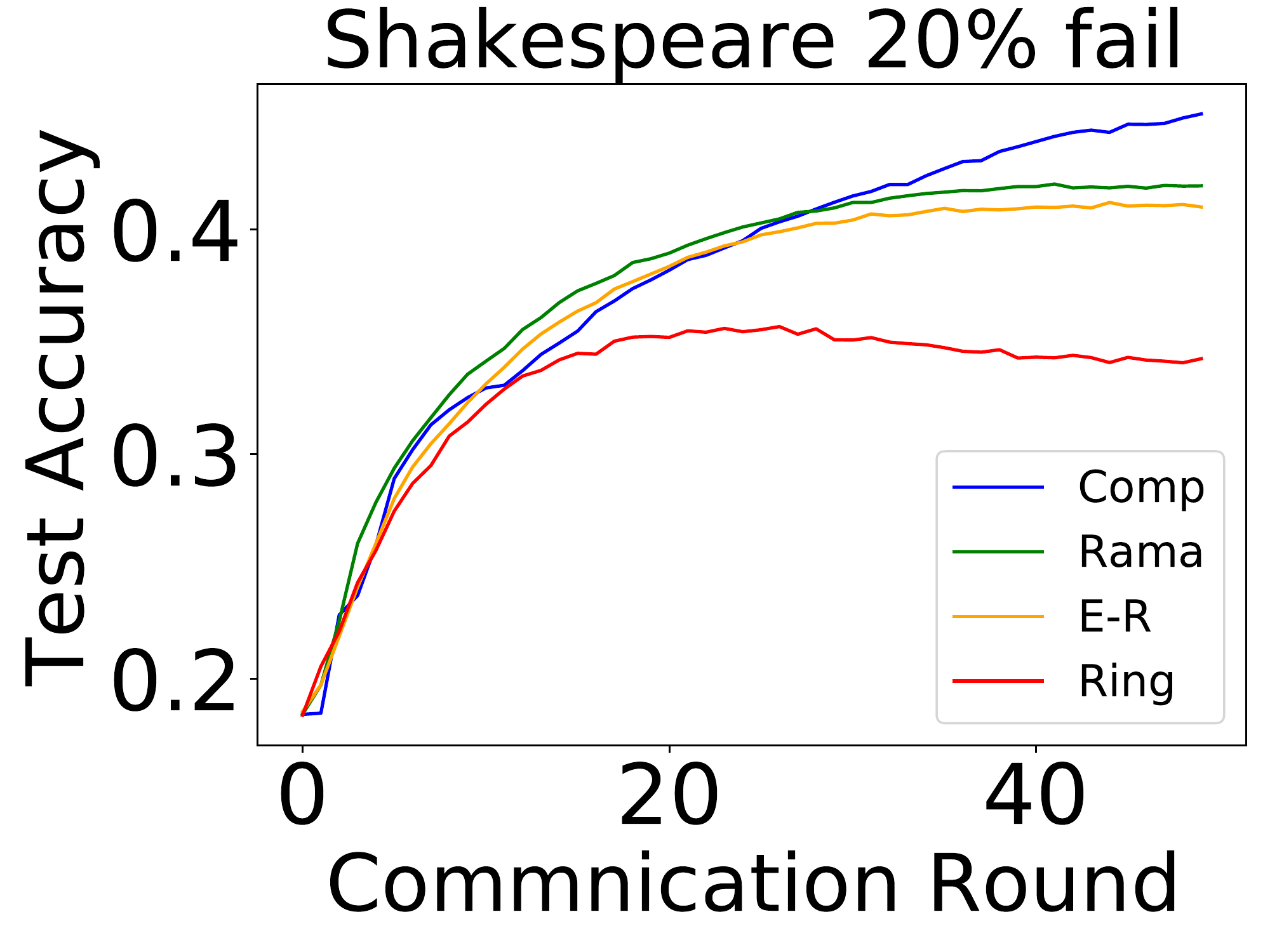}&
\hspace{-0.2cm}\includegraphics[width=0.23\linewidth]{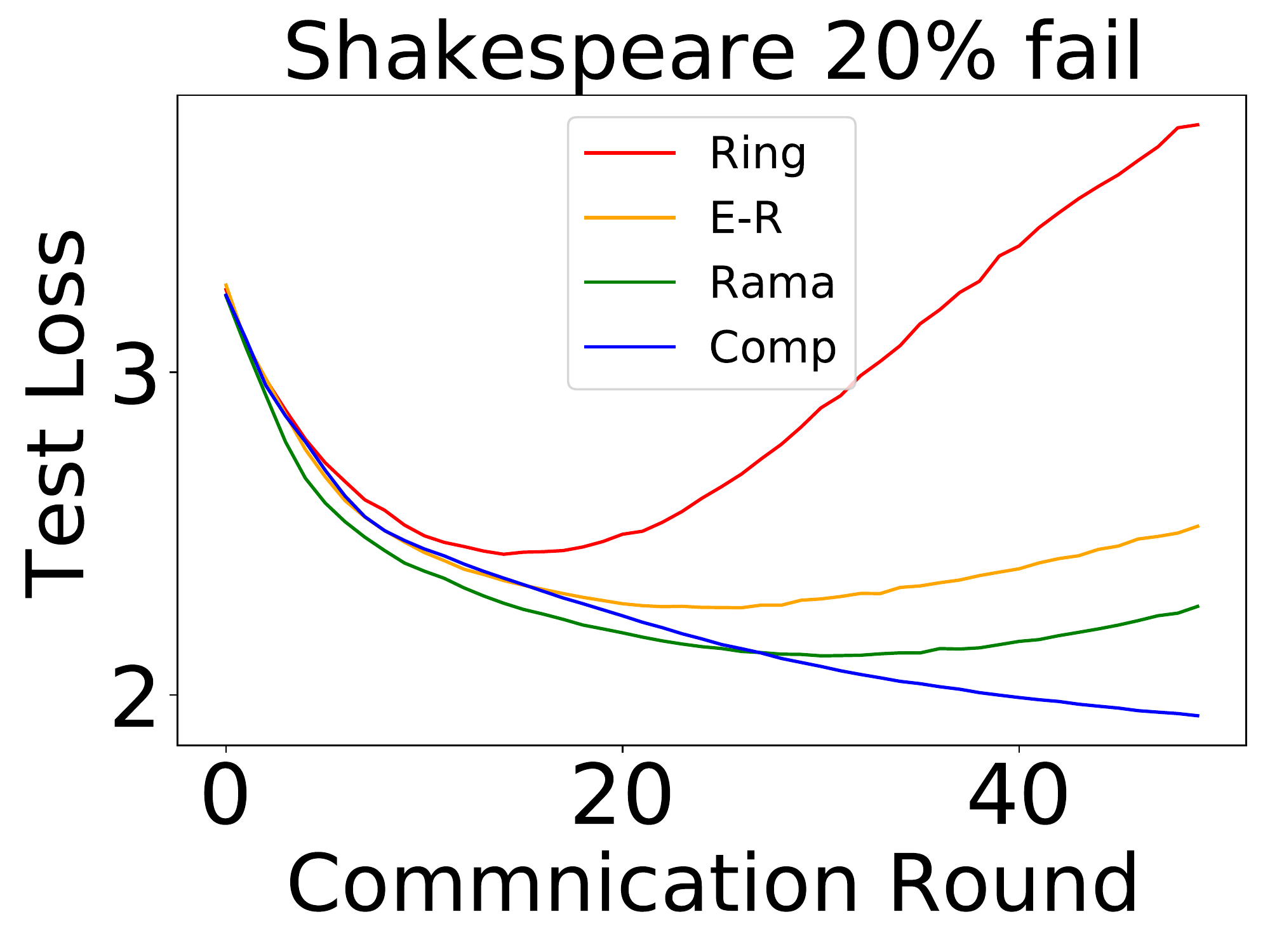}\\[-2pt]
{\footnotesize Acc 10\% failure} & {\footnotesize Loss 10\% failure} & {\footnotesize Acc 20\% failure} & {\footnotesize Loss 20\% failure}\\
\end{tabular}\vspace{-0.4cm}
\caption{The test accuracy and test loss 
the of the Ring/3-regular Ramanujan/Complete/Erd\"{o}s-R\'enyi graphs on non-IID MNIST with client failures.
}
\label{fig:shake-failure}
\end{figure}

In Fig.~\ref{fig:shake-failure}, we have a similar situation as Fig.~\ref{fig:mnist-noniid-failure}. With the weakest connectivity, the Ring graph degrades to $33.7\%$ accuracy when $20\%$ of the nodes are dropped. The clients are partitioned when multiple nodes fail in a Ring graph. The expander graph reaches $41.5\%$ of accuracy due to its merits. 
Additionally, although the Erd\"{o}s-R\'enyi graph performs slightly better than the expander graph with a $10\%$ client failures, it become worse than the expander graph with a $20\%$ client failure because of its weaker connectivity property.

\section{Concluding Remarks}\label{sec:conclusion}~
In this paper, we presented the theoretical advantages of expander graph-based overlay networks and their practical construction. We numerically verified the efficacy in accelerating training, improving generalization, and enhancing robustness to client failures of decentralized federated learning by using expander graph-based overlay networks on various benchmarks. How to establish the theoretical robustness guarantees of the expander graph-based overlay networks to the node failure is an interesting future direction.

\appendix


\section{Technical Proofs}
\begin{theorem}[General nonconvexity \cite{sun2021decentralized}, (Theorem~\ref{thm:gen-nonconvexity} restate)] \label{thm:gen-nonconvexity-restate}
Let the sequence $\{\vw_i^t\}_{t\ge 0}$ be generated by the DFedAvgM for each $i=1, 2, \ldots, N$, and suppose Assumptions~\ref{assumption:Lsmooth}-\ref{assumption:bdd-global-var} hold. Moreover, assume the constant stepsize $\eta$ satisfies $0< \eta \le 1/8LK$ and $64L^2K^2\eta^2 + 64LK\eta < 1$, where $L$ is the Lipschitz constant from Assumption~\ref{assumption:Lsmooth} and $K$ is the number of local updates before communication. Then,
{\begin{equation} 
    \min_{1 \le t \le T}\ \mbb E\|\nabla f(\bar{\vw}^t)\|^2 \le \frac{2(\bar{\vw}^1) - 2 \min f}{\gamma(K, \eta)T} + \alpha(K, \eta) + \frac{\Xi(K, \eta)}{(1-\lambda)^2},
\end{equation}}
where $T$ is the total number of communication rounds and the constants are given as 
\begin{align*}
    \gamma(K,\eta) &:= \frac{\eta(K-\beta)}{1 - \beta} - \frac{64(1 - \beta)L^2K^4\eta^3}{K-\beta} - 64LK^2\eta^2 \\
    \alpha(K, \eta) &:= \frac{\lp \frac{(1 - \beta) L^2 K^2\eta^3}{K - \beta} + L\eta^2  \rp  \lp 8K \sigma^2 + 32 K^2 \zeta^2 + \frac{64K^2 \beta^2(\sigma^2 + B^2)}{(1 - \beta)^2} \rp}{  \frac{\eta(K-\beta)}{1 - \beta} - \frac{64(1 - \beta)L^2K^4\eta^3}{K - \beta} - 64 LK^2\eta^2 } \\
    \Xi(K, \eta) &:= \lp \frac{64(1 - \beta) L^4K^4\eta^5}{K - \beta} + 64 L^3K^2\eta^4 \rp \times \\
    & \qquad \qquad \lp \frac{\lp 8K \sigma^2 + 32K^2 \zeta^2 + 32K^2B^2 + \frac{64K^2\beta^2}{(1 - \beta)^2}(\sigma^2 + B^2)  \rp}{ \lp \frac{\eta(K - \beta)}{1 - \beta} - \frac{64(1 - \beta) L^2 K^4\eta^3}{K - \beta} - 64LK^2 \eta^2  \rp}   \rp 
\end{align*}

\end{theorem}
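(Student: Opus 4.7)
The plan is to adapt the standard decentralized SGD analysis to the momentum variant DFedAvgM by tracking three coupled quantities across each communication round: the global function value at the averaged iterate $\bar{\vw}^t := \tfrac{1}{N}\sum_i \vw_i^{t,0}$, the consensus error $\sum_i \mbb E \|\vw_i^{t,0} - \bar{\vw}^t\|^2$, and the local drift $\sum_i\sum_k \mbb E\|\vw_i^{t,k} - \vw_i^{t,0}\|^2$ incurred during the $K$ local updates between communications. Since this is a restatement of a result of \cite{sun2021decentralized}, my proposal follows their template; the goal is to obtain a one-round descent inequality on $\mbb E f(\bar{\vw}^{t+1})$ in which every cross term is controlled by these three quantities plus $\|\nabla f(\bar{\vw}^t)\|^2$, then telescope.

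First I would show that $\bar{\vw}^t$ admits an SGD-with-momentum-like recursion. Because $\mM$ is symmetric and stochastic (row sums one by the null-space property in Definition~\ref{def:mixing-matrix}), averaging over $i$ of the communication step $\vw_i^{t+1,0} = \sum_\ell m_{i,\ell}\vw_\ell^{t,K}$ preserves $\bar{\vw}$; unrolling the $K$ local updates~\eqref{eq:DFedAvgM-local-update} then yields $\bar{\vw}^{t+1} - \bar{\vw}^t = -\eta \sum_{k=0}^{K-1}\tfrac{1}{N}\sum_i \nabla f_i(\vw_i^{t,k};\xi_i^{t,k}) + \beta\text{-terms}$. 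Plugging this into the L-smoothness inequality from Assumption~\ref{assumption:Lsmooth} produces a descent bound in which the bad terms are $\|\nabla f(\bar{\vw}^t) - \tfrac{1}{NK}\sum_{i,k}\nabla f_i(\vw_i^{t,k})\|^2$ (controlled by local drift and consensus error plus BGGV from Assumption~\ref{assumption:bdd-global-var}), a stochastic-noise term (controlled by BLGV from Assumption~\ref{assumption:bdd-stoch-var}), and momentum couplings that are handled by the standard transformation $\vu_i^t := \vw_i^t + \tfrac{\beta}{1-\beta}(\vw_i^t - \vw_i^{t-1})$.

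The hard part is the consensus-error recursion, which is where the graph topology enters through $\lambda$. Stacking iterates into $\mW^t \in \mbb R^{d\times N}$, the communication step is $\mW^{t+1,0} = \mW^{t,K}\mM$. Projecting onto the orthogonal complement of $\mathrm{span}\{\vone\}$ and using Definition~\ref{def:mixing-matrix} plus the definition $\lambda = \max\{|\lambda_2(\mM)|,|\lambda_N(\mM)|\}$, we get a contraction by $\lambda$ on the non-consensus component per communication round, while the $K$ local updates inject drift bounded by $\mcl O(\eta^2 K^2(\sigma^2 + \zeta^2 + B^2))$ using Assumptions~\ref{assumption:bdd-stoch-var}--\ref{assumption:bdd-grad}. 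Solving the resulting linear recursion in closed form introduces the geometric-series factor $\sum_{s\ge 0}\lambda^{2s} = (1-\lambda^2)^{-1}$, which after absorption into constants yields the $(1-\lambda)^{-2}$ dependence in the last term of~\eqref{eq:conv-bound}.

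Finally, I would combine the descent inequality with the consensus bound, requiring $\eta \le 1/(8LK)$ and $64L^2K^2\eta^2 + 64LK\eta < 1$ so that the coefficient in front of $\|\nabla f(\bar{\vw}^t)\|^2$ remains positive (this is exactly the quantity named $\gamma(K,\eta)$), telescope from $t=1$ to $T$, divide by $\gamma(K,\eta)T$, and collect the remaining constants into $\alpha(K,\eta)$ (drift and noise that decay at rate $1/T$ in the minimum over $t$) and $\Xi(K,\eta)/(1-\lambda)^2$ (the consensus contribution). The main obstacle will be the bookkeeping of the momentum terms simultaneously with the consensus recursion, since the $\beta$-coupling propagates across both local iterations and communication rounds; the cleanest route is to carry the auxiliary sequence $\vu_i^t$ through the whole argument so that the momentum contribution appears only as an additive $\beta^2(\sigma^2+B^2)/(1-\beta)^2$ term, matching the forms of $\alpha(K,\eta)$ and $\Xi(K,\eta)$.
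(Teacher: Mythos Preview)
The paper does not actually prove this theorem: it is stated in the main text as ``The following result comes from \cite{sun2021decentralized}'' and then merely \emph{restated} in the appendix with the explicit constants filled in, with no accompanying argument. So there is no in-paper proof to compare your proposal against.

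That said, your sketch is a faithful outline of the argument in \cite{sun2021decentralized}: track $\mbb E f(\bar{\vw}^t)$ via $L$-smoothness, control local drift and consensus error, use the spectral gap of $\mM$ to get the $(1-\lambda)^{-2}$ factor, and telescope. Two minor corrections. First, the theorem as stated uses only Assumptions~\ref{assumption:Lsmooth}--\ref{assumption:bdd-global-var}; the bounded-gradient Assumption~\ref{assumption:bdd-grad} is invoked only for the stability result (Theorem~\ref{thm:main}), yet $B$ appears in the constants $\alpha(K,\eta)$ and $\Xi(K,\eta)$ of the restatement, so be careful which hypotheses you actually use where. Second, your description of $\alpha(K,\eta)$ as ``decaying at rate $1/T$'' is off: $\alpha(K,\eta)$ is the residual floor that does \emph{not} vanish as $T\to\infty$ for fixed $\eta$; only the first term $2(f(\bar{\vw}^1)-\min f)/(\gamma T)$ decays in $T$. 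Otherwise the plan is sound.
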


\begin{theorem}[Uniform stability (Theorem~\ref{thm:main} restate)]
    Under Assumptions~\ref{assumption:Lsmooth}-\ref{assumption:bdd-grad}, we have that for any $T$ is the step size $\eta_t \le \frac{c}{t}$ and $c$ is small enough, then DFedAvg satisfies uniform stability with
    \[
        \epsilon_{stab} \le T^{\frac{cLK}{1 + cLK}} \lp \frac{(\sup f )K(cLK)^{\frac{1}{1 + cLK}}}{n} +  \frac{\frac{2\sigma B}{NL}}{(cLK)^{\frac{cLK}{1 + cLK}}}\rp + \frac{B(\sigma + B) \lp cK + 2C_\lambda \rp}{cLK},
    \]
    where $\sup f < \infty$ is the uniform bound on the size of the non-negative loss function.
\end{theorem}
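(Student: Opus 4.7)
The plan is to follow the Hardt-Recht-Singer uniform-stability framework for SGD, adapted to the decentralized federated setting with momentum and periodic averaging via the mixing matrix $\mM$. Let $\mcl D$ and $\mcl D'$ be two datasets differing in exactly one sample, say on client $j^\ast$, and let $\{\vw_i^{t,k}\}$ and $\{\vw_i'^{t,k}\}$ denote the coupled DFedAvgM iterates. Define $\delta_i^{t,k} := \mbb E \|\vw_i^{t,k} - \vw_i'^{t,k}\|$. Since Assumption~\ref{assumption:bdd-grad} makes each $f_i$ $B$-Lipschitz in $\vw$, it suffices to bound $\max_i \delta_i^{T,K}$ (for the per-client output, or the analogous averaged quantity); multiplying by a Lipschitz-type constant then yields the desired uniform stability bound on $f$.

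Next, I would bound the single-step expansion of $\delta_i^{t,k}$. Two subcases arise at each local step: if the sample drawn on client $j^\ast$ is not the differing one (probability $1 - 1/n$), or the step is on any other client $i \neq j^\ast$, Assumption~\ref{assumption:Lsmooth} yields a $(1 + \eta_t L)$ multiplicative expansion via the standard non-expansive gradient argument; if the differing sample is drawn (probability $1/n$), the bounded gradient norm $B$ contributes an additive $2\eta_t B$ through the triangle inequality, with Assumption~\ref{assumption:bdd-stoch-var} controlling the remaining stochastic-noise term by $\sigma$. The momentum term $\beta(\vw^{t,k} - \vw^{t,k-1})$ requires tracking the auxiliary quantity $\mbb E\|(\vw_i^{t,k} - \vw_i^{t,k-1}) - (\vw_i'^{t,k} - \vw_i'^{t,k-1})\|$ alongside $\delta_i^{t,k}$, effectively inflating the expansion by a $1/(1-\beta)$-type factor which is absorbed into the condition $\eta_t \le c/t$ with $c$ sufficiently small.

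The communication step is where the graph topology enters. Writing $\vw_i^{t+1,0} = \sum_\ell m_{i,\ell} \vw_\ell^{t,K}$ and using that $\mM$ is symmetric doubly-stochastic with $\|\mM(\vx - \bar x\mathbf{1})\|_2 \le \lambda \|\vx - \bar x\mathbf{1}\|_2$, I would decompose the per-node disagreement vector $(\delta_1^{t,0},\ldots,\delta_N^{t,0})$ into its mean component $\bar\delta^t$, which is preserved by $\mM$ and carries the ``global'' stability uniformly across all clients, and an orthogonal deviation component that is contracted by $\lambda$ per communication round. A perturbation localized at client $j^\ast$ at round $t_0$ therefore spreads across the network while being damped by $\lambda^{t - t_0}$ at round $t$; this is precisely what makes the deviation contribution summable provided $\lambda < 1$ and is what gives the graph-dependent term in \eqref{eq:stab-bound}.

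Finally, I would apply Hardt's trick: let $E$ be the event that the differing sample is not drawn in the first $t^\ast$ rounds, so that $\mbb P(E^c) \le t^\ast K/n$, and conditional on $E$ the expansion $\prod_{t=t^\ast}^{T}(1 + cLK/t)^{} $ telescopes to $(T/t^\ast)^{cLK}$. Combining with $|f(\vw)-f(\vw')| \le B\delta + (\sup f)\,\mathbf{1}_{E^c}$ and optimizing $t^\ast$ produces the $T^{cLK/(1+cLK)}$ factor and the first bracket of \eqref{eq:stab-bound}, where the $1/N$ inside the $\sigma B/(NL)$ term comes from the averaging effect of $\mM$ on the mean component. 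The constant $C_\lambda$ arises from bounding the deviation-component contribution $\sum_{t=1}^{T}\eta_t\sum_{k\ge 0}\lambda^k (\sigma + B)$: splitting the inner sum at the threshold $k_0 \sim 1/\ln(1/\lambda)$ (so that $\lambda^{k_0}$ is constant) and using the closed forms of $\sum_{k\le k_0} k\lambda^k$ and $\sum_{k>k_0}\lambda^k$ produces exactly the expression $2\lambda^2 + 4\lambda^2\ln(1/\lambda) + 2\lambda + 2/\ln(1/\lambda)$. I expect the main technical obstacle to be precisely this bookkeeping: simultaneously unrolling the local-step growth $(1+\eta_t L)$ with momentum cross-terms and the communication-step $\lambda$-contraction of the deviation component, while preserving the $1/N$ averaging gain, so as to recover the precise constant $C_\lambda$ rather than a loose $1/(1-\lambda)$ bound.
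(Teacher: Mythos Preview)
Your high-level framework (Hardt--Recht--Singer, step sizes $\eta_t\le c/t$, optimizing the first-hit round $t^\ast$) is correct, and your ``mean component'' $\bar\delta^t$ is exactly the quantity $\delta_t:=\mbb E\|\bar\vw^t-\bar\vv^t\|$ that the paper tracks. But the paper's route to the graph constant $C_\lambda$ is not the mean/deviation decomposition of the \emph{difference} vector that you propose, and your account of where $C_\lambda$ comes from has a concrete gap.

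In the paper the momentum recursion is first unrolled in closed form, giving $\vw^{t,K}_i-\vw^{t,0}_i$ as a weighted sum of stochastic gradients with weights $p_k(\theta)=1-2\theta+\theta^{K-k+1}\le(1-\theta)^2$; this is how momentum is handled, rather than by carrying an auxiliary $\|\Delta\vw-\Delta\vw'\|$ sequence. Then $\bar\vw^{t+1}-\bar\vv^{t+1}$ is split into (i) an ``averaged gradient step'' at $\bar\vw^t,\bar\vv^t$, which gives the $(1+\eta_tLK)\delta_t$ expansion and, on the bad-draw event, the additive $2\eta_t\sigma K/N$; and (ii) a drift term $A_1$ coming from the fact that local gradients are evaluated at $\vw_i^{t,k}$ rather than at $\bar\vw^t$. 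The key point is that $A_1$ is controlled by the \emph{consensus error of each run separately}, $\|\mX^{(t,0)}(\mI-\mP)\|_F$ and $\|\mY^{(t,0)}(\mI-\mP)\|_F$, not by the deviation of the difference vector. A separate lemma unrolls $\mX^{(t,0)}(\mI-\mP)$ through the powers $\mM^j-\mP$ and bounds it by $K\sqrt N(\sigma+B)\sum_{j=1}^{t}\eta_{t-j}\lambda^j$; the $(\sigma+B)$ factor thus enters via the gradient history of each run, independently of the coupling.

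The specific gap is your derivation of $C_\lambda$. The object that must be controlled is the \emph{convolution} $\sum_{j=1}^{t}\eta_{t-j}\lambda^j$ with $\eta_s\le c/s$, and one needs the uniform-in-$t$ bound $\sum_{j=1}^{t}\eta_{t-j}\lambda^j\le C_\lambda/t$. Your expression $\sum_{t=1}^{T}\eta_t\sum_{k\ge0}\lambda^k(\sigma+B)$ is a different sum (a product of separate series, which would only give a crude $1/(1-\lambda)$), and the threshold-at-$k_0\sim1/\ln(1/\lambda)$ splitting you describe does not produce the stated four-term constant. The paper instead writes $\sum_{j=1}^t\eta_{t-j}\lambda^j\le\lambda^t\sum_{j=1}^t\lambda^{-j}/j$, bounds this by $\lambda^t\int_1^t\lambda^{-x}x^{-1}\,dx$, splits the integral at $t/2$, and then uniformly bounds the resulting terms $t\lambda^t$, $t^2\lambda^t$, $t\lambda^{t/2}$ over $t\ge2$; this is precisely what yields $C_\lambda=2\lambda^2+4\lambda^2\ln(1/\lambda)+2\lambda+2/\ln(1/\lambda)$. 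Without this lemma the additive term in your recursion for $\delta_t$ is of the wrong order in $t$ and the final optimization over $t^\ast$ does not close to the stated bound.
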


\begingroup
\allowdisplaybreaks
\begin{proof}
Assume that each node $i$ has access to local datasets $\mcl D_i = \{(\vx_i^\ell, y_i^\ell)\}_{\ell=1}^{n_i}$ of size $n_i=n$, and denote be $\mcl D = \cup_{i=1}^N \mcl D_i$ be the set of $Nn$ datapoints over the whole graph. Assume then that the datasets $\mcl D, \tilde{\mcl D}$ differ by only one point; that is, there exists exactly one $i^\ast \in \{1, \ldots, N\}$ such $\mcl D_i$ and $\tilde{\mcl D}_i$ differ in exactly one point. Define the random variables 
\[
    \xi^{t,k}_i \sim Unif(\mcl D_i),
\]
where $\{\xi^{t,k}_i\}_{k=1}^K$ are sampled IID (with replacement). We denote the collection of random variables sampled from $\mcl D$ at all $N$ nodes in the graph as $\Xi^{(t,k)} := \{\xi^{t,k}_i\}_{i=1}^N$. Likewise, define $\tilde{\Xi}^{(t,k)} = \{\xi^{t,k}_i\}_{i=1}^N$ to be the collection of samples from $\tilde{\mcl D}$ at all $N$ nodes in the graph.

Now define $\bar{\vw}^t, \bar{\vv}^t$ to be the averages generated by DFedAvgM with training data $\mcl D, \tilde{\mcl D}$, respectively; that is, 
\[
    \bar{\vw}^t = \frac{1}{N}\sum_{i=1}^N \vw^{t,0}_i, \quad \bar{\vv}^t = \frac{1}{N}\sum_{i=1}^N  \vv^{t,0}_i.
\]
Further, define the matrices 
\[
    \mX^{(t,k)} := [\vw^{t,k}_1 \ \vw^{t,k}_2 \ \ldots \  \vw^{t,k}_N], \quad \mY^{(t,k)} := [\vv^{t,k}_1 \ \vv^{t,k}_2 \ \ldots \ \vv^{t,k}_N]
\]
and the gradient matrices
\begin{align*}
    \mG^{(t,k)}\lp \mX^{(t,k)}; \Xi^{(t,k)} \rp &:= [\nabla f_1(\vw^{t,k}_1; \xi^{t,k}_1) \  \nabla f_2(\vw^{t,k}_2; \xi^{t,k}_2) \  \ldots \ \nabla f_N(\vw^{t,k}_N; \xi^{t,k}_N)], \\
    \mG^{(t,k)}\lp \mY^{(t,k)}; \tilde{\Xi}^{(t,k)} \rp &:= [\nabla f_1(\vv^{t,k}_1; \tilde{\xi}^{t,k}_1) \ \nabla f_2(\vv^{t,k}_2; \tilde{\xi}^{t,k}_2) \ \ldots \  \nabla f_N(\vv^{t,k}_N; \tilde{\xi}^{t,k}_N)  ].
\end{align*}
We have that by definition of the DFedAvgM iterations
\begin{align*}
    &\qquad \vw^{t,k+1}_i - \vw^{t,k}_i \\
    &= -\eta_t \nabla f_i(\vw^{t,k}_i; \xi^{t,k}_i) + \theta(\vw^{t,k}_i - \vw^{t,k-1}_i) \\
    &= -\eta_t \nabla f_i(\vw^{t,k}_i; \xi^{t,k}_i) + \theta \lp -\eta_t \nabla f_i(\vw^{t,k-1}_i; \xi^{t,k-1}_i) + \theta(\vw^{t,k-1}_i - \vw^{t,k-2}_i) \rp \\
    &= \ldots \\
    &= -\eta_t \lp \sum_{s=0}^{k} \theta^{k-s}\nabla f_i(\vw^{t,s}_i; \xi^{t,s}_i) \rp
\end{align*}
and that
\begin{align} \label{eq:local-iter-sum}
    \vw^{t,k+1}_i - \vw^{t,k}_i &= -\eta_t \nabla f_i(\vw^{t,k}_i; \xi^{t,k}_i) + \theta(\vw^{t,k}_i - \vw^{t,k-1}_i) \nonumber\\
    \implies \vw^{t, K}_i - \vw^{t,0}_i &= \sum_{k=0}^{K-1} -\eta_t \nabla f_i(\vw^{t,k}_i; \xi^{t,k}_i) + \theta(\vw^{t,k}_i - \vw^{t,k-1}_i) \nonumber\\
    &= \sum_{k=0}^{K-1} -\eta_t \nabla f_i(\vw^{t,k}_i; \xi^{t,k}_i) + \theta(\vw^{t,K-1}_i - \vw^{t,0}_i)\nonumber \\
    \implies \vw^{t, K}_i - \vw^{t,0}_i &= \frac{-\eta_t}{1-\theta} \sum_{k=0}^{K-1} \nabla f_i(\vw^{t,k}_i; \xi^{t,k}_i) - \frac{\theta}{1-\theta}(\vw^{t,K}_i - \vw^{t,K-1}_i) \nonumber\\
    &= \frac{-\eta_t}{1-\theta} \sum_{k=0}^{K-1} \nabla f_i(\vw^{t,k}_i; \xi^{t,k}_i) - \frac{-\eta_t\theta}{1-\theta}\sum_{k=0}^{K-1} \sum_{s=0}^{k} \theta^{k-s} \nabla f_i(\vw^{t,s}_i; \xi^{t,s}_i) \nonumber\\
    &= \frac{-\eta_t}{1-\theta} \lp \sum_{k=0}^{K-1} \nabla f_i(\vw^{t,k}_i; \xi^{t,k}_i) - \theta\sum_{k=0}^{K-1} \nabla f_i(\vw^{t,k}_i; \xi^{t,k}_i) \sum_{s=0}^{K-k-1} \theta^s \rp \nonumber\\
    &= \frac{-\eta_t}{1-\theta} \sum_{k=0}^{K-1} \lp 1 - \frac{\theta - \theta^{K-k+1}}{1-\theta} \rp \nabla f_i(\vw^{t,k}_i; \xi^{t,k}_i) \nonumber\\
    &= \frac{\eta_t}{(1-\theta)^2} \sum_{k=0}^{K-1} (1 - 2\theta + \theta^{K-k+1})\nabla f_i(\vw^{t,k}_i; \xi^{t,k}_i).
\end{align}
Then by \ref{eq:local-iter-sum} we can write
\begin{align*}
    \mX^{(t,K)} - \mX^{(t,0)} &= \frac{\eta_t }{(1-\theta)^2} \sum_{k=0}^{K-1} p_k(\theta) \mG^{(t,k)}(\mX^{(t,k)}; \Xi^{(t,k)}),  \\
    \mY^{(t,K)} - \mY^{(t,0)} &= \frac{\eta_t }{(1-\theta)^2} \sum_{k=0}^{K-1} p_k(\theta)\mG^{(t,k)}(\mY^{(t,k)}; \tilde{\Xi}^{(t,k)}),
\end{align*}
where we have defined $p_k(\theta) = 1 - 2\theta + \theta^{K - k + 1}$.

Letting $\mathbbm{1} \in \mbb R^N$ denote the vector of all ones, then the mixing matrix $W$ satisfies $W \frac{\mathbbm{1}}{N} = \frac{\mathbbm{1}}{N}$. Then we have with probability $\lp\frac{n-1}{n}\rp^K$ the random variables $\{\Xi^{(t,k)}\}_{k=1}^K = \{\tilde{\Xi}^{(t,k)}\}_{k=1}^K$ are exactly the same:
{\footnotesize
\begin{align} \label{eq:xminusy}
    &\qquad \bar{\vw}^{t+1} - \bar{\vv}^{t+1} \\
    &= \mX^{(t,K)}W\frac{\mathbbm{1}}{N} - \mY^{(t,K)}W\frac{\mathbbm{1}}{N} \nonumber \\
    &= \lp \mX^{(t,0)} + \lp \mX^{(t,K)} - \mX^{(t,0)}\rp \rp \frac{\mathbbm{1}}{N} - \lp \mY^{(t,0)} + \lp \mY^{(t,K)} - \mY^{(t,0)}\rp \rp \frac{\mathbbm{1}}{N} \nonumber \\
    &= \lp \mX^{(t,0)} - \mY^{(t,0)} + \frac{\eta_t }{(1-\theta)^2}\lp \sum_{k=0}^{K-1} p_k(\theta)\mG^{(t,k)}(\mX^{(t,k)}; \Xi^{(t,k)}) -  \sum_{k=0}^{K-1} p_k(\theta)\mG^{(t,k)}(\mY^{(t,k)}; \Xi^{(t,k)})\rp \rp \frac{\mathbbm{1}}{N} \nonumber\\
    &= \lp\lp \mX^{(t,0)} - \mY^{(t,0)}\rp (\mI - \mP) + \lp \mX^{(t,0)} - \mY^{(t,0)}\rp P \rp\frac{\mathbbm{1}}{N} \nonumber\\
    & \quad + \frac{\eta_t }{(1-\theta)^2}\lp \sum_{k=0}^{K-1} p_k(\theta) \left[ \mG^{(t,k)}(\mX^{(t,k)}; \Xi^{(t,k)})  - \mG^{(t,k)}(\bar{\vw}^t\mathbbm{1}^T; \Xi^{(t,k)}) + \mG^{(t,k)}(\bar{\vw}^t\mathbbm{1}^T; \Xi^{(t,k)})\right] \rp   \frac{\mathbbm{1}}{N} \nonumber \\
    & \qquad  -\frac{\eta_t }{(1-\theta)^2}\lp \sum_{k=0}^{K-1} p_k(\theta) \left[ \mG^{(t,k)}(\mY^{(t,k)}; \Xi^{(t,k)})  - \mG^{(t,k)}(\bar{\vv}^t\mathbbm{1}^T; \Xi^{(t,k)}) + \mG^{(t,k)}(\bar{\vv}^t\mathbbm{1}^T; \Xi^{(t,k)})\right] \rp \frac{\mathbbm{1}}{N} \nonumber \\
    &= \underbrace{\frac{\eta_t }{N(1-\theta)^2} \sum_{i=1}^N \sum_{k=0}^{K-1}p_k(\theta)\left[ \lp \nabla f_i(\bar{\vw}^t; \xi^{t,k}_i) - \nabla f_i(\vw^{t,k}_i; \xi^{t,k}_i)\rp - \lp \nabla f_i(\bar{\vv}^t; \xi^{t,k}_i) - \nabla f_i(\vv^{t,k}_i; \xi^{t,k}_i)\rp\right]}_{=: A_1} \nonumber\\
    &\qquad + \frac{1}{N} \sum_{i=1}^N \lp \bar{\vw}^t - \frac{\eta_t }{(1-\theta)^2} \sum_{k=0}^{K-1} p_k(\theta) \nabla f_i (\bar{\vw}^t; \xi^{t,k}_i)\rp - \lp \bar{\vv}^t - \frac{\eta_t }{(1-\theta)^2} \sum_{k=0}^{K-1} p_k(\theta) \nabla f_i (\bar{\vv}^t; \xi^{t,k}_i)\rp,
\end{align}}
where we note that $(\mI - \mP)\frac{\mathbbm{1}}{N} = 0$. Now, we have that since $\theta \in [0, 1)$, then $|p_k(\theta)| = p_k(\theta) \le p_{K-1}(\theta) = (1 - \theta)^2$ for each $k=0, 1, \ldots, K-1$. This means we can calculate
\begin{align*}
    &\qquad \|A_1\|\\
    &\le \frac{\eta_t}{N} \sum_{i=1}^N \sum_{k=0}^{K-1} \| \nabla f_i(\bar{\vw}^t; \xi^{t,k}_i) - \nabla f_i(\vw^{t,k}_i; \xi^{t,k}_i)\|  +  \| \nabla f_i(\bar{\vv}^t; \xi^{t,k}_i) - \nabla f_i(\vv^{t,0}_i; \xi^{t,k}_i)\| \\
    &\le  \frac{\eta_tL}{N} \sum_{i=1}^N \sum_{k=0}^{K-1} \| \bar{\vw}^t -\vw^{t,k}_i\|  +  \| \bar{\vv}^t- \vv^{t,k}_i\| \\
    &\le \frac{\eta_tL}{N} \sum_{k=0}^{K-1} \sum_{i=1}^N  \lp\| \bar{\vw}^t -\vw^{t,0}_i\|  +  \| \bar{\vv}^t- \vv^{t,0}_i\| \rp \\
    &\qquad + \frac{\eta_tL}{N}\sum_{k=1}^{K-1} \sum_{i=1}^N  \lp \|\vw^{t,0}_i -\vw^{t,k}_i\|  +  \| \vv^{t,0}_i- \vv^{t,k}_i\| \rp \\
    &\le \frac{\eta_tL}{\sqrt{N}} \sum_{k=0}^{K-1} \lp\| \mX^{(t,0)}(\mI - \mP)\|_F +  \| \mY^{(t,0)}(\mI - \mP)\|_F \rp \\
    &\qquad \qquad  + \frac{\eta_t^2L}{N(1 - \theta)^2}  \sum_{i=1}^N \sum_{k=1}^{K-1}  \sum_{s=0}^{k-1}  p_s(\theta) \lp \|\nabla f_i(\vw^{t,s}_i; \xi^{t,s}_i)\|  +  \| \nabla f_i(\vv^{t,s}_i; \xi^{t,s}_i)\| \rp \\
    &\le \frac{\eta_tL}{\sqrt{N}} \sum_{k=0}^{K-1} \lp\| \mX^{(t,0)}(\mI - \mP)\|_F +  \| \mY^{(t,0)}(\mI - \mP)\|_F \rp \\
    &\qquad \qquad  + \frac{\eta_t^2L}{N}  \sum_{i=1}^N \sum_{k=1}^{K-1}  \sum_{s=0}^{k-1} \lp \|\nabla f_i(\vw^{t,s}_i; \xi^{t,s}_i)\|  +  \| \nabla f_i(\vv^{t,s}_i; \xi^{t,s}_i)\| \rp.
\end{align*}

Then, we have
\begin{align*}
    \frac{1}{N}\sum_{i=1}^N \|\nabla f_i(\vw^{t,s}_i; \xi^{t,s}_i)\| &\le \frac{1}{N} \sum_{i=1}^N \|\nabla f_i(\vw^{t,s}_i; \xi^{t,s}_i) - \nabla f_i(\vw^{t,s}_i)\| + \|\nabla f_i(\vw^{t,s}_i)\| \\
    &\le \frac{1}{\sqrt{N}} \lp \sum_{i=1}^N \|\nabla f_i(\vw^{t,s}_i; \xi^{t,s}_i) - \nabla f_i(\vw^{t,s}_i)\|^2 \rp^{\frac{1}{2}} + B \\
    \implies \mbb E \frac{1}{N}\sum_{i=1}^N \|\nabla f_i(\vw^{t,s}_i; \xi^{t,s}_i)\| &\le \frac{1}{\sqrt{N}} \lp \sum_{i=1}^N \mbb E \|\nabla f_i(\vw^{t,s}_i; \xi^{t,s}_i) - \nabla f_i(\vw^{t,s}_i)\|^2 \rp^{\frac{1}{2}} + B \\
    &\le \frac{1}{\sqrt{N}} \lp N \sigma^2\rp^{\frac{1}{2}} + B = \sigma + B,
\end{align*}
so that with applying the \Cref{lemma:for-stability} 
\begin{align*}
    \|A_1\| &\le 2\eta_tLK(\sigma + B)\lp \sum_{j=1}^{t} \eta_{t-j}\lambda^j\rp + 2\eta_t^2L (\sigma + B) 
    \sum_{k=1}^{K-1} k \\
    &= 2\eta_tLK(\sigma + B)\lp \sum_{j=1}^{t} \eta_{t-j}\lambda^j\rp + \eta_t^2L(\sigma + B) K(K-1) \\
    &\le \eta_t LK(\sigma + B) \lp 2 \sum_{j=1}^{t-1} \eta_{t-j}\lambda^j + \eta_t K\rp.
\end{align*}
Now, noticing that with each $f_i$ being L-smooth, we can calculate 
\begin{align*}
    &\left\| \bar{\vw}^t - \frac{\eta_t }{(1-\theta)^2} \sum_{k=0}^{K-1} p_k(\theta) \nabla f_i (\bar{\vw}^t; \xi^{t,k}_i) - \lp \bar{\vv}^t - \frac{\eta_t }{(1-\theta)^2} \sum_{k=0}^{K-1} p_k(\theta) \nabla f_i (\bar{\vv}^t; \xi^{t,k}_i)\rp\right\| \\
    & \qquad \qquad \qquad \qquad\le \|\bar{\vw}^t - \bar{\vv}^t\| + \eta_t \sum_{k=0}^{K-1}\left\| \nabla f_i (\bar{\vw}^t; \xi^{t,k}_i) - \nabla f_i (\bar{\vv}^t; \xi^{t,k}_i) \right\|  \\
    & \qquad \qquad \qquad \qquad \le ( 1+ \eta_t LK) \|\bar{\vw}^t - \bar{\vv}^t\|.
\end{align*}
Plugging everything into (\ref{eq:xminusy})
\begin{align*}
    \mbb E \|\bar{\vw}^{t+1} - \bar{\vv}^{t+1}\| &\le (1 + \eta_t LK) \|\bar{\vw}^t - \bar{\vv}^t\| + \eta_t LK(\sigma + B) \lp 2 \sum_{j=1}^{t-1} \eta_{t-j}\lambda^j + \eta_t K\rp.
\end{align*}

Now, with probability $1 - \lp \frac{n-1}{n}\rp^K$, we have that the random variables $\{\tilde{\Xi}^{(t,k)}\}_{k=1}^K$ might be different from  $\{\Xi^{(t,k)}\}_{k=1}^K$ {\it in the draws from node $i^\ast$}. We calculate, similarly to the previous case,
{\footnotesize \begin{align*}
    &\qquad \bar{\vw}^{t+1} - \bar{\vv}^{t+1}\\
    &= \frac{\eta_t }{N(1-\theta)^2} \sum_{i=1}^N \sum_{k=0}^{K-1}p_k(\theta)\left[ \lp \nabla f_i(\bar{\vw}^t; \xi^{t,k}_i) - \nabla f_i(\vw^{t,k}_i; \xi^{t,k}_i)\rp - \lp \nabla f_i(\bar{\vv}^t; \tilde{\xi}^{t,k}_i) - \nabla f_i(\vv^{t,k}_i; \tilde{\xi}^{t,k}_i)\rp\right] \nonumber\\
    &\qquad + \frac{1}{N} \sum_{i=1}^N \lp \bar{\vw}^t - \frac{\eta_t }{(1-\theta)^2} \sum_{k=0}^{K-1} p_k(\theta) \nabla f_i (\bar{\vw}^t; \xi^{t,k}_i)\rp - \lp \bar{\vv}^t - \frac{\eta_t }{(1-\theta)^2} \sum_{k=0}^{K-1} p_k(\theta) \nabla f_i (\bar{\vv}^t; \tilde{\xi}^{t,k}_i)\rp
\end{align*}}
which allows us to conclude by performing the same calculations we did on $A_1$
{\footnotesize \begin{align}
    \implies &\mbb E \|\bar{\vw}^{t+1} - \bar{\vv}^{t+1} \| \le \eta_t LK(\sigma + B) \lp 2 \sum_{j=1}^{t-1} \eta_{t-j}\lambda^j + \eta_t K\rp \\
    &+ \mbb E \left\|\underbrace{\frac{1}{N} \sum_{i=1}^N \lp \bar{\vw}^t - \frac{\eta_t }{(1-\theta)^2} \sum_{k=0}^{K-1} p_k(\theta) \nabla f_i (\bar{\vw}^t; \xi^{t,k}_i)\rp - \lp \bar{\vv}^t - \frac{\eta_t }{(1-\theta)^2} \sum_{k=0}^{K-1} p_k(\theta) \nabla f_i (\bar{\vv}^t; \tilde{\xi}^{t,k}_i)\rp}_{=:A_2} \right\|.
\end{align}}

Now, turning our attention to the term $A_2$, we can calculate 
{\footnotesize\begin{align*}
    A_2 &= \frac{1}{N}\sum_{i=1}^N \left[ \bar{\vw}^t - \bar{\vv}^t + \frac{\eta_t }{(1-\theta)^2} \sum_{k=0}^{K-1} p_k(\theta)\lp \nabla f_i (\bar{\vw}^t; \xi^{t,k}_i)  - \nabla f_i (\bar{\vv}^t; \tilde{\xi}^{t,k}_i)\rp \right] \\
    &= \bar{\vw}^t - \bar{\vv}^t + \frac{\eta_t}{N(1-\theta)^2} \sum_{i \not= i^\ast} \sum_{k=0}^{K-1} p_k(\theta)\lp \nabla f_i (\bar{\vw}^t; \xi^{t,k}_i)  - \nabla f_i (\bar{\vv}^t; \xi^{t,k}_i)\rp \\
    &\qquad\qquad + \frac{\eta_t}{N(1-\theta)^2} \sum_{k=0}^{K-1} p_k(\theta)\lp \nabla f_{i^\ast} (\bar{\vw}^t; \xi^{t,k}_{i^\ast})  - \nabla f_{i^\ast} (\bar{\vv}^t; \tilde{\xi}^{t,k}_{i^\ast})\rp \\
    \implies \|A_2\| &\le \|\bar{\vw}^t - \bar{\vv}^t\| + \frac{\eta_t LK (m-1)}{N} \|\bar{\vw}^t - \bar{\vv}^t\| +\frac{\eta_t}{N} \sum_{k=0}^{K-1} \|\nabla f_{i^\ast} (\bar{\vw}^t; \xi^{t,k}_{i^\ast})  - \nabla f_{i^\ast} (\bar{\vw}^t)\|\\
    &\qquad \qquad + \frac{\eta_t}{N} \sum_{k=0}^{K-1} \|\nabla f_{i^\ast} (\bar{\vw}^t)  - \nabla f_{i^\ast} (\bar{\vv}^t)\| + \|\nabla f_{i^\ast} (\bar{\vv}^t)  - \nabla f_{i^\ast} (\bar{\vv}^t; \tilde{\xi}^{t,k}_{i^\ast})\| \\
    \implies \mbb E\|A_2\| &\le \lp 1 + \frac{\eta_t LK (m-1)}{N}\rp \mbb E \|\bar{\vw}^t - \bar{\vv}^t\| + \frac{\eta_t K}{N}\lp 2\sigma +  L \mbb E \|\bar{\vw}^t - \bar{\vv}^t\| \rp \\
    &= \lp 1 + \eta_t LK \rp \mbb E \|\bar{\vw}^t - \bar{\vv}^t\| + \frac{2\eta_t \sigma K}{N},
\end{align*}}
where in the second to last line we have used the fact that by Jensen's inequality
\[
    \mbb E \lp \|\bz\|^2 \rp^{\frac{1}{2}} \le \lp \mbb E \|\bz\|^2 \rp^{\frac{1}{2}}
\]
combined with the assumption of bounded variance of stochastic gradients.

Recalling the definition $\delta_t := \|\bar{\vw}^t - \bar{\vv}^t\|$, then we can combine both cases to obtain
\begin{align*}
    &\qquad \mbb E (\delta_{t+1} | \delta_{t_0} = 0)\\
    &\le \lp \frac{n-1}{n}\rp^K \lp (1 + \eta_t LK) \mbb E (\delta_{t} | \delta_{t_0} = 0) + \eta_t LK(\sigma + B) \lp 2 \sum_{j=1}^{t-1} \eta_{t-j}\lambda^j + \eta_t K\rp \rp \\
    & + \lp 1 - \lp\frac{n-1}{n}\rp^K  \rp  \lp \eta_t LK(\sigma + B) \lp 2 \sum_{j=1}^{t-1} \eta_{t-j}\lambda^j + \eta_t (K+1)\rp \rp \\ 
    &\qquad + \lp 1 - \lp\frac{n-1}{n}\rp^K  \rp  \lp \lp 1 + \eta_t LK \rp \mbb E (\delta_{t} | \delta_{t_0} = 0) + \frac{2\eta_t \sigma K}{N}  \rp  \\
    &=\lp 1 + \eta_t LK \rp \mbb E (\delta_{t} | \delta_{t_0} = 0) + \eta_t LK(\sigma + B) \lp 2 \sum_{j=1}^{t-1} \eta_{t-j}\lambda^j + \eta_t K\rp \\
    &\qquad + \lp 1 - \lp\frac{n-1}{n}\rp^K  \rp \frac{2\eta_t \sigma K}{N}.
\end{align*}

With a similar result to bound the sum $\sum_{j=1}^{t-1} \eta_{t-j} \lambda^j$ to that of \cite{sun2021decentralized}, if we set
\[
    \eta_t \le \frac{c}{t},
\]
then we should be able to calculate

\begin{align*}
    \mbb E (\delta_{t+1} | \delta_{t_0} = 0) &\le \lp 1 + \frac{cLK}{t} \rp \mbb E (\delta_{t} | \delta_{t_0} = 0) +  \frac{cLK}{t}(\sigma + B) \lp 2\frac{C_\lambda}{t} +  \frac{cK}{t} \rp \\
    &\qquad + \lp 1 - \lp\frac{n-1}{n}\rp^K  \rp \frac{2c \sigma K}{Nt} \\
    &\le  \lp 1 + \frac{cLK}{t} \rp \mbb E (\delta_{t} | \delta_{t_0} = 0)  + \underbrace{ \frac{2cK\sigma }{N} }_{=: C_1} \frac{1}{t} +  \underbrace{cLK(\sigma + B) \lp cK + 2C_\lambda \rp}_{=: C_2}\frac{1}{t^2}\\
    &\le \exp\lp \frac{cLK}{t} \rp\mbb E (\delta_{t} | \delta_{t_0} = 0) + \frac{C_1}{t} + \frac{C_2}{t^2}.
\end{align*}
Unraveling this recursion, we obtain
\begin{align*}
    \mbb E (\delta_{T} | \delta_{t_0} = 0) &\le \sum_{t=t_0+1}^T   \exp\lp cLK \sum_{k=t+1}^T \frac{1}{k} \rp  \lp \frac{C_1}{t} + \frac{C_2}{t^2}  \rp \\
    &\le \sum_{t=t_0+1}^T   \exp\lp cLK \ln\frac{T}{t} \rp  \lp \frac{C_1}{t} + \frac{C_2}{t^2}  \rp \\
    &= T^{cLK} \lp \sum_{t=t_0+1}^T \frac{1}{t^{cLK+1}}\lp C_1 + \frac{C_2}{t} \rp \rp  \\
    &\le T^{cLK}\lp \frac{C_1}{cLK t_0^{cLK}}  + \frac{C_2}{(cLK + 1)t_0^{cLK + 1}}\rp \\
    &\le \lp\frac{T}{t_0}\rp^{cLK}\frac{1}{cLK}\lp C_1 + \frac{C_2}{t_0}\rp.
\end{align*}
Plugging in the definitions of $C_1, C_2$ we get
\begin{align*}
    \mbb E (\delta_{T} | \delta_{t_0} = 0) &\le \lp\frac{T}{t_0}\rp^{cLK}\frac{1}{cLK}\left[ \frac{2cK\sigma }{N}  + \frac{cLK(\sigma + B) \lp cK + 2C_\lambda \rp}{t_0}\right] \\
    &= \lp\frac{T}{t_0}\rp^{cLK}\left[ \frac{2\sigma}{NL} + \frac{(\sigma + B) \lp cK + 2C_\lambda \rp}{t_0} \right],
\end{align*}
which gives by \Cref{lemma:hardt}
\begin{align*}
    \mbb E | f(\bar{\vw}^T; \Xi) - f(\bar{\vv}^T; \Xi)| &\le t_0 (\sup f ) \lp 1 - \lp \frac{n-1}{n}\rp^K \rp \\
    &\qquad + B \lp\frac{T}{t_0}\rp^{cLK}\left[ \frac{2\sigma}{NL} + \frac{(\sigma + B) \lp cK + 2C_\lambda \rp}{t_0} \right] \\
    &\le \frac{t_0K}{n} (\sup f ) + \lp\frac{T}{t_0}\rp^{cLK}\left[ \frac{2\sigma B}{NL} + \frac{B(\sigma + B) \lp cK + 2C_\lambda \rp}{t_0} \right].
\end{align*}

The right hand side is approximately minimized if we choose 
$$t_0 = T^{\frac{cLK}{1 + cLK}} (cLK)^{\frac{1}{1 + cLK}},$$
which we can ensure is less than $n$ for $c$ sufficiently small. We then can calculate
\begin{align*}
    \mbb E | f(\bar{\vw}^T; \Xi) - f(\bar{\vv}^T; \Xi)| &\le  \frac{(\sup f )K(cLK)^{\frac{1}{1 + cLK}}}{n} T^{\frac{cLK}{1 + cLK}}   + \frac{\frac{2\sigma B}{NL}}{(cLK)^{\frac{cLK}{1 + cLK}}} T^{\frac{cLK}{1 + cLK}}  \\
    &\qquad \qquad +  \frac{B(\sigma + B) \lp cK + 2C_\lambda \rp T^{cLK}}{\lp(cLK)^{\frac{1}{1 + cLK}}T^{\frac{cLK}{1+cLK}}\rp^{cLK+1}} \\
    &= T^{\frac{cLK}{1 + cLK}} \lp \frac{(\sup f )K(cLK)^{\frac{1}{1 + cLK}}}{n} +  \frac{\frac{2\sigma B}{NL}}{(cLK)^{\frac{cLK}{1 + cLK}}}\rp \\
    &\qquad + \frac{B(\sigma + B) \lp cK + 2C_\lambda \rp}{cLK},
\end{align*}
as desired.
\end{proof}
\endgroup

\begin{lemma}\label{lemma:for-stability}
    Under Assumptions~\ref{assumption:Lsmooth}-\ref{assumption:bdd-grad} and on the mixing matrix $W$, we have that 
    \[
         \| \mX^{(t,0)}(\mI - \mP)\|_F \le K \sqrt{N}(\sigma + B) \sum_{j=1}^{t-1} \eta_{t-j}\lambda^j,
    \]
    where $\sigma, B$ and $\lambda$ are constants from our assumptions and $K$ is the number of local updates performed before aggregation via the graph topology.
\end{lemma}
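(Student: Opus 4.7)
The plan is to set up a one-step recursion for the consensus-error matrix $\mX^{(t,0)}(\mI - \mP)$, where $\mP := \frac{1}{N}\vone\vone^\top$ is the rank-one averaging projector, and then to unroll it. The spectral property that drives everything is that since $W\vone = \vone$ we have $\mP W = W\mP = \mP$, so the matrix $W - \mP$ has eigenvalues $0, \lambda_2(W), \ldots, \lambda_N(W)$; in particular $\mP(W - \mP) = 0$, and $\|(W - \mP)^k\|_{\mathrm{op}} \le \lambda^k$, where $\lambda = \max\{|\lambda_2(W)|, |\lambda_N(W)|\}$.

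\textbf{Recursion and unrolling.} Writing one communication round as $\mX^{(t+1,0)} = \mX^{(t,K)} W$ and using \eqref{eq:local-iter-sum} to express $\mX^{(t,K)} = \mX^{(t,0)} + \eta_t \Phi^{(t)}$, where $\Phi^{(t)} = \frac{1}{(1-\beta)^2}\sum_{k=0}^{K-1} p_k(\beta)\mG^{(t,k)}$ is the momentum-weighted sum of the local stochastic-gradient matrices, I multiply on the right by $(\mI - \mP)$. Using $W(\mI - \mP) = W - \mP$ together with $\mP(W - \mP) = 0$ collapses this to the one-step recursion
\begin{equation*}
\mX^{(t+1,0)}(\mI - \mP) \;=\; \mX^{(t,0)}(\mI - \mP)(W - \mP) \;+\; \eta_t \Phi^{(t)}(W - \mP).
\end{equation*}
Since all clients share an initialization, $\mX^{(0,0)}(\mI - \mP) = 0$. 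Iterating this recursion produces a telescoping sum, and taking Frobenius norms together with the submultiplicativity $\|CD\|_F \le \|C\|_F \|D\|_{\mathrm{op}}$ and the spectral bound above yields
\begin{equation*}
\|\mX^{(t,0)}(\mI - \mP)\|_F \;\le\; \sum_{s=0}^{t-1}\eta_s \|\Phi^{(s)}\|_F \,\lambda^{t-s} \;=\; \sum_{j=1}^{t}\eta_{t-j}\|\Phi^{(t-j)}\|_F \,\lambda^{j}.
\end{equation*}

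\textbf{Bounding $\Phi^{(s)}$ and main obstacle.} What remains is a per-round bound $\mbb E\|\Phi^{(s)}\|_F \le K\sqrt{N}(\sigma + B)$. Decomposing each column of $\mG^{(s,k)}$ into the deterministic mean $\nabla f_i(\vw^{s,k}_i)$ plus a zero-mean stochastic perturbation and applying Assumptions~\ref{assumption:bdd-stoch-var} and \ref{assumption:bdd-grad} column-by-column yields $\mbb E\|\mG^{(s,k)}\|_F^2 \le N(\sigma^2 + B^2)$, hence $\mbb E\|\mG^{(s,k)}\|_F \le \sqrt{N}(\sigma + B)$ by Jensen's inequality. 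Applying the triangle inequality across the $K$ terms comprising $\Phi^{(s)}$ then furnishes the bound, which substituted into the telescoped inequality above completes the argument. The most delicate step I expect is the bookkeeping of the momentum weights $p_k(\beta)/(1-\beta)^2$: a naive triangle inequality inflates the constant by a factor of $(1-\beta)^{-1}$, so one must exploit the nonnegativity and telescoping structure of the $p_k(\beta)$'s to absorb these factors and recover the clean constant $K\sqrt{N}(\sigma + B)$ stated in the lemma.
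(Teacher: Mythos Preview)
Your proposal is correct and follows essentially the same route as the paper: unroll the consensus error through the mixing steps (the paper writes the unrolled sum $\sum_j \eta_{t-j}\,\mG^{(t-j,\cdot)}(W^j - \mP)$ directly, which coincides with your $(W-\mP)^j$ since $W^j - \mP = (W-\mP)^j$), bound $\|W^j - \mP\|$ by $\lambda^j$, and then bound each gradient matrix in expectation by $\sqrt{N}(\sigma+B)$ via Assumptions~\ref{assumption:bdd-stoch-var}, \ref{assumption:bdd-grad} and Jensen. Your flagged ``delicate step'' is in fact immediate: the paper already records (in the proof of Theorem~\ref{thm:main}) that $|p_k(\beta)| \le (1-\beta)^2$ for every $k$, so each coefficient $p_k(\beta)/(1-\beta)^2$ is bounded by $1$ and the naive triangle inequality over $k$ yields the clean factor $K$ with no $(1-\beta)^{-1}$ loss.
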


\begin{proof}
    Let the vector $\bar{\vw}^{t,k} = \sum_{i=1}^N \vw^{t,k}_i / N$ be the average parameter vector during intermediate, local updates. Then, let the matrix of ``true gradients'' of the global objective function $f$  be
    \[
        \nabla \mF(\bX^{(t,k)}) := [\nabla f(\bar{\vw}^{t,k}) \ \nabla f(\bar{\vw}^{t,k}) \ \ldots \nabla f(\bar{\vw}^{t,k})],
    \]
    obtained by horizontally concatenating the true gradient vector $\nabla f(\bar{\vw}^{t,k})$.
    Recalling that $p_k(\theta) \le (1-\theta)^2$ for $k=0, \ldots, K-1$, we have
    {\small \begin{align*}
        &\quad \| \mX^{(t,0)}(\mI - \mP)\|_F\\
        &=\left\|  \sum_{j=1}^{t} \frac{\eta_{t-j}}{(1-\theta)^2} \sum_{k=0}^{K-1} p_k(\theta) \mG^{(t,k)}(\mX^{(t,k)}; \Xi^{(t,k)}) \lp W^j - P\rp \right\|_F \\
        &\le \sum_{j=1}^{t} \eta_{t-j}\left\| W^j - P \right\|_F \sum_{k=0}^{K-1} \left\|  \mG^{(t-j,k)}(\mX^{(t-j,k)}; \Xi^{(t,k)}) - \nabla \mF(\mX^{(t-j,k)}) + \nabla \mF(\mX^{(t-j,k)}) \right\|_F  \\
        &\le \sum_{j=1}^{t} \eta_{t-j} \lambda^j \sum_{k=0}^{K-1} \lp \sum_{i=1}^N \|\nabla f_i(\vw^{t-j,k}_i; \xi^{t-j,k}_i) - \nabla f_i(\vw^{t-j,k}_i)\|^2 \rp^{\frac{1}{2}} 
         + \lp \sum_{i=1}^N \|\nabla f_i(\vw^{t-j,k}_i)\|^2 \rp^{\frac{1}{2}} \\
        &\le \sum_{j=1}^{t} \eta_{t-j} \lambda^j \sum_{k=0}^{K-1} \left[ \lp \sum_{i=1}^N \|\nabla f_i(\vw^{t-j,k}_i; \xi^{t-j,k}_i) - \nabla f_i(\vw^{t-j,k}_i)\|^2 \rp^{\frac{1}{2}} + B \sqrt{N} \right],
        \end{align*}
        \begin{align*}
        \implies &\qquad \mbb E \| \mX^{(t,0)}(\mI - \mP)\|_F\\
        &\le \sum_{j=1}^{t} \eta_{t-j} \lambda^j \sum_{k=0}^{K-1} \left[ \lp \sum_{i=1}^N \mbb E\|\nabla f_i(\vw^{t-j,k}_i; \xi^{t-j,k}_i) - \nabla f_i(\vw^{t-j,k}_i)\|^2 \rp^{\frac{1}{2}} + B \sqrt{N} \right] \\
        &\le \sum_{j=1}^{t} \eta_{t-j} \lambda^j \sum_{k=0}^{K-1} \left[ \sigma \sqrt{N} + B  \sqrt{N}\right] \\
        &= K \sqrt{N}(\sigma + B)\sum_{j=1}^{t} \eta_{t-j}\lambda^j,
    \end{align*}}
    where in the third line from the bottom we have used Jensen's inequality, since the square root function is concave.
\end{proof}

\begin{lemma} \label{lemma:hardt}
    Assume that the loss function $f(\cdot; \Xi)$ is nonnegative and $B$-Lipschitz for all $\Xi$. Let $\mcl D$, $\tilde{\mcl D}$ be two samples of size $Nn$ differing in only a single example. Let $\bar{\vw}^T, \bar{\vv}^T$ denote the output of DFedAvgM after $T$ steps with the dataset samples $\mcl D$ and $\tilde{\mcl D}$, respectively. Then, for every $\Xi$ and every $t_0 \in \{0,1,\ldots, n\}$, under the random selection rule, we have
    \[
        \mbb E | f(\bar{\vw}^T; \Xi) - f(\bar{\vv}^T; \Xi)| \le t_0 (\sup f ) \lp 1 - \lp \frac{n-1}{n}\rp^K \rp + B \mbb E (\delta_T | \delta_{t_0} = 0 ).
    \]
\end{lemma}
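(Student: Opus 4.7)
The plan is to mimic the classical Hardt--Recht--Singer stability decomposition, splitting the loss difference according to whether the two coupled runs of DFedAvgM on $\mcl D$ and $\tilde{\mcl D}$ have already diverged by step $t_0$ or not. Let $\mcl E := \{\delta_{t_0} = 0\}$ be the event that the two iterates agree at round $t_0$. By writing
\begin{equation*}
    \mbb E | f(\bar{\vw}^T; \Xi) - f(\bar{\vv}^T; \Xi)|
    = \mbb E\bigl[| f(\bar{\vw}^T; \Xi) - f(\bar{\vv}^T; \Xi)|\,\mathbbm{1}_{\mcl E}\bigr]
    + \mbb E\bigl[| f(\bar{\vw}^T; \Xi) - f(\bar{\vv}^T; \Xi)|\,\mathbbm{1}_{\mcl E^c}\bigr],
\end{equation*}
the two terms can be controlled independently and the result follows by combining them.

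For the first term, on the event $\mcl E$ the $B$-Lipschitz property of $f(\cdot;\Xi)$ gives $|f(\bar{\vw}^T;\Xi) - f(\bar{\vv}^T;\Xi)| \le B\,\delta_T$, so the first summand is at most $B\,\mbb E[\delta_T \mathbbm{1}_{\mcl E}] \le B\,\mbb E[\delta_T \mid \delta_{t_0}=0]$ as claimed. For the second term, nonnegativity of the loss yields $|f(\bar{\vw}^T;\Xi)-f(\bar{\vv}^T;\Xi)| \le \sup f$ pointwise, hence this summand is bounded by $(\sup f)\Pr(\mcl E^c)$.

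The remaining piece is to bound $\Pr(\mcl E^c)$. I would use the natural synchronous coupling: at every communication round $t$ and every local iteration $k$, the two runs draw the \emph{same} uniform index from $\{1,\dots,n\}$ at every node. Since $\mcl D$ and $\tilde{\mcl D}$ differ in exactly one sample, located at node $i^\ast$, the two trajectories remain identical up to round $t$ unless some iteration in rounds $1,\dots,t$ selects the differing index at node $i^\ast$. The probability that, within a single round of $K$ i.i.d.\ uniform draws at node $i^\ast$, the differing index is picked at least once equals $1 - \bigl(\tfrac{n-1}{n}\bigr)^K$. A union bound over the $t_0$ rounds gives $\Pr(\mcl E^c) \le t_0\bigl(1 - (\tfrac{n-1}{n})^K\bigr)$, and summing the two bounds yields the claim.

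The only subtlety, which I view as the main conceptual step rather than a technical obstacle, is to justify that under the synchronous coupling the graph aggregation step preserves the coupling: once all local stochastic gradients along a round coincide, the subsequent multiplication by the mixing matrix $\mM$ is deterministic and identical in both runs, so the divergence event is indeed determined solely by whether the distinguishing index is ever sampled at node $i^\ast$. The rest of the proof is bookkeeping, with the union bound being the only inequality one needs beyond Lipschitzness and nonnegativity of $f$.
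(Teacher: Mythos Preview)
Your proposal is correct and follows essentially the same approach as the paper's proof: the paper explicitly invokes the Hardt--Recht--Singer decomposition into $\mcl E$ and $\mcl E^c$, uses the $B$-Lipschitz and $\sup f$ bounds on the two pieces exactly as you do, and then bounds $\Pr(\mcl E^c)$ via a first-hitting-time random variable $I$ that amounts to the same union bound over the $t_0$ rounds with per-round probability $1-\bigl(\tfrac{n-1}{n}\bigr)^K$. The paper is in fact terser and omits the coupling justification you supply for why the mixing step preserves $\delta_t=0$.
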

\begin{proof}
    Our proof closely follows that of \cite{Hardt2016}, just with a small distinction. After obtaining the inequality
    \[
        \mbb E | f(\bar{\vw}^T; \Xi) - f(\bar{\vv}^T; \Xi)| \le B \mbb E (\delta_T | \delta_{t_0} = 0) + \mbb P\{\mcl E^c\} (\sup f),
    \]
    where the event $\mcl E$ denotes the event that $\delta_{t_0} = 0$, we similarly need to bound $P\{\mcl E^c\}$. Defining the random variable $I$ to assume the index of the first time step in which DFedAvg uses the example $\xi^\ast$, which occurs at node $i^\ast \in [N]$ and is located in the $j^\ast \in [n]$ entry of $\tilde{\mcl D}_{i^\ast}$. We have
    \[
        P\{\mcl E^c\} = \mbb P \{\delta{t_0} \not= 0\} \le \mbb P\{I \le t_0\} \le \sum_{t=1}^{t_0} \mbb P\{I = t\}.
    \]
    Now since the draws at each round $t$ of DFedAvgM are sampled uniformly at random across both the nodes and the local datasets (that is $\xi^{t,k}_i \sim Unif(D_i)$ with replacement across iterations $k$), then we have that 
    \[
        \mbb P\{I = t\} = 1 - \mbb P\{I \not= t\} = 1 - \lp\frac{n-1}{n}\rp^K,
    \]  
    from which we conclude the proof.
\end{proof}

\begin{lemma} \label{lemma:bd-stepsize}

If $\eta_t \le \frac{c}{t}$ for $t =1, 2, \ldots$, then
\[
    \sum_{j=1}^t \eta_{t-j}\lambda^j \le \frac{C_\lambda}{t} 
\]
where 
\[
    C_\lambda := \min\left\{ 2\lambda, \frac{1}{\ln \frac{1}{\lambda}}\lambda^{\frac{1}{\ln \frac{1}{\lambda}} } \right\} +  \min\left\{ 4\lambda \ln \frac{1}{\lambda}, \frac{4}{\ln \frac{1}{\lambda}} \lambda^{\frac{2}{\ln \frac{1}{\lambda}}}\right\} + \min\left\{ 2\lambda, \frac{1}{\ln \frac{1}{\lambda}} \lambda^{\frac{1}{\ln \frac{1}{\lambda}}}  \right\} + \frac{2}{\ln \frac{1}{\lambda}}.
\]
\end{lemma}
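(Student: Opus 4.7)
The plan is to split the sum at a carefully chosen threshold and bound each piece with two complementary tools in parallel, so that the final answer can be written as the minimum of the two. Concretely, I would first substitute the step‑size bound $\eta_{t-j} \le c/\max(t-j,\,1)$ (treating the boundary term $j=t$ by the convention $\eta_0\le c$) to reduce the statement to controlling $\sum_{j=1}^{t-1} \lambda^{j}/(t-j)$ plus the small boundary contribution $c\lambda^{t}$.

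Next I would partition the remaining sum at an intermediate index $\tau$ into a head $\sum_{j=1}^{\tau}$ and a tail $\sum_{j=\tau+1}^{t-1}$. The factors $\lambda^{1/\ln(1/\lambda)}=e^{-1}$ and $\lambda^{2/\ln(1/\lambda)}=e^{-2}$ appearing in $C_\lambda$ strongly suggest the natural choice $\tau \approx 1/\ln(1/\lambda)$ (with $\tau\le t/2$ assumed; otherwise the bound is trivial). For the head I would use the fact that $t-j \ge t-\tau \ge t/2$, so $\eta_{t-j}\le 2c/t$ can be pulled out, leaving $(2c/t)\sum_{j=1}^{\tau}\lambda^{j}$; this geometric sum I would bound simultaneously two ways — by $\lambda/(1-\lambda)\le 2\lambda$ (valid when convenient) and by the integral comparison $\sum_{j\ge 1}\lambda^{j}\le \int_{0}^{\infty}\lambda^{x}\,dx = 1/\ln(1/\lambda)$, with the factor $\lambda^{\tau}=e^{-1}$ arising when one truncates the integral at $\tau$. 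Taking the better of the two gives the first $\min\{2\lambda,\,\lambda^{1/\ln(1/\lambda)}/\ln(1/\lambda)\}$ term.

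For the tail $\sum_{j>\tau}^{t-1}\lambda^{j}/(t-j)$, I would pull out $\lambda^{\tau}$ and handle the remaining $\sum_{j>\tau}^{t-1}1/(t-j)$, which is essentially a harmonic sum of length $\le t-\tau-1$. The extra $\ln t$ factor that naively appears is killed using the uniform inequality $\sup_{s\ge 0}s\,\lambda^{\alpha s}\le 1/(\alpha e\ln(1/\lambda))$, which yields an $O(1/t)$ bound with the characteristic $1/\ln(1/\lambda)$ factor. A second pairing of the geometric‑sum and integral‑bound arguments, applied to a secondary tail piece, produces the middle $\min\{4\lambda\ln(1/\lambda),\,4\lambda^{2/\ln(1/\lambda)}/\ln(1/\lambda)\}$ contribution, while the leftover $2/\ln(1/\lambda)$ term absorbs the pure tail $\sum_{j\ge \tau}\lambda^{j}$ bounded by the full improper integral.

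The main obstacle is bookkeeping: obtaining exactly the four‑term $C_\lambda$ rather than a generic $O(1/t)$ bound requires matching the two bounds inside each $\min$ at precisely the same splitting point $\tau = 1/\ln(1/\lambda)$ (or $2/\ln(1/\lambda)$ for the middle term), and tracking where the harmless constants $\lambda$, $\lambda^{1/\ln(1/\lambda)}$, and $\lambda^{2/\ln(1/\lambda)}$ enter. Once the bounds are aligned, multiplying through by $1/t$ and collecting terms yields the claimed inequality.
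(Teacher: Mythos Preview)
Your plan has a genuine gap in the tail estimate. You propose to split at a \emph{$t$-independent} threshold $\tau\approx 1/\ln(1/\lambda)$ and then bound the tail by pulling out $\lambda^{\tau}$, leaving the harmonic sum $\sum_{j>\tau}^{t-1}1/(t-j)\approx \ln t$. But $\lambda^{\tau}=e^{-1}$ is a fixed constant, so this produces a term of order $\ln t$, not $O(1/t)$. The inequality $\sup_{s\ge 0}s\lambda^{\alpha s}\le 1/(\alpha e\ln(1/\lambda))$ that you invoke controls expressions of the form $t\lambda^{\alpha t}$; it does nothing for $\lambda^{\tau}\ln t$ because there is no surviving factor of $\lambda^{\text{(something growing with }t)}$ to absorb the logarithm. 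A fixed split cannot yield the claimed $C_\lambda/t$ bound for this reason.

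The paper takes a different route that avoids this problem: it first reindexes to $\lambda^{t}\sum_{j=1}^{t}\lambda^{-j}/j$, replaces the sum by the integral $\int_{1}^{t}\lambda^{-x}/x\,dx$, and splits the integral at the \emph{$t$-dependent} point $t/2$. On $[t/2,t]$ the factor $1/x\le 2/t$ gives the explicit $1/t$, while on $[1,t/2]$ a Taylor expansion leaves terms like $t\lambda^{t}$, $t^{2}\lambda^{t}$, and $t\lambda^{t/2}$ after multiplying back by $\lambda^{t}$. Each of these is then bounded uniformly over $t\ge 2$ in two ways---by its value at $t=2$ and by its global maximum (attained at $t=1/\ln(1/\lambda)$ or $t=2/\ln(1/\lambda)$)---and \emph{that} is where the $\min\{\cdot,\cdot\}$ structure and the specific constants $\lambda^{1/\ln(1/\lambda)}=e^{-1}$, $\lambda^{2/\ln(1/\lambda)}=e^{-2}$ in $C_\lambda$ come from. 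So the exponents you correctly identified do not arise from a fixed splitting index; they are the locations of the maxima of $t\mapsto t^{k}\lambda^{t}$. If you want to repair your argument, make the split $t$-dependent (e.g.\ at $t/2$) so that the tail carries a factor $\lambda^{t/2}$, and only then apply $\sup_{s}s\lambda^{\alpha s}$.
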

\begin{proof}
\begin{align}
    &\quad \sum_{j=1}^t \eta_{t-j}\lambda^j\\
    &= \sum_{j=1}^t \eta_j \lambda^{t-j} \le \lambda^t \sum_{j=1}^t \frac{\lambda^{-j}}{j} \nonumber \\
    &= \lambda^{t} + \lambda^t \sum_{j=2}^{t} \frac{\lambda^{-j}}{j} \nonumber\\
    &\le \lambda^{t} + \lambda^t \sum_{j=2}^{t} \int_{j-1}^j \frac{\lambda^{-x}}{x} dx \\
    &= \lambda^{t} + \lambda^t \int_{1}^{t} \frac{\lambda^{-x}}{x} dx = \lambda^{t-1} + \lambda^t \lp \int_{1}^{t/2} \frac{\lambda^{-x}}{x} dx + \int_{t/2}^t \frac{\exp(x \ln (\frac{1}{\lambda}))}{x} dx \rp \label{eq:lambda-int} , 
\end{align}
which with $\lambda < 1$ we have that $\ln (1/\lambda) > 0$ and so we can simplify the integrals as
\[
    \int_{1}^{t/2} \frac{\lambda^{-x}}{x} dx \le \frac{2}{t} \int_1^{t/2} \lambda^{-x} dx  = \frac{2}{t \ln \frac{1}{\lambda}}\lp \lambda^{-t/2} - \lambda^{-t}\rp \le  \frac{2\lambda^{-t/2} }{t \ln \frac{1}{\lambda}}
\]
and
\begin{align*}
    &\quad \int_{t/2}^t \frac{\exp(x \ln (\frac{1}{\lambda}))}{x} dx\\
    &= \int_{1}^{t/2} \frac{1}{x} \sum_{k=0}^{\infty}  \frac{(\ln \frac{1}{\lambda})^k x^k}{k!} dx \\
    &= \int_{1}^{t/2} \frac{1}{x} dx +  (\ln \frac{1}{\lambda})\int_{1}^{t/2} 1 dx +  \sum_{k=2}^{\infty}  \frac{(\ln \frac{1}{\lambda})^k}{k!} \int_1^{t/2} x^{k-1} dx \\
    &= 1 - \frac{4}{t^2} + \frac{1}{2}(t - 2)\ln \frac{1}{\lambda} + \sum_{k=2}^{\infty}  \frac{(\ln \frac{1}{\lambda})^k}{(k)k!} \lp \lp\frac{t}{2}\rp^{k} - 1\rp \\
    &\le 1 - \frac{4}{t^2} + \frac{1}{2}(t - 2)\ln \frac{1}{\lambda} + \frac{1}{2} \sum_{k=2}^{\infty}  \frac{(\ln \frac{1}{\lambda})^k}{k!} \lp \lp\frac{t}{2}\rp^{k} - 1\rp \\
    &= 1 - \frac{4}{t^2} + \frac{1}{2}(t - 2)\ln \frac{1}{\lambda} + \frac{1}{2} \lp \frac{1}{\lambda^{t/2}} - 1 + \frac{t}{2}\ln \frac{1}{\lambda} - \left[\frac{1}{\lambda} - 1 + \ln \frac{1}{\lambda}  \right]   \rp \\
    &= 1 - \frac{4}{t^2} + \frac{1}{2}(t - 2)\ln \frac{1}{\lambda} + \frac{1}{2} \lp \frac{1}{\lambda^{t/2}} - \frac{1}{\lambda} + \frac{1}{2}(t-2)\ln \frac{1}{\lambda}  \rp \\
    &= 1 - \frac{4}{t^2} + \frac{3}{4}(t - 2)\ln \frac{1}{\lambda} + \frac{1}{2\lambda^{t/2}} - \frac{1}{2\lambda}.
\end{align*}
Plugging this result into (\ref{eq:lambda-int}), we obtain 
\begin{align} \label{eq:lambda-pre-bound}
    \sum_{j=1}^t \eta_{t-j}\lambda^j &\le \lambda^{t-1} + \lambda^t \lp 1 - \frac{4}{t^2} + \frac{3}{4}(t - 2)\ln \frac{1}{\lambda} + \frac{1}{2\lambda^{t/2}} - \frac{1}{2\lambda} + \frac{2\lambda^{-t/2}}{t\ln\frac{1}{\lambda}} \rp \nonumber\\
    &= \frac{\lambda^{t-1}}{2} + \lambda^t\lp 1 - \frac{4}{t^2} + \frac{3}{4}(t - 2)\ln \frac{1}{\lambda}\rp + \lambda^{t/2}\lp \frac{1}{2} + \frac{2}{t\ln\frac{1}{\lambda}}\rp \nonumber \\
    &\le \frac{1}{t}\left[\lambda^{t} \lp t + \ln \frac{1}{\lambda} t^2 \rp + \lambda^{t/2} \lp t + \frac{2}{\ln \frac{1}{\lambda}} \rp \right],
\end{align}
where in the last line we have used that $t-1 \ge t/2$ for $t \ge 2$.

Seeking a uniform bound over $t = 2, 3, \ldots$, we bound each of the last two terms of the right hand side of the above equation. It is easy to check that
\begin{align*}
    t\lambda^{t} &\le \min\left\{ 2\lambda^2, \frac{1}{\ln \frac{1}{\lambda}}\lambda^{\frac{1}{\ln \frac{1}{\lambda}} } \right\},\\
    t^2 \lambda^{t} &\le \min\left\{ 4\lambda^2, \frac{4}{\lp \ln \frac{1}{\lambda}\rp^2} \lambda^{\frac{2}{\ln \frac{1}{\lambda}}}\right\}, \\
    t \lambda^{t/2} &\le \min\left\{ 2\lambda, \frac{1}{\ln \frac{1}{\lambda}} \lambda^{\frac{1}{\ln \frac{1}{\lambda}}}  \right\},
\end{align*}
where we have noted that each of these functions are decreasing functions of $t$. Therefore, our bound becomes:
\begin{align*}
    &\quad \sum_{j=1}^t \eta_{t-j}\lambda^j\\
    &\le \frac{1}{t}\left[ \min\left\{ 2\lambda^2, \frac{1}{\ln \frac{1}{\lambda}}\lambda^{\frac{1}{\ln \frac{1}{\lambda}} } \right\} +  \min\left\{ 4\lambda^2 \ln \frac{1}{\lambda}, \frac{4}{\ln \frac{1}{\lambda}} \lambda^{\frac{2}{\ln \frac{1}{\lambda}}}\right\} + \min\left\{ 2\lambda, \frac{1}{\ln \frac{1}{\lambda}} \lambda^{\frac{1}{\ln \frac{1}{\lambda}}}  \right\}\right]\\
    &\qquad \qquad + \frac{2}{t\ln \frac{1}{\lambda}}  \\
    &= \frac{1}{t}\left[ 2\lambda^2 + 4\lambda^2 \ln \frac{1}{\lambda} + \frac{2}{\ln \frac{1}{\lambda}} +  \min\left\{ 2\lambda, \frac{1}{\ln \frac{1}{\lambda}} \lambda^{\frac{1}{\ln \frac{1}{\lambda}}}  \right\} \right] \\
    &=: \frac{C_\lambda}{t}. 
\end{align*}

We note that all terms of $C_\lambda$ except for $2/\ln \frac{1}{\lambda}$ are {\it uniformly bounded} on $\lambda \in (0,1)$. It is true that $2/\ln \frac{1}{\lambda} \rightarrow \infty$ as $\lambda \rightarrow 1^-$, but for each $\lambda < 1$ this bound $C_\lambda$ is valid. 

\end{proof}

\section*{Acknowledgments}~
This material is based on research sponsored by the NSF grant DMS-1924935 and DMS-1952339, and the DOE grant DE-SC0021142.


\end{document}